\documentclass[showpacs,twocolumn,prx,superscriptaddress,floatfix]{revtex4-1}
\usepackage{hyperref}
\usepackage{qcircuit}
\usepackage[dvips]{graphicx}
\usepackage{amsmath,amssymb,amsthm,mathrsfs,amsfonts,dsfont}
\usepackage{multirow}
\usepackage{subfigure, epsfig}
\usepackage{braket}
\usepackage{bm}
\usepackage{enumerate}
\usepackage{color}
\usepackage[dvipsnames]{xcolor}

\usepackage{newtxtext,newtxmath}

\newtheorem{theorem}{Theorem}
\newtheorem*{conjecture*}{Conjecture}
\newtheorem{lemma}{Lemma}

\newtheorem{corollary}{Corollary}

\newtheorem{remark}{Remark}
\newcommand{\LR}{\mathrm{LR}}
\newcommand{\mc}[1]{\mathcal{#1}}
\newcommand{\tr}{\mathrm{Tr}}
\newcommand{\ketbra}[2]{\vert #1 \rangle \langle #2 \vert}
\newcommand{\yx}[1]{\textcolor{black}{#1}}

\newcommand{\jt}[1]{\textcolor{black}{#1}}
\newcommand{\mg}[1]{\textcolor{black}{#1}}

\begin{document}

\title{Universal and Operational Benchmarking of Quantum Memories}

\begin{abstract}
Quantum memory --- the capacity to store and faithfully recover unknown quantum states --- is essential for quantum-enhanced technology. There is thus a pressing need for operationally meaningful means to benchmark candidate memories across diverse physical platforms. Here we introduce a universal benchmark distinguished by its relevance across multiple key operational settings, exactly quantifying (1) the memory's robustness to noise, (2) the number of noiseless qubits needed for its synthesis, (3) its potential to speed up statistical sampling tasks, and (4) performance advantage in non-local games beyond classical limits. The measure is analytically computable for low-dimensional systems and can be efficiently bounded in experiment without tomography. We thus illustrate quantum memory as a meaningful resource, with our benchmark reflecting both its cost of creation and what it can accomplish. We demonstrate the benchmark on the five-qubit IBM Q hardware, and apply it to witness efficacy of error-suppression techniques and quantify non-Markovian noise. We thus present an experimentally accessible, practically meaningful, and universally relevant quantifier of a memory's capability to preserve quantum advantage.
\end{abstract}
\date{\today}

\author{Xiao Yuan}
\email{xiao.yuan.ph@gmail.com}
\affiliation{Department of Materials, University of Oxford, Parks Road, Oxford OX1 3PH, United Kingdom}

\author{Yunchao Liu}
\email{yunchaoliu@berkeley.edu}
\affiliation{Department of Electrical Engineering and Computer Sciences, University of California, Berkeley, California 94720, USA}
\affiliation{Center for Quantum Information, Institute for Interdisciplinary Information Sciences, Tsinghua University, Beijing 100084, China}

\author{Qi Zhao}
\affiliation{Center for Quantum Information, Institute for Interdisciplinary Information Sciences, Tsinghua University, Beijing 100084, China}

\author{Bartosz Regula}
\affiliation{School of Physical and Mathematical Sciences, Nanyang Technological University, 637371, Singapore}
\affiliation{Complexity Institute, Nanyang Technological University, 637335, Singapore}

\author{Jayne Thompson}
\affiliation{Centre for Quantum Technologies, National University of Singapore, 3 Science Drive 2, 117543, Singapore}

\author{Mile Gu}
\email{mgu@quantumcomplexity.org}
\affiliation{School of Physical and Mathematical Sciences, Nanyang Technological University, 637371, Singapore}
\affiliation{Complexity Institute, Nanyang Technological University, 637335, Singapore}
\affiliation{Centre for Quantum Technologies, National University of Singapore, 3 Science Drive 2, 117543, Singapore}

\maketitle

\noindent Memories are essential for information processing, from communication to sensing and computation. In the context of quantum technologies, such memories must also faithfully preserve the uniquely quantum properties that enable quantum advantages, including quantum correlations and coherent superpositions~\cite{duan2001long}. This has motivated extensive work in experimental realisations across numerous physical platforms~\cite{zhong2015optically,wang2017single}, and presents a pressing need to find operationally meaningful means to compare quantum memories across diverse physical and functional settings. In contrast, present approaches towards~{detecting and benchmarking the quantum properties of memories} are often ad-hoc, involving experimentally taxing process tomography, or only furnishing binary measures of performance based on tests of entanglement and coherence preservation~\cite{chuang1997prescription,PhysRevLett.78.390,PhysRevLett.86.4195,PhysRevA.78.032333,PhysRevA.81.060306,PhysRevA.88.042335,memoryResource18prx,PhysRevA.99.062319}.

{Our work addresses these issues by envisioning memory as a physical resource. We provide a means to quantify this resource by asking}: how much noise can a quantum memory sustain before it is unable to preserve uniquely quantum aspects of information? Defining this as the robustness of a quantum memory (RQM), we demonstrate that the {quantifier} has diverse operational relevance {in benchmarking the quantum advantages enabled by a memory} --- from speed-up in statistical sampling to nonlocal quantum games (see  Fig.~\ref{Fig:summary}). We prove that RQM behaves like a physical resource {measure,} representing the number of copies of a pure idealised qubit memory that are required to synthesise the target memory. We show the measure to be exactly computable for many relevant cases, and introduce efficient general bounds through experimental and numerical methods. The quantifier is, in particular, experimentally accessible without full tomography, enabling immediate applications in benchmarking different memory platforms and error sources, as well as providing a witness for non-Markovianity. 
We experimentally test our benchmark on the five-qubit IBM Q hardware for different types of error, demonstrating its versatility.
{In addition, the generality of our methods within the broad physical framework of quantum resource theories~\cite{horodecki2013quantumness,coecke_2016,chitambar2018quantum} ensures that many of our operational interpretations of the RQM can also extend to the study of more general quantum processes~\cite{2018arXiv181110307K,gour2019comparison,takagi2019general,liu2019operational,liu2019resource,takagi_2020}, including general resource theories of quantum channels, gate-based quantum circuits, and dynamics of many-body physics.}
Our work thus presents an operationally meaningful, accessible, and practical performance-based measure for benchmarking quantum processors that is immediately relevant in today's laboratories.

\begin{figure*}
\centering
  \includegraphics[width=0.8\linewidth]{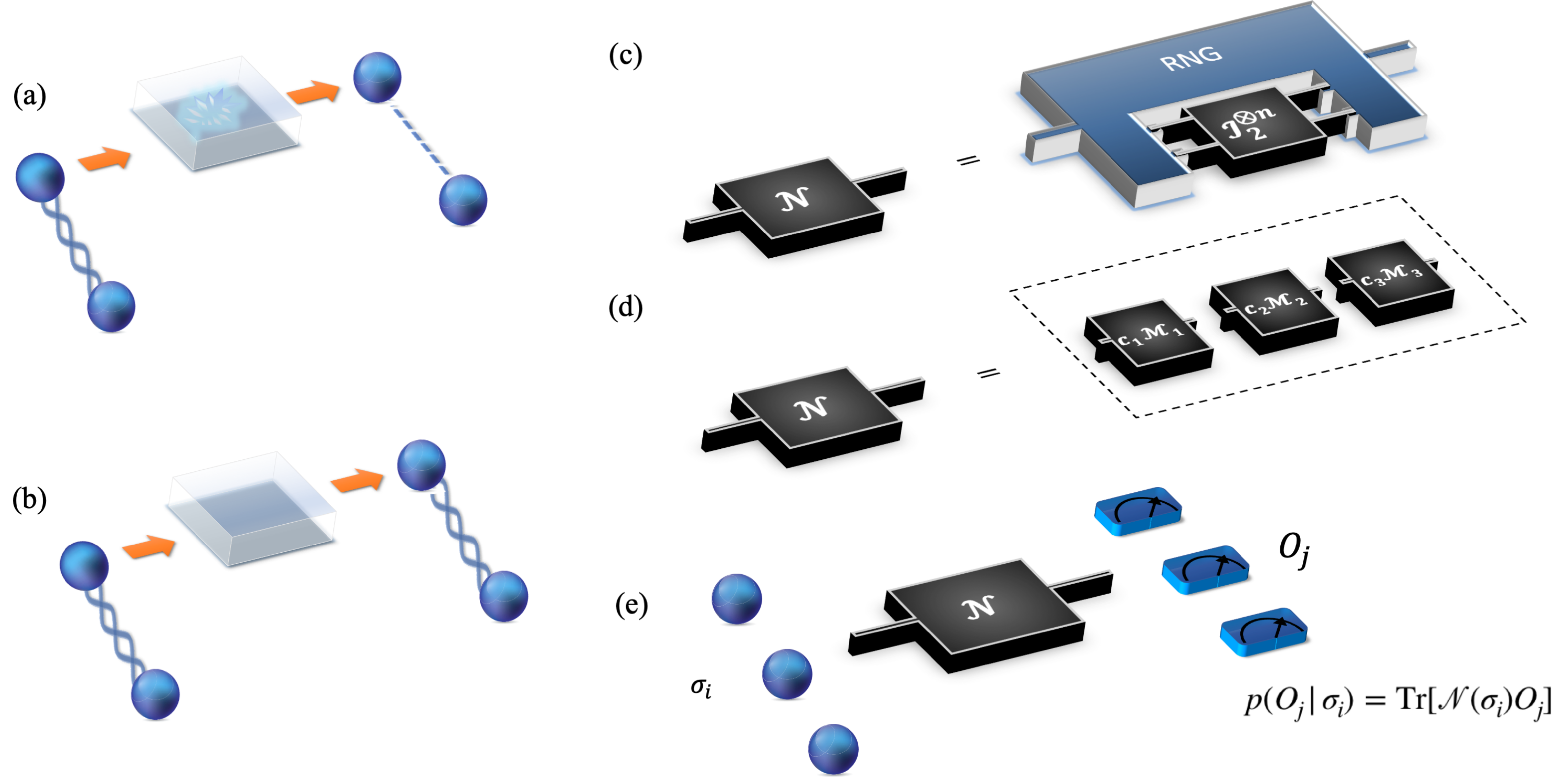}
  \caption{This work focuses on the resource theory of quantum memories. We define (a) entanglement-breaking memories as free resources and propose the RQM as a resource measure of (b) a quantum memory. We consider three operational interpretations of the measure in (c) one-shot memory synthesis with resource non-generating (RNG) transformations; (d) classical simulation of the measurement statistics of  quantum memory; and (e) a family of two-player nonlocal quantum games generalising state discrimination. }\label{Fig:summary}
\end{figure*}

\vspace{.4cm}
\noindent\textbf{{Framework.}}
Any quantum memory can be viewed as a channel in time --- mapping an input state we wish to encode into a state we will eventually retrieve in the future. An ideal memory preserves all information, such that proper post-processing operations on the output state can always undo the effects of the channel. Such channels preserve all state overlaps, in the sense that any pair of distinguishable input states remain distinguishable at output. In contrast, this is not possible with classical memories that store only classical data. To distinguish orthogonal states in some basis $\ket{k}$, we are forced to measure in this basis and record only the classical measurement outcome $k$. Such a measure-and-prepare process will never distinguish $\ket{0}+\ket{1}$ from  $\ket{0}-\ket{1}$. In fact, this procedure exactly encompasses the class of all entanglement-breaking (EB) channels~\cite{horodecki2003entanglement,holevo2008entanglement}: if we store one part of an entangled bipartite state within classical memory, the output is always separable. As such, classical memories are mathematically synonymous with EB channels.

To {systematically} characterise how well a general memory preserves quantum information, 
we consider how robust it is against noise. We define \emph{robustness of quantum memories} (RQM) as the minimal amount of a classical memory that needs to be mixed with the target memory $\mc N$ such that the resultant probabilistic mixture is also classical:
\begin{equation}
\begin{aligned}
	\mc R(\mc N) &= \min_{\mc M\in \mathrm{EB}}\left\{s\ge0\bigg| \frac{\mc N+s\mc M}{1+s}\in \mathrm{EB}\right\}, \\
\end{aligned}
\end{equation}
where the minimisation is over the set of all entanglement-breaking channels $\mathrm{EB}$. We explicitly prove that the robustness measure is a bona fide resource measure of quantum memories, satisfying all necessary operational properties. 
{Crucially, we show that the robustness satisfies \emph{monotonicity}~---~a memory's RQM can never increase under any resource non-generating (RNG) transformation, that is, any physical transformation of quantum channels that maps EB channels only to EB channels. We thus refer to such transformations as \textit{free} within the resource theory of quantum memories.
} 
Commonly encountered free transformations include pre- or post-processing with an arbitrary channel or, more generally, the class of so-called classically correlated transformations~\cite{memoryResource18prx}.

\vspace{0.2cm}
\noindent \textbf{{Operational interpretations.}} We illustrate the operational relevance of RQM in three distinct settings. The first is \emph{memory synthesis}. From the perspective of physical resources, one important task is to synthesise a target resource {by expending a number of ideal resources, which can be thought of as the ``currency'' in this process. Intuitively, a more resourceful object would be harder to synthesise and hence require more ideal resources, allowing us to understand the required number of ideal memories as the resource cost of a given memory.} In entanglement theory, an analogous concept involves determining the minimum number of Bell pairs that are required to engineer a particular entangled state using free operations (entanglement cost)~\cite{Buscemi2011entanglement,Brandao2011oneshot}. \yx{For quantum memories, we consider an ideal qubit memory $\mc{I}_2$ as the identity channel that perfectly preserves any qubit state. The task of \emph{single-shot memory synthesis} is then to convert $n$ copies of ideal qubit memories $\mc{I}_2^{\otimes n}$ to the target memory $\mc{N}$ via a free transformation. We show that the robustness measure lower bounds the number $n$ of the requisite ideal memories, i.e., $n\ge \lceil\log_2( \lceil\mc R(\mc N)\rceil+1)\rceil$.} Therefore, a larger robustness indicates that the memory requires more ideal resources to synthesise.  Furthermore, we show that there always exists an optimal RNG transformation that saturates this lower bound, and thus the robustness tightly captures the optimal resource cost for this task. We summarise our first result as follows.

\vspace{0.2cm}
\noindent\textbf{Theorem 1.}
\yx{\emph{The minimal number of ideal qubit memories required to synthesise a memory $\mc N$ is $n=\lceil\log_2( \lceil\mc R(\mc N)\rceil+1)\rceil$.}}
\vspace{0.2cm}

\noindent In Methods, we further consider imperfect memory synthesis by allowing an error $\varepsilon$ and show that the optimal resource cost is characterised by a smoothed robustness measure with smoothing parameter $\varepsilon$. Theorem 1 thus corresponds to the special case of $\varepsilon=0$.

\vspace{0.2cm}
In the second task, we consider \emph{classical simulation of quantum memories}. The motivation here is analogous to computational speed-up --- the observational statistics of any quantum algorithm can be simulated on a classical computer, albeit at an exponential overhead. Similarly, one strategy for simulating quantum memories is to perform full tomography of the input state and store the resulting classical density matrix. Then, at the output of the memory, the input state $\rho$ is reconstructed and any observational statistics on $\rho$ can be directly obtained. This method clearly requires an exponential amount of input samples for an $n$-qubit memory --- and thus results in an exponential overhead in resources and speed. 

Formally, the functional behaviour of any memory is fully described by how its observational statistics vary as a function of input, i.e., the set of expectation values $\mathrm{Tr}(O \mc N (\rho))$, for each possible observable $O$ and input state $\rho$. In order to simulate a quantum channel using only classical memories, we aim to estimate this quantity to some fixed additive error with at most some fixed admissible failure probability by taking samples --- inputting $\rho$ in a classical memory, measuring $O$, and repeating to get expectation value estimates. Intuitively, the more non-classical a memory is, the more classical samples will be required to simulate its statistics effectively. We thus define the simulation overhead $C$ as the increase in the number of samples required when using only classical memories, versus having access to $\mc N$ itself. We then prove that the optimal overhead is given exactly by the RQM of the quantum memory.

\vspace{0.2cm}
\noindent\textbf{Theorem 2.}
\emph{The minimal overhead --- in terms of extra runs or input samples needed --- to simulate the observation statistics of a quantum memory $\mc N$ is given by $
	\mc C_{\min}  = (1+2\mc R(\mc N))^2$.}
\vspace{0.2cm}

\noindent \mg{For entanglement-breaking channels $\mc M$, the robustness $\mc R(\mc M)$ vanishes and hence $\mc C_{\min}(\mc M)=1$, aligning with the intuition that classical memories require no extra simulation cost. For $n$ ideal qubit memories, $\mc R ({\mc I}_2^{\otimes n}) =2^n-1$ and hence the classical simulation overhead scales exponentially with $n$}.

\vspace{0.2cm}

In the third setting, we consider the capability of quantum memories to provide advantages in a class of two-player \emph{nonlocal quantum games}. Related games of this type have previously been employed in understanding features of Bell nonlocality~\cite{buscemi_2012-1} and detecting quantum memories~\cite{memoryResource18prx}. Consider then a set of states $\{\sigma_i\}$, from which one party (Alice) selects one state uniformly at random and encodes it in a memory $\mc N$. Her counterpart Bob is given this memory and tasked with guessing which of the states $\{\sigma_i\}$ was encoded by performing a measurement $\{O_j\}$. The probability that Bob guesses $\sigma_j$ when the input state is $\sigma_i$ is given by $\tr[\mc N(\sigma_i)O_j]$. Thus, by associating with each such guess a coefficient $\alpha_{ij} \in \mathbb{R}$ we can define the payoff of the game --- this can be used to give different weights to corresponding states, or to penalise certain guesses. The performance of the two players in the game defined by $\mc G = \{\{\alpha_{ij}\}, \{\sigma_i\}, \{O_j\}\}$ is then evaluated using the average payoff function,
\begin{equation}
    \mc P(\mc N,\mc G)= \sum_{i,j} \alpha_{ij} \tr[\mc N(\sigma_i)O_j].
\end{equation}
Such games can be considered as a generalisation of the task of quantum state discrimination, as can be seen by taking $\alpha_{ij} = \delta_{ij} p_i$ for some probability distribution $\{p_i\}$.
We see that the players' maximum achievable performance is limited by Bob's capacity to discern Alice's inputs, and thus each such game serves as a gauge for the memory quality of $\mc N$. In order to establish a quantitative benchmark for the resourcefulness of a given memory, we can then compute the best advantage it can provide in the same game $\mc G$ over all classical memories. To make such a problem well-defined, we will constrain ourselves to games for which the payoff $\mc P(\mc M, \mc G)$ is non-negative. In the Methods we then show that the maximal capabilities of a quantum memory in this setting are exactly measured by the robustness.

\vspace{0.2cm}
\noindent\textbf{Theorem 3.}
\emph{The advantage that a quantum memory $\mc N$ can provide over classical memories in all nonlocal quantum games is given by
\begin{equation}
\max_{\mc G} \frac{P(\mc N, \mc G)}{\max_{\mc M \in \mathrm{EB}}\mc P (\mc M, \mc G)} = \mc R(\mc N) + 1.
\end{equation}}

\noindent 
We will shortly see that such games, in addition to showcasing another operational aspect of the robustness, allow us to efficiently bound $\mc R(\mc N)$ in many relevant cases.

\begin{figure}[t]
\centering
  \includegraphics[width=0.9\linewidth]{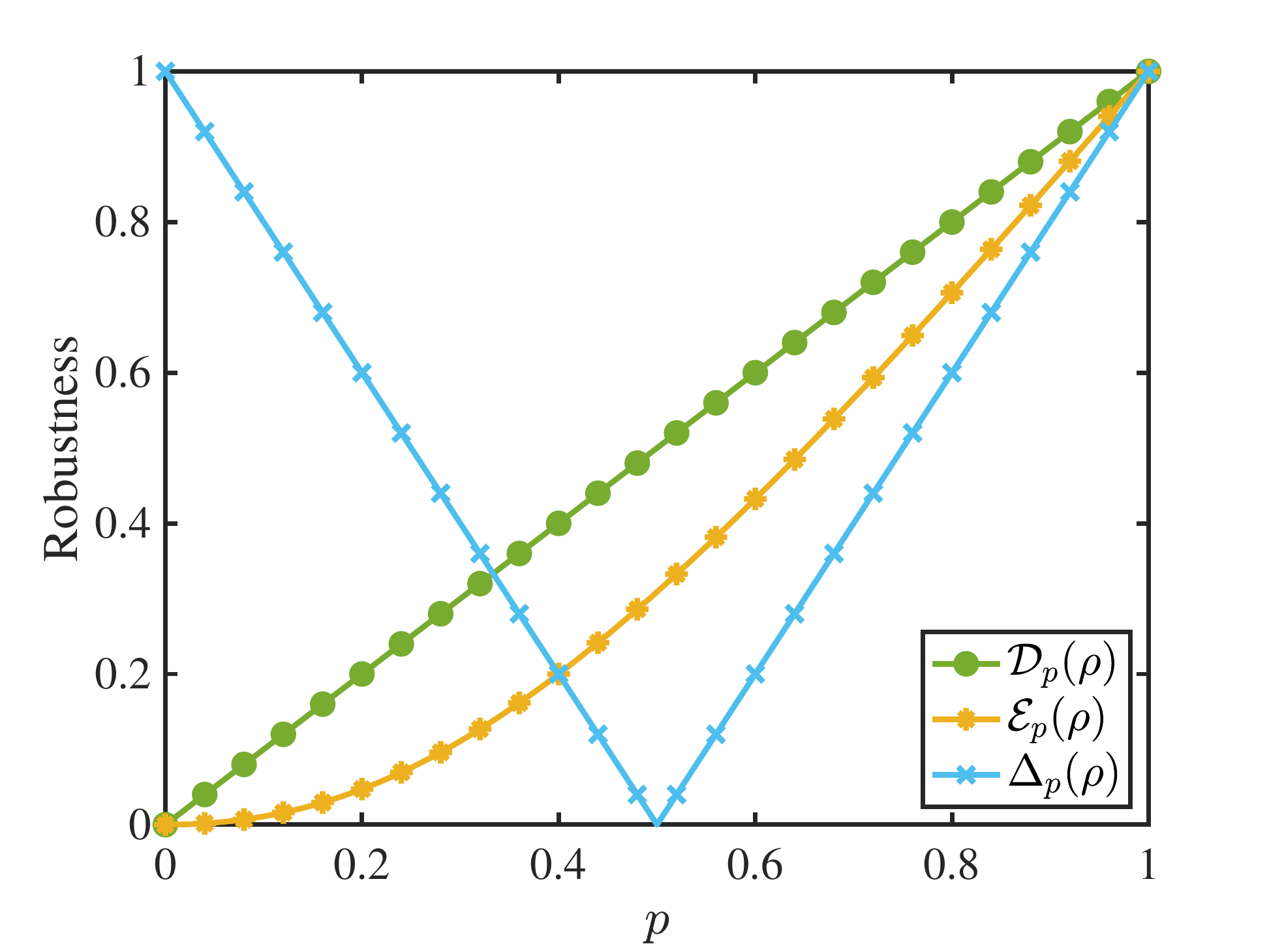}\\
  (a)\\
\includegraphics[width=0.9\linewidth]{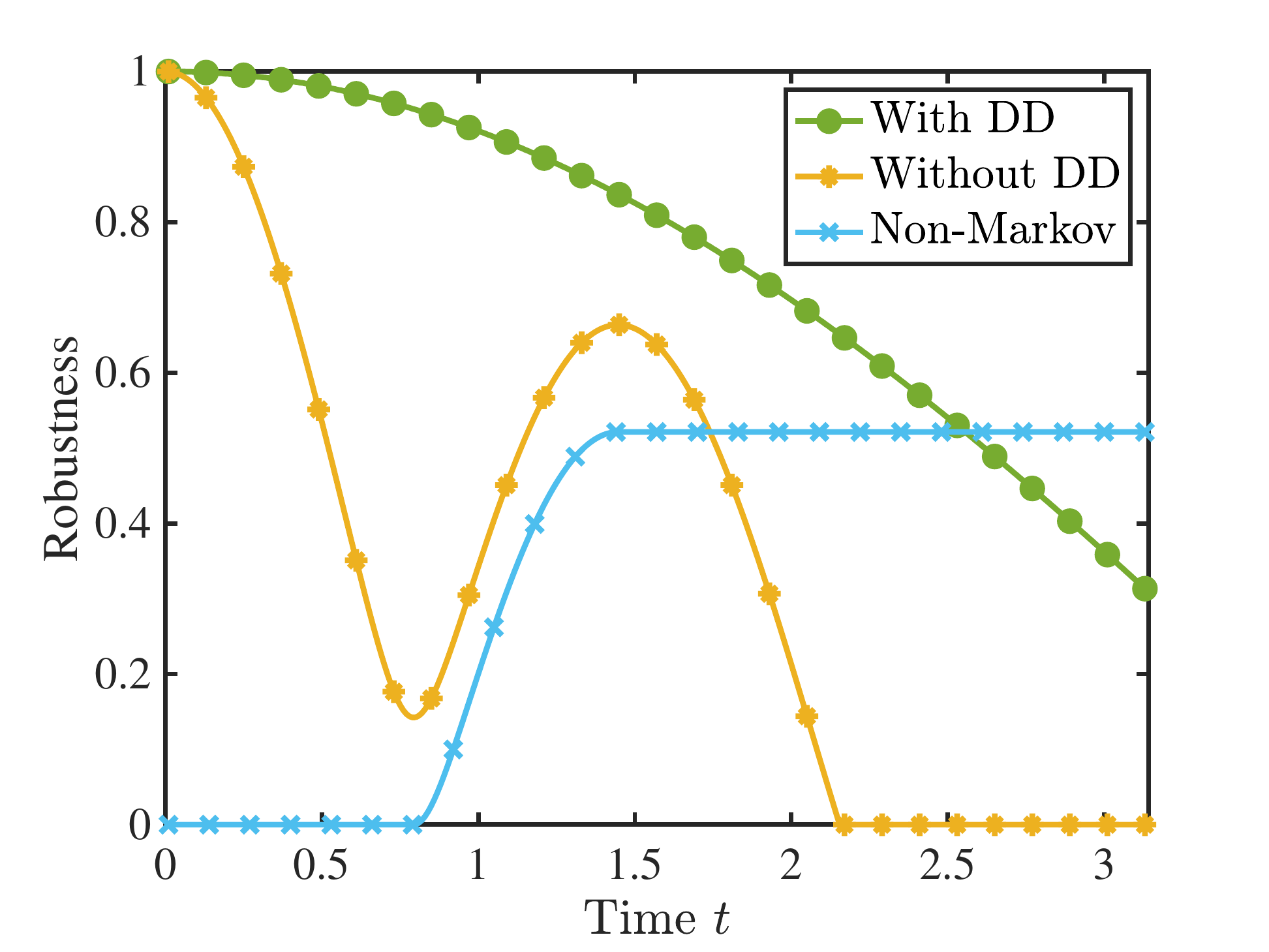}\\
    (b)
    \caption{
  Numerical evaluation of the exact value of robustness of quantum memories.
  (a) Robustness of memories with qubit inputs and computational basis $\{\ket{0},\ket{1}\}$ for dephasing channels $\Delta_p(\rho)=p\rho+(1-p)Z\rho Z$, stochastic damping channels $\mathcal D_p(\rho) = p\rho+(1-p)|{0}\rangle\langle{0}|$, and erasure channels $\mathcal E_p(\rho) = p\rho+(1-p)|{2}\rangle\langle{2}|$ with $\ket{2}$ orthogonal to $\{\ket{0},\ket{1}\}$. 
  {(b) Memory robustness under dynamical decoupling (DD) and its quantification of non-Markovianity. We consider a qubit memory (M) coupled to a qubit bath (B) with an initial state $\rho_B(0)=0.4\ketbra{0}{0}+0.6\ketbra{1}{1}$ and an interaction Hamiltonian $H= 0.2(X_M\otimes X_B+Y_M\otimes Y_B)+Z_M\otimes Z_B$. Here $X, Y, Z$ are the Pauli matrices. We consider the evolution with time $t$ from $0$ to $\pi$. To decouple the interaction, we apply $X$ operations on the memory at a constant rate. We show that the memory robustness can be  enhanced via dynamical decoupling (DD). Furthermore, as the memory robustness can increase with time, we calculate the non-Markovianity using the robustness derived measure as defined in Eq.~\eqref{Eq:nonMark}.}
  }\label{Fig:simulation}
\end{figure}

\vspace{0.2cm}

\noindent \textbf{{Computability and measurability.}}
We can efficiently detect and bound the robustness of a memory through the performance of the memory in game scenarios. Specifically, consider games $\mc G$ such that all classical memories achieve a pay-off in the range $[0,1]$. By Thm.~3 we know that any such game $\mc G$ provides a lower bound $\mc P(\mc N, \mc G)-1$ on $\mc R(\mc N)$, akin to an entanglement witness quantitatively bounding measures of entanglement~\cite{eisert_2007}. This provides a physically accessible way of bounding the robustness measure by performing  measurements on a chosen ensemble of states, and in particular there always exists a choice of a quantum game $\mc G$ such that $\mc P(\mc N, \mc G)-1$ is exactly equal to $\mc R(\mc N)$. This approach makes the measure accessible also in experimental settings, avoiding costly full process tomography. We use this method to explicitly compute the robustness of some typical quantum memories in Fig.~\ref{Fig:simulation}(a), with detailed construction of the quantum games deferred to the Supplemental Materials.

\mg{In addition to the above linear witness method}, we also give non-linear witnesses of a memory $\mc N$ based on the moments of its Choi state. Consider channels $\mc N$ with input dimension $d$ and $k=0,1,\dots,\infty$, \mg{in Supplementary Materials}, we prove
$
    \mc R(\mc N)\ge d^{\frac{k-1}{k}}\left(\tr\left[\left(\Phi_{\mc N}\right)^k\right]\right)^\frac{1}{k}-1
$,
where $\Phi_{\mc N}$ is the Choi state of $\mc N$. Higher values of $k$ provide tighter lower bounds, which can be measured in experiment by implementing a generalised swap test on {$k$ copies of the channel. In the limit $k\rightarrow\infty$} we obtain the strongest bound, which depends only on the maximal eigenvalue of the Choi state. Remarkably, the bound is actually tight for all qubit-to-qubit and qutrit-to-qubit channels.

\vspace{0.2cm}

\noindent\textbf{Theorem 4.}
\emph{The RQM of any quantum memory $\mc N$ with input dimension $d_A$ and output dimension $d_B$ can be lower bounded by 
\begin{equation}\label{eq:rob_bound_eig}
    \mc R(\mc N)\ge d_A \max\operatorname{eig}(\Phi_{\mc N}) - 1,
\end{equation}
and equality holds when $d_A \leq 3$ and $d_B = 2$.}

\vspace{0.2cm}

\noindent We stress that this provides an exact and easily computable expression for the robustness for low-dimensional channels. This contrasts with related measures of entanglement of quantum states such as the robustness of entanglement~\cite{vidal1999robustness}, for which no general expression exists even in $2\times 2$-dimensional systems.

Given a full description of the memory, we can also provide efficiently computable numerical bounds on the robustness via a semi-definite program, which we show to be tight in many relevant cases. We leave the detailed discussion to Supplementary Materials.

\vspace{0.2cm}
\noindent \textbf{{Applications.}} \mg{The robustness of quantum memories, being information theoretical in nature, applies across all physical and operational settings. This enables its immediate applicability to many present studies of quantum memory. For example, non-Markovianity \jt{and mitigation of errors resulting from non-Markovianity are widely studied problems} in the context of quantum memories. RQM can be used both to identify the former, and measure the efficacy of the latter.}

In particular, considering a memory $N_t$ that stores states from time $0$ to $t\ge 0$, we can quantify its non-Markovianity as 
\begin{equation}\label{Eq:nonMark}
\mc I(T)=\int_{0}^{T} dt\max\left\{0,\frac{d\mc R(\mc N_t)}{dt}\right\}.
\end{equation}  
For any Markovian process $\mc N_t$, the robustness measure $\mc R(\mc N_t)$ is a decreasing function of time owing to monotonicity of $\mc R$ (see Methods). Thus $\mc I(T)=0$ for any Markovian process $\mc N_t$, and  nonzero values of $\mc I(T)$ directly quantify the memory's non-Markovianity in a similar way to Ref.~\cite{PhysRevLett.105.050403}.
\mg{Meanwhile, the goal of any error-mitigation procedure is to preserve encoded qubits. Thus, the characterisation of an increase in the RQM of relevant encoded sub-spaces provides a universal measure of the efficacy for any such behaviour.}

\mg{In Fig.~\ref{Fig:simulation}(b), we illustrate these ideas using a single-qubit memory subject to unwanted coupling from a qubit bath. The robustness of quantum memories degrades over time (yellow-starred line) - but has a revival around $t = 1$, indicating non-Markovianity. Indeed, plotting $\mc I(t)$, \jt{we see clear signatures of non-Markovian effects} arise at this moment (cyan-crossed line). Meanwhile, the green-dotted line quantifies how dynamical decoupling  improves this memory through increased RQM. This improvement has a direct operational interpretation. {For example, the approximate 4-fold increase in robustness around $t = 0.8$ indicates that a quantum protocol that runs on a dynamically decoupled quantum memory could be much harder to simulate than its counterpart.}}

\begin{figure*}[t]
\centering
\includegraphics[width=1\linewidth]{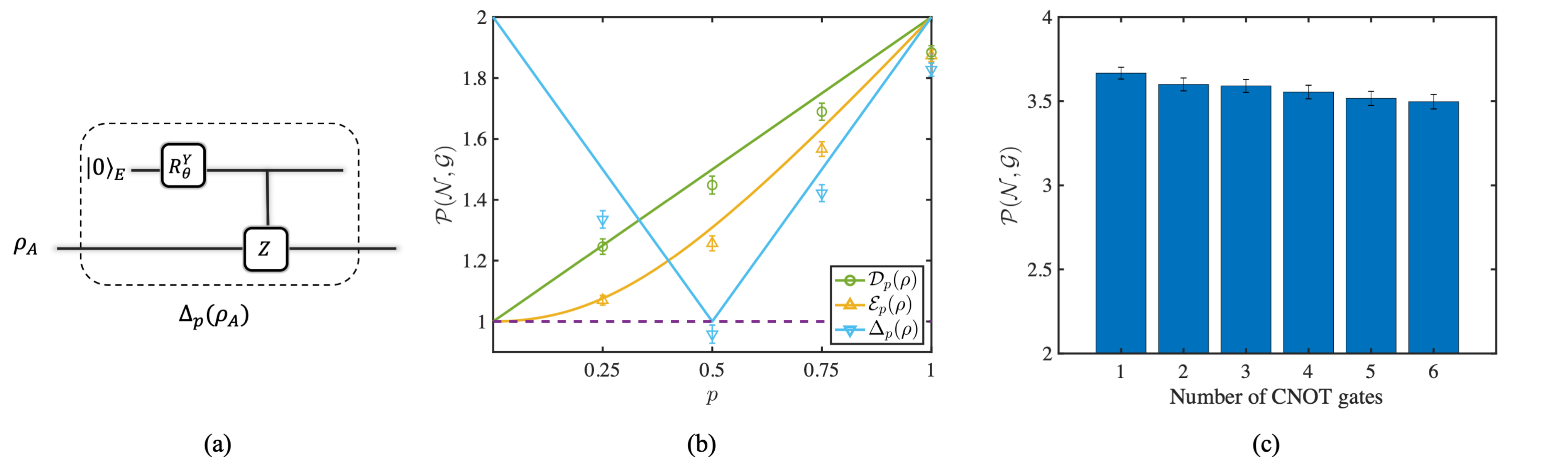}
    \caption{
    Experimental verification of the benchmark with IBM Q hardware.  (a) Circuit diagram for realising the dephasing channel.
  (b) The RQM of dephasing channels $\Delta_p(\rho)=p\rho+(1-p)Z\rho Z$, stochastic damping channels $\mathcal D_p(\rho) = p\rho+(1-p)|{0}\rangle\langle{0}|$, and erasure channels $\mathcal E_p(\rho) = p\rho+(1-p)|{2}\rangle\langle{2}|$ with $\ket{2}$ orthogonal to the basis $\{\ket{0},\ket{1}\}$. We synthesise the noise channels by interacting the target system with up to two ancillary qubits. We measure the payoff of quantum games $\mc P(\mc N,\mc G)$ which lower bounds the RQM as $\mc R(\mc N)\ge \mc P(\mc N,\mc G)-1$.   (c) Benchmarking IBM Q hardware via the RQM of sequential controlled-X (CX) gates. We interchange the control and target qubit so that two sequential CX gates will not cancel out. For example, denote $CX^0_1$ to be the CX gate with control qubit 0 and target qubit 1; the three CX gates is the swap gate $CX^0_1CX^1_0CX^0_1\equiv SWAP$ and the six controlled-X gates becomes the identity gate $CX^1_0CX^0_1CX^1_0CX^0_1CX^1_0CX^0_1\equiv I_4$.  The error bar is three times the standard deviation for both plots. 
  }\label{Fig:experiment}
\end{figure*}

\vspace{0.2cm}
\noindent \textbf{{Experiment.}} We experimentally verify our benchmarking method on the `ibmq-ourense' processor on the IBM Q cloud. 
We first consider a proof-of-principle verification of the scheme by estimating the RQM of three types of single-qubit noise channels~---~the dephasing channel, stochastic damping channel, and erasure channels. We synthesise the noise channels by entangling the target state with ancillary qubits. For example, the dephasing channel $\Delta_p(\rho_A)=p\rho_A+(1-p)Z\rho_A Z$ can be realised by the circuit in the dashed box of Fig.~\ref{Fig:experiment}(a), 
where we input an ancillary state $\ket{0}_E$, rotate it with $R^Y_{\theta}=\exp(-i\theta Y/2)$, and apply a controlled-Z. Here $\theta=2\arccos(\sqrt{p})$ and $Y$ is Pauli-Y matrix. We exploit the quantum game  approach to estimate the RQM of the three types of noise channels. We choose a normalised quantum game $\mc G$ with the maximal payoff for EB channels of $\max_{\mc M \in \mathrm{EB}} \mc P(\mc M ,\mc G)=1$, so that the robustness of memory $\mc N$ can be lower bounded by  $\mc R(\mc N)\ge \mc P(\mc N,\mc G)-1$. For each input-output setting $(\sigma_i,O_j)$, we measure the probability $p(j|i)=\tr[O_j\mc N(\sigma_i)]$ with 8192  experimental runs. The payoff is obtained as a linear combination of the probabilities $\mc P(\mc N,\mc G)=\sum_{i,j}\alpha_{i,j}p(j|i)$ with real coefficients $\alpha_{i,j}$.
As shown  in Fig.~\ref{Fig:experiment}(b), the experimental data (circles, upper and lower triangles) aligns well with the theoretical result (solid lines), with a deviation of less than $0.13$. The deviation mostly results from the inherent noise in the hardware, especially the notable two-qubit gate error and the readout error.

Next, we show that the RQM can be applied to benchmark quantum gates and quantum circuits.
Conventional quantum process benchmarking approaches, such as randomised benchmarking~\cite{emerson2005scalable,PhysRevA.77.012307}, generally focus on characterising the similarity between the noisy circuit and the target circuit. In contrast, our method is concerned with the capability of the noisy quantum processor in preserving quantum information, which can be thus regarded as an alternative operational approach for  benchmarking processes.
In the experiment, we focus on the two-qubit controlled-X (CX)  gate, a standard gate used for entangling qubits. We sequentially apply $n$ (up to six) CX gates with interchanged control and target qubits for two adjacent gates. For example, one, three, and six CX gates correspond to the CX gate, the swap gate, and the identity gate, respectively.

Assuming that the dominant error is due to depolarising or dephasing effects, we estimate the RQM of each circuit via the correspondingly designed quantum game. 
As shown in Fig.~\ref{Fig:experiment}(c), we can see that although the robustness with one CX is $2.667\pm 0.106$, it only slowly decreases to $2.497 \pm 0.115$ for six CX gates. {Our results thus  indicate that while the CX gate is imperfect (with an average 0.0340 decrease of robustness for each CX gate), the dominant noise of the two qubit circuit may instead stem from imperfect state preparation and measurement (roughly leading to a 0.3 decrease in robustness). We also note that the large robustness loss of a single CX gate might also be due to the existence of other errors, which would imply that the choice of the quantum game could be further optimised.  However, whenever the quantum game gives a large lower bound for the robustness, this is sufficient to ensure that the quantum process performs well in preserving quantum information. To demonstrate this, we consider the circuit ${CX}^0_2\cdot {CX}^0_1$ for preparing the three-qubit GHZ state. We lower bound the robustness as $5.837\pm 0.548$, verifying that the three-qubit noisy circuit can preserve more quantum information than all two-qubit circuits, whose robustness is upper bounded by 3.  }We leave the detailed experimental results and analysis to the Supplementary Materials.

\vspace{0.4cm}
\noindent \textbf{{Discussion.}}
\noindent {In this work, we introduced an operationally meaningful, practically measurable, and platform-independent benchmarking method for quantum memories.   
We defined the robustness of quantum memories and showed it to be an operational measure of the quality of a memory in three different practical settings.} \mg{The greater the robustness of a memory, the more ideal qubit memories are needed to synthesise the memory; the more classical resources are required to simulate its observational statistics; and the better the memory is at two-player nonlocal quantum games based on state discrimination. The measure can be evaluated exactly in low-dimensional systems, and efficiently approximated both numerically by semi-definite programming and experimentally through measuring suitable observables.} This thus constitutes a promising means to quantify the quantum mechanical aspects of information storage, and provides practical tools for benchmarking quantum memories across different experimental platforms and operational settings. {The theory is applicable across different physical platforms exhibiting any known type of error source, as we experimentally confirm on the five-qubit IBM Q hardware. With the development of near-term noisy intermediate-scale quantum technologies~\cite{preskill2018quantum,arute2019quantum}, we anticipate that our quantifier can become an industry standard for benchmarking quantum devices. }

{From a theoretical perspective, our work also constitutes a significant development in the resource theory of quantum memories.} The only previously known general measure of this resource involved a performance optimisation over a large class of possible quantum games~\cite{memoryResource18prx}, thus making it difficult to evaluate, experimentally inaccessible, and obscuring a direct quantitative connection to tasks of practical relevance --- the robustness explicitly addresses all of these issues.
Furthermore, the generality of the resource-theoretic framework ensures that the tools developed here for quantum memories can be naturally extended to other settings, including purity, coherence, entanglement of channels~\cite{BHLS03,Kaur_2017,liu2019operational,theurer2018quantifying,2019arXiv190702552G,2019arXiv190704181B, yuan2019hypothesis,gour2018entropy}, and the magic of operations~\cite{seddon2019quantifying, wang2019quantifying}, etc.
Another direction is to consider infinite-dimensional quantum systems, such as the optical modes of light. 
Finally, memories are essentially a question of reversibility, and thus have a natural connection to heat dissipation in thermodynamics~\cite{goold2016role,berut2012experimental}. Indeed, recent results show connections between free energy and information encoding~\cite{narasimhachar2019quantifying}, and thus present a natural direction towards understanding what thermodynamic consequences quantum memory quantifiers may have.

\vspace{0.2cm}
\noindent \textbf{Methods}

{\small
\noindent Here we present properties of the robustness measure, formal statements of Theorems 1-4 and sketch their proofs. Full version of the proofs and details on the numerical simulations can be found in Supplementary Materials.\\

\noindent \textbf{Properties of RQM.}
Recall the definition of RQM
\begin{equation}
    	\mc R(\mc N) = \min\left\{s\ge 0: \exists \mc M\in \mathrm{EB}, \, \textrm{s.t.} \, \frac{\mc N+s\mc M}{1+s}\in \mathrm{EB} \right\},
\end{equation}
where our chosen set of free channels are the entanglement-breaking (EB) channels. Define \emph{free transformation} $\mc O$ as the set of physical transformations on quantum channels (super-channels) that map EB channels to EB channels, i.e. $\mc O=\{\Lambda:\Lambda(\mc M)\in \mathrm{EB},\forall\mc M\in \mathrm{EB}\}$. This class includes, for instance, the family of classically correlated transformations, which were considered in~\cite{memoryResource18prx} as a physically-motivated class of free transformations under which quantum memories can be manipulated. In particular, transformations $\Lambda(\mc N)=\mc M_1\circ\mc N\circ\mc M_2$ with arbitrary pre- and post-processing channels $\mc M_1,\mc M_2$ are free. We show that RQM satisfies the following properties.

\vspace{0.15cm}

\noindent\emph{Non-negativity.} $\mc R(\mc N)\geq 0$ with equality if and only if $\mc N\in \mathrm{EB}$.\\
\emph{Monotonicity.} $\mc R$ does not increase under any free transformation, $\mc R(\Lambda(\mc N))\leq \mc R(\mc N)$ for arbitrary $\mc N$ and $\Lambda\in\mc O$.\\
\emph{Convexity.}~$\mc R$ does not increase by mixing channels, $\mc R\left(\sum_i p_i\mc N_i\right)\leq \sum_i p_i \mc R(\mc N_i)$.

\vspace{0.15cm}

\noindent Additional properties such as bounds under tensor product of channels are presented in Supplementary Materials. 

\vspace{0.15cm}

\noindent\emph{Proof of properties.} Non-negativity follows directly from the definition. For monotonicity, suppose $s=\mc R(\mc N)$ with the minimisation achieved by $\mc M$ such that
\begin{equation}
		\frac{1}{s+1}\mc N+\frac{s}{s+1}\mc M=\mc{M}'\in \mathrm{EB}.    
\end{equation}
Apply an arbitrary free transformation $\Lambda$ on both sides and using linearity, we obtain $\frac{1}{s+1}\Lambda(\mc N)+\frac{s}{s+1}\Lambda(\mc M)=\Lambda\left(\mc{M}'\right)\in \mathrm{EB}$. Therefore by definition $\mc R(\Lambda(\mc N))\leq s=\mc R(\mc N)$. For convexity, suppose $s_i=\mc R(\mc N_i)$ with the minimisation achieved by $\mc M_i$ for each $i$ and let $\mc M_i'=(\mc N_i+s_i\mc M_i)/(1+s_i)$. Let $s =\sum_i p_i s_i$, $\mc N = \sum_i p_i\mc N_i$ and $\mc M =\frac{1}{s}\sum_i p_i s_i\mc M_i$, then by convexity of the set of EB channels
\begin{equation}
\begin{aligned}
	\frac{1}{s+1}\mc N+\frac{s}{s+1}\mc M &= 
	\frac{1}{s+1}\sum_ip_i(s_i + 1)\mc M_i'\in \mathrm{EB},
\end{aligned}			
\end{equation}
therefore by definition we have $\mc R\left(\sum_i p_i\mc N_i\right)=\mc R(\mc N)\leq s=\sum_i p_i s_i =\sum_i p_i\mc R(\mc N_i)$.\\

\noindent \textbf{Single-shot memory synthesis.} Here we study a more general scenario, imperfect memory synthesis, which allows a small error between the synthesised memory and the target memory. The resource cost for this task is defined as the minimal dimension required for the ideal qudit memory $\mc I_d$,
\begin{equation}
	R_{\textrm{syn}}^{\varepsilon}(\mc N) =\min\left\{d:\exists\Lambda\in\mc O,
  \|\Lambda(\mc{I}_d)-\mc{N}\|_{\diamond}\leq \varepsilon\right\},
\end{equation}
where $\|\cdot\|_\diamond$ denotes the diamond norm, which describes the distance of two channels. We also include a smooth parameter $\varepsilon$ of the cost which tolerates an arbitrary amount of error in the synthesis protocol. The case with  $\varepsilon=0$ corresponds to the case with exact synthesis. \yx{When considering the ideal qubit memory $\mc I_2$ as the unit optimal resource, the minimal number of ideal qubit memories $\mc I_2^{\otimes n}$ required for memory synthesis is given by $n= \lceil\log_2(R_{\textrm{syn}}^{\varepsilon}(\mc N))\rceil$
}

Correspondingly we define a smoothed version of the robustness measure by minimising over a small neighbourhood of quantum channels,
\begin{equation}\label{eq:defsmoothrobustness}
    \mc R^\varepsilon(\mc N)=\min_{\|\mc N'-\mc N\|_\diamond\leq \varepsilon}\mc R(\mc N').
\end{equation}
We prove that the smoothed robustness measure exactly quantifies the resource cost for imperfect single-shot memory synthesis.

\vspace{0.15cm}

\noindent\emph{Formal statement of Theorem 1.} For any quantum channel $\mc N$ and any $0\leq\varepsilon<1$, the resource cost for single-shot memory synthesis satisfies
\begin{equation}
    R_{\textrm{syn}}^{\varepsilon}(\mc N) =1+\lceil\mc R^\varepsilon(\mc N)\rceil.
\end{equation}
Note that by setting $\varepsilon=0$ we recover the result for perfect memory synthesis stated in the main text.

\vspace{0.15cm}

\noindent\emph{Proof.} We start by proving $R_{\textrm{syn}}^{\varepsilon}(\mc N) \geq 1+\lceil\mc R^\varepsilon(\mc N)\rceil$. The first step is to show that the robustness of the identity channel is $\mc R(\mc I_d)=d-1$. The proof of this fact is omitted here. Next we show that the desired inequality can be proven using the monotonicity property. For an arbitrary memory synthesis protocol $\Lambda(\mc I_d)=\mc N'$ where $\|\mc N'-\mc N\|_\diamond\leq\varepsilon$, we have
\begin{equation}
    \begin{aligned}
    1+\mc R^\varepsilon(\mc N)&=1+\min_{\|\mc N'-\mc N\|_\diamond\leq \varepsilon}\mc R(\mc N')\\
    &\leq 1+\mc R(\mc N')\\
    &= 1+\mc R(\Lambda(\mc I_d))\\
    &\leq 1+\mc R(\mc I_d)\\
    &=d.
    \end{aligned}
\end{equation}
Here the second line follows by definition and the fourth line follows from monotonicity. As the above inequality holds for all memory synthesis protocols, it also holds for the optimal protocol. Also notice that dimensions are integers. Thus we derive that $R_{\textrm{syn}}^{\varepsilon}(\mc N) \geq 1+\lceil\mc R^\varepsilon(\mc N)\rceil$.

To prove the other side $R_{\textrm{syn}}^{\varepsilon}(\mc N) \leq 1+\lceil\mc R^\varepsilon(\mc N)\rceil$, suppose the  channel achieves the mimum of  Eq.~\eqref{eq:defsmoothrobustness} is $\mc N'$, and let $d_c=1+\lceil\mc R(\mc N')\rceil$. To prove the desired inequality, it suffices to show that $\exists\Lambda\in\mc O$ such that $\Lambda(\mc I_{d_c})=\mc N'$. Indeed such a $\Lambda$ is a protocol that achieves the required accuracy using resource $1+\lceil\mc R^\varepsilon(\mc N)\rceil$, thus the optimal protocol should only use less resource.

Next we explicitly construct such a free transformation $\Lambda$, which transforms a quantum channel to another channel. As there is a one-to-one correspondence between Choi states and quantum channels, we give this construction based on transformation of the Choi state:
\begin{equation}\label{dilutionchannel}
		\Lambda(\Phi_{\mc C}) = \tr\left(\phi^+\Phi_{\mc C}\right)\Phi_{\mc N'}+\tr\left((I-\phi^+)\Phi_{\mc C}\right)\Phi_{\mc M},
\end{equation}
where $\Phi$ denotes the Choi state of the subscript channel and $\phi^+$ is the maximally entangled state. In the full proof we show that $\Lambda$ is a valid physical transformation, i.e. a quantum super-channel.

As it is easy to verify that $\Lambda(\mc I_{d_c})=\mc N'$, it only remains to show that $\Lambda$ is a free transformation, which maps EB channels to EB channels. To do this, first notice that as $d_c\geq 1+\mc R(\mc N')$, there exists $\mc{M},\mc{M}'\in \mathrm{EB}$ such that
	\begin{equation}
	    \frac{1}{d_c}\mc N'+\frac{d_c-1}{d_c}\mc M = \mc M'.
	\end{equation}
Then we can rewrite Eq.~\eqref{dilutionchannel} as
	\begin{equation}
		\Lambda(\Phi_{\mc C})=q\Phi_{\mc M'}+(1-q)\Phi_{\mc M},
	\end{equation}
with $q = d_c\tr\left[\phi^+\Phi_{\mc C}\right]$. When $\mc C$ is an EB channel, $\Phi_{\mc C}$ is a separable state, and we have $0\leq q\leq 1$. Thus $\Lambda(\Phi_{\mc C})$ is a separable Choi state that corresponds to an EB channel, which means that $\Lambda$ is a free transformation and concludes the proof.\\

\noindent \textbf{Simulating observational statistics.} 
Observe that the general simulation strategy is to find a set of free memories $\{\mc M_i\}\subseteq\mathrm{EB}$ such that the target memory can be linearly expanded as $\mc N=\sum_i c_i\mc M_i,c_i\in\mathbb{R}$. By using $\mc M_i$ and measuring $\tr[O\mc M_i(\rho)]$, we can obtain the target statistics as $\tr[O\mc N(\rho)]=\sum_i c_i\tr[O\mc M_i(\rho)]$ . Thus, compared with having access to $\mc N$ and directly measuring $O$, the classical simulation introduces an extra sampling overhead with a multiplicative factor $\|c\|_1^2 = \left(\sum_i|c_i|\right)^2$. In particular, suppose we aim to estimate $\tr[O\mc N(\rho)]$ to an additive error $\varepsilon$ with failure probability $\delta$. Due to Hoeffding's inequality~\cite{PhysRevLett.115.070501}, when having access to $\mc N$ we need $T_0 \propto 1/{\varepsilon^2}\log(\delta^{-1})$ samples to achieve this estimate to desired precision, and when only having access to free resources in a specific decomposition $\mc N=\sum_i c_i\mc M_i$, we need $T \propto {\|c\|_1^2}/{\varepsilon^2}\log(\delta^{-1})$ samples. The simulation overhead is thus given by
$\|c\|_1^2\propto T/T_0$.

By minimising the simulation overhead over all possible expansions, we obtain the optimal simulation cost
\begin{equation}
    \mc C_{\min} (\mc N)=\min_{\mc N=\sum_i c_i\mc M_i}\mc C(\{c_i,\mc M_i\}).
\end{equation}
Our second result shows that this optimal cost is quantified by the robustness measure.

\vspace{0.15cm}

\noindent\emph{Formal statement of Theorem 2.} For any quantum channel $\mc N$, the optimal cost for the observational simulation of $\mc N$ using EB channels is given by
\begin{equation}
    \mc C_{\min} (\mc N)=(1+2\mc R(\mc N))^2.
\end{equation}

\vspace{0.15cm}

\noindent\emph{Proof.} For any linear expansion $\mc N=\sum_i c_i \mc M_i$, denote the positive and negative coefficients of $c_i$ by $c_i^+$ and $c_i^-$, respectively. Then we have 
\begin{equation}
    \mc N = \sum_{i:c_i\ge 0} |c_i^+| \mc M_i - \sum_{i:c_i< 0} |c_i^-| \mc M_i, \, \mc M_i\in \mathrm{EB},
\end{equation}
with $\|c\| = \sum_{i:c_i\ge 0}|c_i^+|+\sum_{i:c_i< 0}|c_i^-|$. As the channel is trace preserving, taking trace on both sides we get $\sum_{i:c_i\ge 0}|c_i^+|-\sum_{i:c_i< 0}|c_i^-|=1$. Denote $s = \sum_{i:c_i< 0}|c_i^-|$, hence with $\|c\|_1 = 2s+1$, $\mc M = \sum_{i:c_i< 0}|c_i^-| \mc M_i/s$, and $\mc M' = \sum_{i:c_i\geq 0}|c_i^+| \mc M_i/(1+s)$, we have
\begin{equation}\label{eq:expansion}
    \mc N = (s+1) \mc M' - s \mc M,
\end{equation}
where by convexity of $\mathrm{EB}$ we have $\mc M,\mc M'\in \mathrm{EB}$. Therefore finding the optimal expansion is equivalent to finding the smallest $s$ such that Eq.~\eqref{eq:expansion} holds, which by definition equals to the robustness, i.e. $s_{\min}=\mc R(\mc N)$. Then we conclude that $\mc C_{\min} (\mc N)=(1+2\mc R(\mc N))^2$.\\

\noindent \textbf{Nonlocal games.} Consider a quantum game $\mc G$ defined by the tuple $\mc G=(\{\alpha_{ij}\},\{\sigma_i\}, \{O_j\})$, where $\sigma_i$ are input states, $\{O_j\}$ is a positive observable valued measures at the output, and $\alpha_{i}\in\mathbb{R}$ are the coefficients which define the particular game. The maximal performance in the game $\mc G$ enabled by a channel $\mc N$ is quantified by the payoff function $\mc P(\mc N,\mc G)=\sum_{ij}\alpha_{ij}\tr[O_j\mc N(\sigma_i)]$. Theorem 3 establishes the connection between the advantage of a quantum channel in the game scenario over all EB channels and the robustness measure. To ensure that the optimisation problem is well-defined and bounded, we will optimise over games which give a non-negative payoff for classical memories, which include standard state discrimination tasks.

\vspace{0.15cm}

\noindent\emph{Formal statement of Theorem 3.} Let $\mc G'$ denote games such that all EB channels achieve a non-negative payoff, that is,
\begin{equation}
\mc P(\mc M, \mc G') \geq 0 \; \forall \mc M \in \mathrm{EB}.
\end{equation}
Then the maximal advantage of a quantum channel $\mc N$ over all EB channels, maximised over all such games, is given by the robustness:
\begin{equation}
    \max_{\mc G'}\frac{\mc P(\mc N,\mc G')}{\max_{\mc M\in \mathrm{EB}}\mc P(\mc M,\mc G')}=1+\mc R(\mc N).
\end{equation}

\vspace{0.15cm}

\noindent\emph{Proof.} The proof is based on duality in conic optimisation (see Ref.~\cite{takagi2019general} and references therein). First we write the robustness as an optimisation problem
\begin{equation}
\begin{aligned}
	\mc R(\mc N)+1 =& \min\tr[x_1]\\
	\text{s.t. }& x_1-x_2 = \Phi_{\mc N},\\
	&x_1,x_2\in\mathrm{cone}(\mathrm{Choi}(\mathrm{EB})),
\end{aligned}
\end{equation}
where $\Phi_{\mc N}$ is the Choi state of $\mc N$, $\mathrm{Choi}(\mathrm{EB})$ denotes the Choi states of EB channels, i.e. bipartite separable Choi states, and $\mathrm{cone}(\cdot)$ represents the unnormalised version. This can be written in standard form of conic programming, based on which we can write the dual form of this optimisation problem. The dual form can be simplified as
\begin{equation}\label{dualform}
\begin{aligned}
    \mathrm{OPT}=&\max\tr[\Phi_{\mc N} W]\\
    \text{s.t. }&W=W^\dag,\\
                &\tr[\Phi_{\mc M'} W]\in [0,1], \forall \mc M'\in \mathrm{EB}.
\end{aligned}
\end{equation}
We can verify that these primal and dual forms satisfy the condition for strong duality, therefore $\mathrm{OPT}=1+\mc R(\mc N)$, and it remains to show that $\mathrm{OPT}$ equals the maximal advantage in games.

As the constraints in the dual form~\eqref{dualform} are linear, without loss of generality we can rescale the optimisation so that we only need to consider games $\mc G'$ that satisfy
\begin{equation}
    \mc P (\mc M, \mc G') \in [0,1]
\end{equation}
for any $\mc M \in \mathrm{EB}$. We can then write
\begin{equation}
\begin{aligned}
    \mc P(\mc N,\mc G')&= \sum_{i,j}\alpha_{ij}\tr[O_j\mc N(\sigma_i)],\\
    &=d\sum_{i,j} \alpha_{ij} \tr[\Phi_{\mc N}(\sigma_i^T\otimes O_j)],\\
    &:= \tr[\Phi_{\mc N}W],
\end{aligned}
\end{equation}
where $d$ is the input dimension of $\mc N$ and $W=d\sum_{i,j} \alpha_{ij} \sigma_i^T\otimes O_j$. Using this representation, the maximal advantage can be written as an optimisation problem equivalent to \eqref{dualform}. In particular, since any Hermitian matrix can be expressed in the form of $W$ for some real coefficients $\{\alpha_{ij}\}$, any witness $W$ in \eqref{dualform} can be used to construct a corresponding game $\mc G'$, and conversely any game $\mc G'$ satisfying the optimisation constraints gives rise to a valid witness $W$ in \eqref{dualform}. We thus have
\begin{equation}
     \max_{\mc G'}\mc P(\mc N,\mc G')=1+\mc R(\mc N),
\end{equation}
concluding the proof.\\

\noindent \textbf{Computability and bounds.} It is known that the description of the set of separable states is NP-hard in the dimension of the system~\cite{gurvits_2003}, and indeed this property extends to the set of entanglement-breaking channels~\cite{gharibian_2010}, making it intractable to describe in general. Nonetheless, we can solve the problem of quantifying the RQM in relevant cases, as well as establish universally applicable bounds. As described in the main text, suitably constructing nonlocal games $\mc G$ can provide such lower bounds, which can indeed be tight. More generally, one can employ the positive partial transpose (PPT) criterion~\cite{horodecki_1996-1} to provide an efficiently computable semidefinite programming relaxation of the problem, often providing non-trivial and useful bounds on the value of the RQM. We leave a detailed discussion of these methods to the Supplemental Material. In the case of low-dimensional channels, which is of particular relevance in many near-term technological applications, we can go further than numerical bounds and establish an analytical description of the RQM.

\vspace{0.15cm}

\noindent\emph{Formal statement of Theorem 4.}
For any channel $\mc N$ with input dimension $d_A$ and output dimension $d_B$, its RQM satisfies
\begin{equation}\label{eq:methodrob_bound_eig}
    \mc R(\mc N)\ge \max\{ 0, \; d_A \max\operatorname{eig}(\Phi_{\mc N}) - 1 \},
\end{equation}
and equality holds when $d_A \leq 3$ and $d_B = 2$.

\vspace{0.15cm}

\noindent\emph{Proof.} The idea behind the proof is to employ the reduction criterion for separability~\cite{horodecki_1999,cerf_1999-1}, which can be used to show that any entanglement-breaking channel $\mc M : A \to B$ satisfies $\Phi_{\mc M} \leq \frac{1}{d_A} I_{AB}$.
Therefore, the set of channels satisfying this criterion provides a relaxation of the set of EB channels, and we can define a bound on the RQM by computing the minimal robustness with respect to this set. A suitable decomposition of a channel $\mc N$ can then be used to show that, in fact, this bound is given exactly by the larger of $d_A \max\operatorname{eig}(\Phi_{\mc N}) - 1$ and $0$. In the case of $d_A \leq 3$ and $d_B = 2$, the reduction criterion is also a sufficient condition for separability, which ensures that the robustness $\mc R(\mc N)$ matches the lower bound.\\

\noindent \textbf{Experiment details.}
The processor has five qubits with $T_1$ and $T_2$ ranging from $25\sim110\mu s$, single-qubit gate error $3.3\sim 6.5 \times 10^{-4}$, two-qubit gate error $1.0\sim 1.5 \times 10^{-2}$, and read-out error $1.9\sim 4.5\times 10^{-2}$. 
Our experiments are run on the first three qubits, which have the highest gate fidelities, and the circuits are implemented with Qiskit~\cite{aleksandrowicz2019qiskit}.

}

\vspace{0.2cm}
\noindent \textbf{Acknowledgements.}
We are grateful to Ryuji Takagi for making us aware of an error in a preliminary version of this manuscript. We acknowledge Simon Benjamin and Earl Campbell for insightful discussions. This work is supported by the EPSRC National Quantum Technology Hub in Networked Quantum Information Technology (EP/M013243/1), the National Natural Science Foundation of China Grants No.~11875173 and No.~11674193, and the National Key R\&D Program of China Grants No.~2017YFA0303900 and No.~2017YFA0304004, the National Research Foundation of Singapore Fellowship No.~NRF-NRFF2016-02 and the National Research Foundation and L'Agence Nationale de la Recherche joint Project No.~NRF2017-NRFANR004 VanQuTe, the MOE Tier 1 Grant MOE$2019$-T1-$002$-$015$, the Foundational Questions Institute (FQXi) large grant FQXi-RFP-IPW-1903. Finally we acknowledge use of the IBM Q for this work. The views expressed are those of the authors and do not reflect the official policy or position of IBM, the IBM Q team, the National Foundation of Singapore or the Ministry of Education of Singapore.

\vspace{0.2cm}

\noindent \textbf{Author contributions.} 
X.Y., Y.L., and B.R.~devised the main conceptual and proof ideas. X.Y.~and Q.Z.~performed the numerical simulation and the experiment on the IBM Q cloud.  All authors contributed to the development of the theory and the writing of the manuscript. 
\vspace{0.2cm}

\noindent \textbf{Data availability.}
The authors declare that all data supporting this study are contained within the article and its supplementary files.
\vspace{0.2cm}

\noindent \textbf{Competing interests.} The authors declare no competing financial interests.
\vspace{0.2cm}

\bibliographystyle{apsrev4-1}
\bibliography{memory}

\clearpage
\onecolumngrid

\begin{center}
    \textbf{Supplemental Information: Universal and Operational Benchmarking of Quantum Memories}
\end{center}

\section{Resource framework}
We first review the framework of resource theory of memories introduced in Ref.~\cite{memoryResource18prx}.

\subsection{Resource theories of memories}
Focusing on two chronologically ordered systems $A$ and $B$, a quantum memory is described by a channel $\mc N^{A\rightarrow B}$ that maps system $A$ to $B$, i.e.,  a completely positive trace-preserving (CPTP) linear map from $\mc D(\mc H_A)$ to $\mc D(\mc H_B)$.
Here, $\mc H$ represents the Hilbert space and $\mc D(\mc H)$ represents the set of states. 
The resource theory of quantum memories $\mathbf{C}=(\mc{F},\mc{O},\mc{R})$ is a tuple with the free memory set $\mc{F}$, free transformations $\mc O$ and resource measure $\mc R$. The resource theory provides a framework to systematically study the properties of quantum memories. In this work, we mainly focus on the capability of preserving quantum information of memories.  In this section, we introduce definitions of the free memory set $\mc{F}$ and free transformations $\mc O$, and leave the discussion of resource measures $\mc R$ in the next section.
\begin{enumerate}
\item Free memories $\mc{F}$: entanglement breaking (EB) or equivalently  measure-and-prepare channels,
\begin{equation}\label{EBdef}
    \mc N^{A\rightarrow B}(\rho^A) = \sum_i \tr[\rho^A M^A_i]\sigma^B_i.
\end{equation}
Here $\{M_i^A\}$ is a POVM satisfying $M^A_i\ge 0$ and $\sum_i M^A_i=I^A$. The reason that we choose EB channels to be free is because only classical information is stored and forwarded from system $A$ to system $B$.  The channels that have the maximal resource are isometric channels as they are reversible. 

We consider the normalised Choi state of a channel $\mc N$,
\begin{equation}
	\Phi^{AB}_{\mc N} = \mc N^{A'\rightarrow B}(\Phi^+_{AA'}),
\end{equation}
with maximally entangled state ${\Phi^+_{AA'}}=1/{d}\sum_{ij} \ket{ii}\bra{jj}$. Here $d$ is the dimension of the input system $A'$. For a linear CPTP map $\mc N$, the corresponding Choi state is a normalised quantum state satisfying $\tr_B[\Phi^{AB}_{\mc N}]=I_A/d$. Conversely, for any normalised bipartite quantum state $\Phi^{AB}$ which satisfies $\tr_B[\Phi^{AB}]=I_A/d$, there is a unique quantum channel $\mc N$ whose Choi state equals to $\Phi^{AB}$ and can be espressed as
\begin{equation}
    \mc N^{A\to B}(\rho)=d\cdot\tr_A\left[(\rho^{T}\otimes I)\Phi^{AB}\right].
\end{equation}

We have the following properties for the free memory set $\mc F$:
\begin{enumerate}

\item { The set of entanglement-breaking channels is convex. If a channel $\mc M$ is of the form
\begin{equation}
    \mc M(\rho) = \sum_i \tr[P_i \rho] \psi_i
\end{equation}
with $\psi_i$ pure and $\{P_i\}$ mutually orthogonal rank one projections, then it is an extreme element of the set of entanglement-breaking channels~\cite{horodecki2003entanglement}.}

\item A quantum channel is EB if and only if its Choi state is a separable state. Equivalently, a free channel admits a Kraus decomposition as $\mc M(\rho) = \sum_i K_i \rho K_i^\dagger$ where each $K_i$ is rank one~\cite{horodecki2003entanglement}.




\end{enumerate}

\item Free super-operations $\mc{O}$: any super-operation that only transmits classical information is free,
\begin{equation}\label{Eq:superphy}
    \Lambda(\mc N^{A\rightarrow B}) = \mc V^{BE\rightarrow B'}\circ
\mc N^{A\rightarrow B} \circ\Delta^E\circ\mc U^{A'\rightarrow AE}.
\end{equation}
Here $\mc U^{A'\rightarrow AE}$ and $\mc V^{BE\rightarrow B'}$ are arbitrary quantum channels and $\Delta^E$ is a dephasing channel that enforces system $E$ to be classical.

For mathematical simplicity, we can also consider free operations as resource non-generating super-operations $\Lambda$, which map EB channels to EB channels,
\begin{equation}\label{Eq:generalsuper}
    \mc O = \{\Lambda:\Lambda(\mc N) \in \mc F, \, \forall \mc N\in \mc F\}.
\end{equation}
Meanwhile, it is not hard to see that free super-operations are also convex.

\end{enumerate}

Quantum channels can be represented with Choi states and transformations of quantum channels, i.e., super-operations, can be similarly regarded as special linear transformations of Choi states. 
Given the Choi state $\Phi^{AB}_{\mc N}$ of channel $\mc N^{A\rightarrow B}$, a super-operation $\Lambda(\mc N)$ can be equivalently described by a linear map acting on the Choi state. That is, suppose the Choi state of $\Lambda(\mc N)$ is  $\Phi^{AB}_{\Lambda(\mc N)}$, we have
\begin{equation}
    \Phi^{A'B'}_{\Lambda(\mc N)} = \phi_{\Lambda}(\Phi^{AB}_{\mc N}),
\end{equation}
where $\phi_{\Lambda}$ can be understood as a linear map from state $\Phi^{AB}_{\mc N}$ to state $\Phi^{A'B'}_{\Lambda(\mc N)}$. 

\section{Robustness of memory}
In this section, we introduce robustness measures of memories and study their properties. We also define the generalised robustness and study its properties.
\subsection{Definition}
The robustness of memories is defined as
\begin{equation}
	\mc R(\mc N) = \min\left\{s\ge 0: \exists \mc M\in\mc F, \, \textrm{s.t.} \, \frac{1}{s+1}\mc N+\frac{s}{s+1}\mc M\in \mc F\right\},
\end{equation}
with a minimisation over all possible EB channels.
We also define the generalised robustness as
\begin{equation}
	\mc R_G(\mc N) = \min\left\{s\ge 0: \exists \mc M\in\mathrm{CPTP}, \, \textrm{s.t.} \, \frac{1}{s+1}\mc N+\frac{s}{s+1}\mc M\in \mc F\right\},
\end{equation}
with a minimisation over all channels. Note that the robustness measures can be equivalently defined based on the Choi state of channels,
\begin{equation}
	\begin{aligned}
		\mc R(\mc N)  &= \mc R(\Phi_{\mc N}^+) = \min_{\mc M\in \mc F}\left\{s\ge 0: \exists \mc M'\in \mc F, \, \textrm{s.t.} \, \frac{1}{s+1}\Phi^+_{\mc N}+\frac{s}{s+1}\Phi^+_{\mc M}=\Phi^+_{\mc M'}\right\},\\
		\mc R_G(\mc N) &= \mc R_G(\Phi_{\mc N}^+)  = \min_{\mc M\in\mathrm{CPTP}}\left\{s\ge 0: \exists \mc M'\in \mc F, \, \textrm{s.t.} \, \frac{1}{s+1}\Phi^+_{\mc N}+\frac{s}{s+1}\Phi^+_{\mc M}=\Phi^+_{\mc M'}\right\}.
	\end{aligned}
\end{equation}
Note also that related measures have appeared in general resource theories of states~\cite{vidal1999robustness,Brandao15,regula_2018,anshu_2017} as well as channels~\cite{Diaz2018usingreusing,takagi2019general,liu2019resource}, where their properties were studied.

We also define the logarithmic robustness  as
\begin{equation}
	\LR(\mc N) = \log_2(1+\mc R(\mc N)),
\end{equation}
and the smoothed logarithmic robustness as
\begin{equation}
	\LR^\varepsilon(\mc N) = \min_{\|\mc N'-\mc N\|_\diamond\le \varepsilon} \LR(\mc N').
\end{equation}

Similarly, the max-entropy of a memory can be defined as
\begin{equation}
\begin{aligned}
	D_{\max}(\mc N) &= \log_2\left(1+ \mc R_G(\mc N)\right),\\
	&=\log_2\min\left\{\lambda:\exists\mc{M}\in\mc{F},\mc{N}\leq\lambda\mc{M}\right\},
\end{aligned}
\end{equation}
and the smoothed version as
\begin{equation}
	D_{\max}^\varepsilon(\mc N) = \min_{\|\mc N'-\mc N\|_\diamond\le \varepsilon} D_{\max}(\mc N').
\end{equation}

Note that, as the smoothing is defined based on the diamond norm of channels, the smoothed measures cannot be obtained by smoothing the Choi state, i.e.,
\begin{equation}
\begin{aligned}
	\LR^\varepsilon(\mc N) &\neq \LR^\varepsilon(\Phi^+_{\mc N}) =  \min_{\|\Phi^{+'}_{\mc N}-\Phi^+_{\mc N}\|\le \varepsilon} \log_2(1+\mc R(\Phi^+_{\mc N})),\\
	D_{\max}^\varepsilon(\mc N) &\neq D_{\max}^\varepsilon(\Phi^+_{\mc N})= \min_{\|\Phi^{+'}_{\mc N}-\Phi^+_{\mc N}\|\le \varepsilon} \log_2(1+\mc R_G(\Phi^+_{\mc N})).
\end{aligned}
\end{equation}

\subsection{Properties}\label{sec:properties}
Here, we focus on properties of the robustness measures. We prove it for $\mc R(\mc N)$ and the related measures. Unless otherwise mentioned, the same proof for $\mc{R}$ also holds for $\mc{R}_G$.
\begin{enumerate}
	\item \emph{Non-negativity.} For any channel $\mc N$, we have
		\begin{equation}
		\mc{R}(\mc N)\ge0, \, \LR(\mc N)\ge0, \, \LR^\varepsilon(\mc N)\ge0.
	\end{equation}
	More specifically, we also have the following:
	\begin{itemize}
	    \item $\mc{R}(\mc{N})=0\Leftrightarrow\mc{N}\in\mc{F}$.
	    \item $\LR(\mc{N})=0\Leftrightarrow\mc{N}\in\mc{F}$.
	    \item $\LR^\varepsilon(\mc{N})=0$ for all $\mc{N}\in\mc{F}$.
	\end{itemize}
	This follows directly from the definition.
	\item \emph{Monotonicity.} For a resource non-generating super-operation $\Lambda$, we have
	\begin{equation}
		\mc{R}( \Lambda(\mc N)) \le \mc{R}(\mc N), \, \LR( \Lambda(\mc N)) \le \LR(\mc N), \, \LR^\varepsilon( \Lambda(\mc N)) \le \LR^\varepsilon(\mc N).
	\end{equation}
	\begin{proof}
		We first prove $\mc{R}( \Lambda(\mc N)) \le \mc{R}(\mc N)$. Suppose the minimisation of $s=R(\mc N)$ is achieved with channel $\mc M$, we have
		\begin{equation}
		\frac{1}{s+1}\mc N+\frac{s}{s+1}\mc M=\mc{M}'\in \mc F.    
		\end{equation}
		Applying the resource non-generating super-operation $\Phi$ to both sides of the above equation, we have 
		\begin{equation}
			\Lambda\left(\frac{1}{s+1}\mc N+\frac{s}{s+1}\mc M\right) = \frac{1}{s+1}\Lambda(\mc N)+\frac{s}{s+1}\Lambda(\mc M)=\Lambda\left(\mc{M}'\right)\in \mc F.
		\end{equation}
		Since $\Lambda(\mc M)\in \mc F$, we conclude that $\mc{R}( \Lambda(\mc N)) \leq s = \mc{R}(\mc N)$. As $\LR(\mc N) = \log_2(1+\mc R(\mc N))$, we also have $\LR( \Lambda(\mc N)) \le \LR(\mc N)$. 
		
		To prove $\LR^\varepsilon( \Lambda(\mc N)) \le \LR^\varepsilon(\mc N)$, suppose $\mc N'$ achieves the minimisation of the smooth of $\LR^\varepsilon(\mc N)$, so that $\LR^\varepsilon(\mc{N})=\LR(\mc N')$. As $\|\Lambda(\mc N')-\Lambda(\mc N)\|_\diamond\le \|\mc N'-\mc N\|_\diamond\le \varepsilon$, we have 
		\begin{equation}
			\LR^\varepsilon(\mc N) = \LR(\mc N') \ge  \LR( \Lambda(\mc N')) \ge \LR^\varepsilon( \Lambda(\mc N)).
		\end{equation}
	With monotonicity, we also have that the measures are invariant under reversible transformations.
	\end{proof}
	
	As a special case of the monotonicity property, we have that the robustness measure of sequentially connected memories is upper bounded by the minimal robustness of each memory. That is,
	\begin{equation}
	    \mc R(\mc N_k\circ \dots\circ \mc N_1) \le \min_{i=1,\dots,k} \mc R(\mc N_i).
	\end{equation}
	This also holds for the logarithmic robustness and the smoothed logarithmic robustness.
	
	\item \emph{Convexity.} For a set of memories $\{\mc N_i\}$ with probability distribution $\{p_i\}$ satisfying $p_i\geq 0$ and $\sum_i p_i=1$, the averaged resource measure cannot be increased via mixing memories, i.e., 
	\begin{equation}
	    \mc R\left(\sum_i p_i\mc N_i\right) \le \sum_i p_i\mc R(\mc N_i).
	\end{equation}

	\begin{proof}
	For all $i$, suppose the minimisation of $s_i= R(\mc N_i)$ is achieved with $\mc M_i$, that is,
		\begin{equation}
			\frac{1}{s_i+1}\mc N_i+\frac{s_i}{s_i+1}\mc M_i=\mc M_i'\in \mc F.
		\end{equation}
	Let $s =\sum_i p_i s_i$, $\mc N = \sum_i p_i\mc N_i$ and $\mc M =\frac{1}{s}\sum_i p_i s_i\mc M_i$, then
		\begin{equation}
		\begin{aligned}
			\frac{1}{s+1}\mc N+\frac{s}{s+1}\mc M &= \frac{1}{s+1}\left(\sum_i p_i \mc N_i+\sum_ip_is_i\mc M_i\right),\\
			&=\frac{1}{s+1}\sum_i p_i\left(\mc N_i + s_i\mc M_i\right)\\
			&=\frac{1}{s+1}\sum_ip_i(s_i + 1)\mc M_i'\in\mc F.
		\end{aligned}			
		\end{equation}
		Since $\mc M\in\mc F$, we have
		\begin{equation}
		    \mc R\left(\sum_i p_i\mc N_i\right)=\mc R(\mc N)\leq s=\sum_i p_i s_i =\sum_i p_i\mc R(\mc N_i).
		\end{equation}
	\end{proof}

	\item 
	\emph{Relation under tensor product.} 
	For two channels $\mc N_1$ and $\mc N_2$, we have
	\begin{equation}
	\begin{aligned}
			\max\{\mc R(\mc N_1), \mc R(\mc N_2)\}	&\leq \mc{R}(\mc N_1\otimes\mc N_2) \leq  2\mc{R}(\mc N_1)\mc{R}(\mc N_2)+\mc{R}(\mc N_1)+\mc{R}(\mc N_2),\\
			\max\{\mc R_G(\mc N_1), \mc R_G(\mc N_2)\}	&\leq \mc{R}_G(\mc N_1\otimes\mc N_2) \leq  \mc{R}_G(\mc N_1)\mc{R}_G(\mc N_2)+\mc{R}_G(\mc N_1)+\mc{R}_G(\mc N_2).
	\end{aligned}
	\end{equation}
	For the logarithmic robustness and max-entropy, we also have
	\begin{equation}
	\begin{aligned}
		\max\{\LR(\mc N_1),\LR(\mc N_2)\}	&\le \LR(\mc N_1\otimes\mc N_2) \le \LR(\mc N_1)+\LR(\mc N_2)+1,\\
		\max\{D_{\max}(\mc N_1),D_{\max}(\mc N_2)\}	&\le D_{\max}(\mc N_1\otimes\mc N_2) \le D_{\max}(\mc N_1)+D_{\max}(\mc N_2),\\
	\end{aligned}
	\end{equation}
	or the tighter relation
	\begin{equation}
	\begin{aligned}
	\log ( 1 + 2 \mc R (\mc N_1\otimes\mc N_2)) \le \log ( 1 + 2 \mc R (\mc N_1)) + \log ( 1 + 2 \mc R (\mc N_2) ).
		\end{aligned}
	\end{equation}
	
	\begin{proof}
    For $i=1,2$, suppose the minimisation of $s_i= \mc R(\mc N_i)$ is achieved with $\mc M_i$, that is,
		\begin{equation}
			\frac{1}{s_i+1}\mc N_i+\frac{s_i}{s_i+1}\mc M_i=\mc M_i'\in \mc F.
		\end{equation}	
	Then 
	\begin{equation}
	\begin{aligned}
		\mc N_1\otimes\mc N_2 &= (s_1+1)(s_2+1)\mc{M}_1'\otimes\mc{M}_2'+s_1s_2\mc{M}_1\otimes\mc{M}_2-s_1(s_2+1)\mc{M}_1\otimes\mc{M}_2'-s_2(s_1+1)\mc{M}_1'\otimes\mc{M}_2.
	\end{aligned}
	\end{equation}
    Let $s=(s_1+1)(s_2+1)+s_1s_2-1$, then $\mc{N}_1\otimes\mc{N}_2$ can be expressed as 
    \begin{equation}
        \mc{N}_1\otimes\mc{N}_2=(s+1)\mc{M}'-s\mc{M},
    \end{equation}
    where
    \begin{equation}
        \mc{M}'=\frac{(s_1+1)(s_2+1)\mc{M}_1'\otimes\mc{M}_2'+s_1s_2\mc{M}_1\otimes\mc{M}_2}{(s_1+1)(s_2+1)+s_1s_2}
    \end{equation}
 and
 \begin{equation}
     \mc{M}=\frac{s_1(s_2+1)\mc{M}_1\otimes\mc{M}_2'+s_2(s_1+1)\mc{M}_1'\otimes\mc{M}_2}{2s_1s_2+s_1+s_2}.
 \end{equation}

    In conclusion,
	\begin{equation}
		\mc{R}(\mc N_1\otimes\mc N_2) \leq s = 2\mc{R}(\mc N_1)\mc{R}(\mc N_2)+\mc{R}(\mc N_1)+\mc{R}(\mc N_2).
	\end{equation}

	For the generalised robustness, we have
	\begin{equation}
	    \frac{(\mc N_1+s_1\mc M_1)\otimes(\mc N_2+s_2\mc M_2)}{(s_1+1)(s_2+1)} =  \frac{\mc N_1\otimes \mc N_2}{(s_1+1)(s_2+1)} + \frac{s_1\mc M_1\otimes \mc N_2+s_2\mc N_1\otimes \mc M_2+s_1s_2\mc M_1\otimes\mc M_2}{(s_1+1)(s_2+1)} = \mc M_1'\otimes \mc M_2'.
	\end{equation}
	Therefore 
	\begin{equation}
	    \mc R_G(\mc N) \le (s_1+1)(s_2+1) - 1 = \mc{R}_G(\mc N_1)\mc{R}_G(\mc N_2)+\mc{R}_G(\mc N_1)+\mc{R}_G(\mc N_2).
	\end{equation}
		
		To prove the lower bound, we define the \emph{partial trace of a quantum channel} as
		\begin{equation}
		    \tr_B \mc{N}_{AB}=\tr_B \mc{N}_{AB}\left(\cdot\otimes \frac{\mc{I}_B}{d_B}\right).
		\end{equation}
		By definition~\eqref{EBdef}, the partial trace of a bipartite EB channel is also an EB channel. 
		
		Suppose the minimisation of $s= \mc R(\mc N_1^A\otimes\mc N_2^B)$ is achieved with $\mc M^{AB}$, that is,
		\begin{equation}
			\frac{1}{s+1} \mc N_1^A\otimes\mc N_2^B+\frac{s}{s+1} \mc M^{AB}=\tilde{\mc{M}}^{AB}\in \mc F,
		\end{equation}	
		where superscript denotes input systems. Now, take partial trace on system B, we get
		\begin{equation}
		    \frac{1}{s+1} \mc N_1^A+\frac{s}{s+1} \tr_B\mc M^{AB}=\tr_B\tilde{\mc{M}}^{AB}\in \mc F,
		\end{equation}
		thus $\mc{R}(\mc{N}_1)\leq s$. Symmetrically, we also have $\mc{R}(\mc{N}_2)\leq s$, so the lower bound $\max\{\mc R(\mc N_1), \mc R(\mc N_2)\}	\leq \mc{R}(\mc N_1\otimes\mc N_2)$ is obtained.
		
	\end{proof}
	
	\item \emph{Stability.} For any channel $\mc N$ and any EB channel $\mc M$,
	\begin{equation}
		\mc R(\mc N\otimes \mc M)= \mc R(\mc N), \, \LR(\mc N\otimes \mc M)=\LR(\mc N).
	\end{equation}
	
	\begin{proof}
		As appending a free channel is a resource non-generating operation, we have $\mc R(\mc N\otimes \mc M)\leq \mc R(\mc N)$. Then, from property 4 we also have $\mc R(\mc N\otimes \mc M)\geq \mc R(\mc N)$, thus the equality is obtained.
	\end{proof}
\end{enumerate}

\section{Single-shot memory dilution}
\subsection{Preliminary}
Before proving the main results, we first obtain some preliminary Lemmas which are useful for the following discussions.
\begin{lemma}\label{LRidentity}
The logarithmic robustness of an ideal $d$-dimensional quantum memory $\mc{I}_d$ is
\begin{equation}
    \LR(\mc I_d)=\log_2 d.
\end{equation}
\end{lemma}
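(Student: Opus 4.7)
The claim is equivalent to $\mc R(\mc I_d) = d-1$, so my plan is to establish matching upper and lower bounds on the robustness of the ideal qudit identity channel.

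For the lower bound $\mc R(\mc I_d) \geq d-1$, I would simply invoke Theorem~4, whose inequality
\begin{equation*}
    \mc R(\mc N) \geq d_A\,\max\operatorname{eig}(\Phi_{\mc N}) - 1
\end{equation*}
gives $\mc R(\mc I_d) \geq d\cdot 1 - 1 = d-1$, since $\Phi_{\mc I_d} = \Phi^+$ is a rank-one projector with top eigenvalue $1$. If one wishes a self-contained argument instead of citing Theorem~4, the same bound follows directly from the reduction criterion: every separable (hence EB) Choi state satisfies $\Phi_{\mc M} \leq \frac{1}{d}I_{AB}$, so if $\frac{1}{s+1}\Phi^+ + \frac{s}{s+1}\Phi_{\mc M} = \Phi_{\mc M'}$ with $\mc M, \mc M'\in \mathrm{EB}$, then positivity of $\Phi_{\mc M}$ yields $\max\operatorname{eig}(\Phi_{\mc M'}) \geq 1/(s+1)$, while $\Phi_{\mc M'} \leq I/d$ yields $\max\operatorname{eig}(\Phi_{\mc M'}) \leq 1/d$, forcing $s\geq d-1$.

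For the upper bound $\mc R(\mc I_d) \leq d-1$, I would explicitly display an EB channel $\mc M$ together with an EB convex combination achieving the value $s = d-1$. Motivated by the rotational symmetry of $\Phi^+$, I would look inside the one-parameter family of isotropic states $p\Phi^+ + (1-p)I/d^2$, which are separable (and hence EB as Choi states) precisely when $p \leq 1/(d+1)$. The natural candidates are $\Phi_{\mc M'}$ on the isotropic separability boundary, corresponding to $p = 1/(d+1)$, together with
\begin{equation*}
    \Phi_{\mc M} \;=\; \frac{1}{d^{2}-1}\bigl(I_{AB} - \Phi^{+}\bigr),
\end{equation*}
which is the isotropic state with $p = -1/(d^2-1) < 1/(d+1)$ and therefore separable. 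A direct check shows that both matrices have input-marginal $I_A/d$, so they are bona fide Choi states of channels, and a straightforward computation verifies
\begin{equation*}
    \tfrac{1}{d}\Phi^{+} + \tfrac{d-1}{d}\Phi_{\mc M} \;=\; \tfrac{1}{d+1}\Phi^{+} + \tfrac{1}{d(d+1)} I_{AB} \;=\; \Phi_{\mc M'},
\end{equation*}
giving $\mc R(\mc I_d) \leq d-1$ by definition.

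There is essentially no hard obstacle beyond identifying the right ansatz: the lower bound is a direct consequence of the reduction criterion underlying Theorem~4, and the upper bound reduces to spotting that, by the $U\otimes \overline{U}$ symmetry of $\Phi^+$, one may restrict the search for the optimal noise channel to the isotropic family, at which point the optimal mixture lies exactly on the known separability threshold. Combining the two bounds gives $\mc R(\mc I_d) = d-1$ and hence $\LR(\mc I_d) = \log_2\bigl(1 + (d-1)\bigr) = \log_2 d$, as claimed.
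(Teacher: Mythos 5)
Your proof is correct, but it takes a genuinely different route from the paper's in both halves. For the lower bound, the paper does not use the eigenvalue/reduction-criterion machinery: it argues that since EB Choi states are separable, $\mc R(\mc I_d) < d-1$ would give a separable decomposition of $\Phi^+$ contradicting the known value $d-1$ of the robustness of entanglement of the maximally entangled state from Vidal--Tarrach. Your alternative (via Theorem~4, or directly via $\Phi_{\mc M'} \leq I_{AB}/d$ for EB Choi states together with $\bra{\Phi^+}\Phi_{\mc M'}\ket{\Phi^+} \geq 1/(s+1)$) is equally valid and has the advantage of being self-contained within the paper's own toolbox; note only that one must check Theorem~4's proof does not rely on this lemma, which it does not, so there is no circularity. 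For the upper bound, the paper again follows Vidal--Tarrach, writing $\Phi^+ = d\rho^+ - (d-1)\rho^-$ with $\rho^- = \frac{1}{d(d-1)}\sum_{i\neq j}\ketbra{ij}{ij}$, whose separability is quoted and whose marginal $\tr_B\rho^- = I_A/d$ is verified directly; you instead take the noise state $\frac{1}{d^2-1}(I_{AB}-\Phi^+)$ so that the mixture lands exactly on the isotropic separability threshold $p = 1/(d+1)$. Your computation checks out (the mixture identity, the marginals, and the separability claims are all correct), but it invokes the Horodecki isotropic-state separability characterisation, whereas the paper's choice of $\rho^-$ keeps the separability statement at the level of an already-cited explicit construction. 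Both decompositions achieve $s = d-1$, so either presentation establishes $\LR(\mc I_d) = \log_2 d$.
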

\begin{proof}
    It is equivalent to show that $\mc R(\mc I_d)=d-1$. We first prove the upper bound $\mc R(\mc I_d)\leq d-1$. Notice that the Choi state of $\mc I_d$ is the maximally entangled state $\ket{\Phi^+_{AB}}=\frac{1}{\sqrt{d}}\sum_{i=1}^d\ket{i}_A\otimes\ket{i}_B$. Using the same technique as in~\cite{vidal1999robustness}, we show that $\Phi^+_{AB}$ can be expressed as separable states
    \begin{equation}
        \Phi^+_{AB}=d\rho^+ - (d-1)\rho^-,
    \end{equation}
    where $\rho^-=\frac{1}{d(d-1)}\sum_{i\neq j}\ket{ij}_{AB}\bra{ij}_{AB}$ and $\rho^+=\frac{\Phi^+_{AB}+(d-1)\rho^-}{d}$. It is shown in~\cite{vidal1999robustness} that these are both separable states. It remains to show that these correspond to Choi states of EB channels. This is true since
    \begin{equation}
    \begin{aligned}
        \tr_B\rho^-&=\frac{1}{d(d-1)}\sum_k\sum_{i\neq j}\ket{i}_A\bra{k}\ket{j}_B\bra{i}_A\bra{j}\ket{k}_B\\
        &=\frac{1}{d(d-1)}\sum_i\sum_{j\neq i}\ket{i}\bra{i}_A\\
        &=\frac{1}{d}\sum_i\ket{i}\bra{i}_A\\
        &=\frac{\mc I_A}{d}.
    \end{aligned}
    \end{equation}
    By definition of robustness of memory, we have $\mc R(\mc I_d)\leq d-1$. Also, the lower bound $\mc R(\mc I_d)\geq d-1$ is trivial, because if $\mc R(\mc I_d)< d-1$, then using the same construction, we can show that the robustness of entanglement of $\Phi^+_{AB}$ is less than $d-1$, violating the known result in~\cite{vidal1999robustness}.
\end{proof}

\begin{lemma}\label{validsuperchannel}
For any pure state $\ket{\psi}$ and quantum channels $\mc{N}_1,\mc{N}_2$ in the same space, the linear map
\begin{equation}
    \Lambda(\Phi_\mc{C})=\tr\left(\psi\Phi_\mc{C}\right)\Phi_{\mc N_1}+\left(1-\tr\left(\psi\Phi_\mc{C}\right)\right)\Phi_{\mc N_2}
\end{equation}
is a quantum super-channel represented by operation on Choi states. Here, $\mc C$ is an input quantum channel, $\Phi_{\mc C}$ is its Choi state, and $\Lambda(\Phi_\mc{C})$ is the Choi state of the output channel.
\end{lemma}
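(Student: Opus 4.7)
The plan is to exhibit an explicit physical implementation of $\Lambda$ as a super-channel --- a circuit consisting of a pre-processing CPTP encoder, a single use of the input channel $\mc{C}$, and a post-processing CPTP decoder --- and then verify by linearity that this implementation reproduces the asserted action on Choi states. Since any encoder--channel--decoder composition of this form is automatically a valid quantum super-channel, this will settle the claim.

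First I would build the circuit. On an arbitrary input state $\rho$, the encoder tensors $\rho$ with an auxiliary maximally entangled state $\Phi^+_{EE'}$ on two copies of $\mc{C}$'s input space, so that after applying $\mc{C}$ to the $E'$ half the ancilla carries the Choi state $\Phi_\mc{C}$ on a register $EB'$ while $\rho$ sits untouched in a separate register. The decoder then performs the two-outcome POVM $\{\psi,\, I - \psi\}$ on the Choi register, obtaining outcome $k \in \{1,2\}$ with probabilities $p_1 = \tr(\psi \Phi_\mc{C})$ and $p_2 = 1 - p_1$, and coherently applies $\mc{N}_k$ to the stored $\rho$. The measurement-plus-conditional-channel is a standard CPTP instrument (attach a classical register holding $k$, apply controlled-$\mc{N}_k$, then trace the register out), so the decoder is a legitimate CPTP map. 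Because every building block is linear and $\mc{C}$ enters the circuit only once, the resulting output channel equals $\Lambda(\mc{C}) = p_1 \mc{N}_1 + p_2 \mc{N}_2$, whose Choi state is $p_1 \Phi_{\mc{N}_1} + p_2 \Phi_{\mc{N}_2}$, exactly matching the formula in the Lemma.

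There is no substantive obstacle beyond bookkeeping: the one point that must be confirmed is that $p_1 = \tr(\psi \Phi_\mc{C}) \in [0,1]$ for every valid input channel $\mc{C}$, which follows immediately from $0 \le \psi \le I$ together with $\tr \Phi_\mc{C} = 1$. Hence the convex combination is well-defined and the decoder remains trace-preserving irrespective of which channel is supplied, so $\Lambda$ is a bona fide quantum super-channel acting on Choi states as stated.
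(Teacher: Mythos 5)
Your proposal is correct and follows essentially the same route as the paper: pre-process by appending a maximally entangled state, use $\mc{C}$ once on half of it to produce $\Phi_{\mc C}$ alongside the untouched input, then post-process by measuring $\{\psi, I-\psi\}$ on the Choi register and applying $\mc N_1$ or $\mc N_2$ conditioned on the outcome. The only difference is cosmetic --- the paper writes out explicit Kraus operators for this measure-and-conditionally-apply decoder and checks their completeness, whereas you invoke the standard fact that such an instrument-plus-conditional-channel is CPTP.
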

\begin{proof}
    For an input quantum state $\omega$, the output channel acts on $\omega$ as 
    \begin{equation}
        \Lambda(\mc C)(\omega)=\tr\left(\psi\Phi_\mc{C}\right)\mc N_1(\omega)+\left(1-\tr\left(\psi\Phi_\mc{C}\right)\right)\mc N_2(\omega).
    \end{equation}
    
    In order to show that $\Lambda$ is a quantum super-channel, we only need to show that $\Lambda(\mc C)$ can be decomposed into three steps: pre-processing, action with ancillary system, and post-processing, which are constructed as follows:
    \begin{itemize}
        \item pre-processing: the input state is appended with a maximally entangled state $$\omega\to\Phi_{AB}^+\otimes \omega_C$$
        \item action with ancillary system: the channel $\mc C$ acts on system B while identity acts on systems A and C $$\Phi_{AB}^+\otimes \omega_C\to\mc I_A\otimes \mc C (\Phi_{AB}^+)\otimes \omega_C=\Phi_{\mc C}\otimes\omega_C$$
        \item post-processing: suppose the channels $\mc{N}_1,\mc{N}_2$ have Kraus operators $\{K_j\}_j,\{T_k\}_k$, respectively. We construct a post-processing channel with Kraus operators
        $$\left\{\bra{i}\ket{\psi}\bra{\psi}\otimes K_j,\bra{i}\left(I-\ket{\psi}\bra{\psi}\right)\otimes T_k\right\}_{ijk}.$$
        To see that this is a valid quantum channel, we have
        \begin{equation}
            \begin{aligned}
            &\sum_{ij}\ket{\psi}\bra{\psi}\ket{i}\bra{i}\ket{\psi}\bra{\psi}\otimes K_j^\dag K_j+\sum_{ik}\left(I-\ket{\psi}\bra{\psi}\right)\ket{i}\bra{i}\left(I-\ket{\psi}\bra{\psi}\right)\otimes T_k^\dag T_k\\
            &=\ket{\psi}\bra{\psi}\otimes I+\left(I-\ket{\psi}\bra{\psi}\right)\left(I-\ket{\psi}\bra{\psi}\right)\otimes I\\
            &=I\otimes I.
            \end{aligned}
        \end{equation}
    
    Also, we can see that the output of this channel is $$\Phi_{\mc C}\otimes\omega\to\tr\left(\psi\Phi_\mc{C}\right)\mc N_1(\omega)+\left(1-\tr\left(\psi\Phi_\mc{C}\right)\right)\mc N_2(\omega).$$
    \end{itemize}
\end{proof}

\subsection{Single-shot memory dilution}
We consider the problem of single-shot dilution or channel simulation under resource non-generating super-operations, which is defined as
\begin{equation}
	R_c^{1,\varepsilon}(\mc N) =\min_{\Lambda\in\mc{O}}\big\{\log_2d:\\
  \|\Lambda(\mc{I}_d)-\mc{N}\|_{\diamond}\leq \varepsilon\big\}.
\end{equation}
Note that compared to the definition in Methods of the main text, we used the logarithm of the dimension to represent the number of qubit memories required for the task.

\begin{theorem}\label{oneshotdilution}
For any quantum channel $\mc{N}$, its single-shot memory dilution rate is
\begin{equation}
	\LR^\varepsilon(\mc N) \le R_c^{1,\varepsilon}(\mc N) \le \LR^\varepsilon(\mc N)+1.
\end{equation}
\end{theorem}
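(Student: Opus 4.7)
The plan is to establish both inequalities by leveraging Theorem 1 (Methods) together with the monotonicity of $\mc R$ under free super-operations and the identity $\mc R(\mc I_d) = d-1$ from Lemma~\ref{LRidentity}. Since $R_c^{1,\varepsilon}$ is now measured in qubits ($\log_2 d$) rather than the integer $d_c = 1 + \lceil \mc R^\varepsilon(\mc N)\rceil$ of Theorem~1, the statement is essentially a logarithmic rephrasing in which the usual $\lceil\,\cdot\,\rceil$ rounding contributes at most one extra qubit.

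For the lower bound $\LR^\varepsilon(\mc N) \le R_c^{1,\varepsilon}(\mc N)$, I would take any admissible protocol: a free super-operation $\Lambda \in \mc O$ and an integer $d$ such that $\mc N' := \Lambda(\mc I_d)$ satisfies $\|\mc N' - \mc N\|_\diamond \le \varepsilon$. Applying monotonicity (Section~\ref{sec:properties}, property~2) gives $\mc R(\mc N') \le \mc R(\mc I_d) = d - 1$, hence
\begin{equation}
\LR^\varepsilon(\mc N) \le \LR(\mc N') = \log_2\bigl(1+\mc R(\mc N')\bigr) \le \log_2 d.
\end{equation}
Minimising the right-hand side over all protocols yields $\LR^\varepsilon(\mc N) \le R_c^{1,\varepsilon}(\mc N)$.

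For the upper bound $R_c^{1,\varepsilon}(\mc N) \le \LR^\varepsilon(\mc N) + 1$, I would first choose $\mc N'$ achieving the minimum defining $\LR^\varepsilon(\mc N)$, so that $\|\mc N'-\mc N\|_\diamond \le \varepsilon$ and $\LR(\mc N') = \LR^\varepsilon(\mc N)$. Setting $d_c = 1 + \lceil \mc R(\mc N') \rceil$ (an integer), I would then invoke the explicit construction from the proof of Theorem~1: the super-operation
\begin{equation}
\Lambda(\Phi_{\mc C}) \;=\; \tr\!\bigl(\phi^+ \Phi_{\mc C}\bigr)\,\Phi_{\mc N'} + \tr\!\bigl((I-\phi^+)\Phi_{\mc C}\bigr)\,\Phi_{\mc M}
\end{equation}
is, by Lemma~\ref{validsuperchannel}, a bona fide quantum super-channel, and by the EB-preservation argument already carried out for Theorem~1 it lies in $\mc O$. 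Since $\Lambda(\mc I_{d_c}) = \mc N'$ exactly, this exhibits a valid dilution protocol of dimension $d_c$, giving $R_c^{1,\varepsilon}(\mc N) \le \log_2 d_c$.

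It remains to compare $\log_2 d_c$ with $\LR^\varepsilon(\mc N)+1$. Using $\lceil x\rceil \le x+1$ and $1+\mc R(\mc N') \ge 1$, one has
\begin{equation}
\log_2 d_c \;=\; \log_2\!\bigl(1 + \lceil \mc R(\mc N')\rceil\bigr) \;\le\; \log_2\!\bigl(2\,(1+\mc R(\mc N'))\bigr) \;=\; \LR(\mc N') + 1 \;=\; \LR^\varepsilon(\mc N)+1,
\end{equation}
completing the argument. The only genuinely subtle ingredient is the construction of $\Lambda$ and the verification that it is free, but both are inherited verbatim from Theorem~1; the rest is bookkeeping. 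The main conceptual obstacle — ensuring that the rounding to an integer dimension costs at most one qubit — is resolved by the simple $\lceil x\rceil \le 2x$ estimate valid for $x \ge 1$.
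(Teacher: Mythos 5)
Your argument is correct and mirrors the paper's own proof: the lower bound via monotonicity together with $\mc R(\mc I_d)=d-1$, and the upper bound via the same free super-channel construction (Lemma~\ref{validsuperchannel} plus the EB-preservation argument) applied to the optimal $\mc N'$ with $d_c = 1+\lceil \mc R(\mc N')\rceil$, followed by the ceiling estimate. The only nitpick is your closing remark invoking $\lceil x\rceil \le 2x$ for $x\ge 1$, which is unnecessary and not valid for small robustness values; the displayed chain using $\lceil x\rceil\le x+1$ already suffices for all $\mc R(\mc N')\ge 0$.
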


\begin{proof}
First we prove the left hand side $\LR^\varepsilon(\mc N) \leq R_c^{1,\varepsilon}(\mc N)$. Suppose there exists a memory dilution protocol $\Lambda$ such that
\begin{equation}
    \|\Lambda(\mc{I}_d)-\mc{N}\|_{\diamond}\leq \varepsilon.
\end{equation}
With the monotonicity of the logarithmic robustness, we have
	\begin{equation}
		\begin{aligned}
			\LR^\varepsilon(\mc N) &\le \LR(\Lambda(\mc I_d)),\\
			&\le \LR(\mc I_d),\\
			&= \log_2d.
		\end{aligned}
	\end{equation}
Here, the last line follows from Lemma~\ref{LRidentity}. Since the above equation holds for any dilution protocol, we conclude that $\LR^\varepsilon(\mc N) \leq R_c^{1,\varepsilon}(\mc N)$.

    Next, we prove the right hand side $R_c^{1,\varepsilon}(\mc N) \le \LR^\varepsilon(\mc N)+1$.
	
	Suppose the optimisation in $\LR^\varepsilon(\mc N)$ is achieved with $\mc N'$, such that $\LR^\varepsilon(\mc N)=\LR(\mc N')=\log_2(1+\mc R(\mc N'))$. Denote $d_c = \lceil R(\mc N')\rceil + 1$. Then, there exists $\mc{M},\mc{M}'\in\mc F$ such that
	\begin{equation}
	    \frac{1}{d_c}\mc N'+\frac{d_c-1}{d_c}\mc M = \mc M'.
	\end{equation}
	We construct a linear map as
	\begin{equation}\label{dilutionchannel}
		\Lambda(\Phi_{\mc C}^{CD}) = \tr\left(\Phi^+_{CD}\Phi_{\mc C}^{CD}\right)\Phi_{\mc N'}^{AB}+\tr\left((I_{CD}-\Phi^+_{CD})\Phi_{\mc C}^{CD}\right)\Phi_{\mc M}^{AB},
	\end{equation}
	where systems $C$ and $D$ have dimension $d_c$. By Lemma~\ref{validsuperchannel}, we know that $\Lambda$ is a super-channel.

	Next, we verify that $\Lambda$ is a free resource non-generating super-operation. We first rewrite Eq.~\eqref{dilutionchannel} as
	\begin{equation}
		\Lambda(\Phi_{\mc C}^{CD}) = q\left(\frac{\Phi_{\mc N'}^{AB}+(d_c-1)\Phi_{\mc M}^{AB}}{d_c}\right)+(1-q)\Phi_{\mc M}^{AB}=q\Phi_{\mc M'}^{AB}+(1-q)\Phi_{\mc M}^{AB},
	\end{equation}
	with $q = d_c\tr\left(\Phi^+_{CD}\Phi_{\mc C}^{CD}\right)$. When $\mc C$ is an EB channel, $\Phi_{\mc C}^{CD}$ is a separable state, and we have $0\leq q\leq 1$. Thus $\Lambda(\Phi_{\mc C}^{CD})$ is a separable Choi state that corresponds to an EB channel.
	
	Lastly, we verify that when inputting the identity channel, the output channel $\Lambda(\mc I^{C\rightarrow D}) = \mc N'^{A\rightarrow B}$ is $\varepsilon$-close to the target channel $\mc N$. Therefore we have 
	\begin{equation}
	\begin{aligned}
		R_c^{1,\varepsilon}(\mc N) &\leq \log_2 d_c\\
		&=\log_2\left(1+\lceil R(\mc N')\rceil\right)\\
		&\leq \LR^\varepsilon(\mc N)+1.
	\end{aligned}
	\end{equation}

\end{proof}
\begin{remark}\label{oneshotexact}
Define the smooth robustness of quantum memory as
\begin{equation}
    \mc R^\varepsilon(\mc N)=\min_{\|\mc N'-\mc N\|_\diamond\leq \varepsilon}\mc R(\mc N')
\end{equation}
Since dimensions are integers, the single-shot memory dilution rate can be exactly characterised as
\begin{equation}
    R_c^{1,\varepsilon}(\mc N)=\log_2\left(1+\lceil \mc R^\varepsilon(\mc N)\rceil\right).
\end{equation}
\end{remark}

The above can be compared with the one-shot characterisation of dilution in resource theories of states~\cite{Brandao2011oneshot,zhao2018oneshot,liu2019resource}, which yields related results but is not applicable to the study of manipulation of quantum channels.

\section{Simulating quantum memories with classical resources}
The  task is to simulate a target quantum memory $\mc N$ with free EB memories and free super-operations. For an unknown input state $\rho$, the output state is $\mc N(\rho)$. Suppose after another operation $\mc U$ is applied to the output state, we read out the system by the measuring the average value $\braket{O}_{\mc U\circ\mc N(\rho)}=\tr[\mc U\circ\mc N(\rho)O]$. 
The simulation scheme works as follows. 
\begin{enumerate}
    \item As we require the simulation scheme to be independent of $\rho$, $\mc U$, and $O$, it is equivalent to have $\mc N$ as a linear expansion of EB channels $\mc M_i$, i.e., 
    \begin{equation}\label{Eq:decomposition}
        \mc N=\sum_i c_i\mc M_i
    \end{equation}
    with coefficients $c_i\in\mathbb{R}$ possibly negative. This linear decomposition always exists as the set of EB channels forms a complete basis for the space of quantum channels. 
    \item To obtain the averaged value $\braket{O}_{\mc U\circ\mc N(\rho)}$, we first re-express it as $\braket{O}_{\mc U\circ\mc N(\rho)}=\|c\|_1\braket{\overline{O}}_{\mc U\circ\mc N(\rho)}$ with a normalised observable
    \begin{equation}\label{Eq:simulationdecom}
        \braket{\overline{O}}_{\mc U\circ\mc N(\rho)} = \sum_i p_i \textrm{sign}(c_i) \braket{O}_{\mc U\circ\mc M_i(\rho)},
    \end{equation}
    an overhead $\|c\|_1 = \sum_i|c_i|$, and a normalised probability distribution $\{p_i=|c_i|/\|c\|_1\}$.
    
    We can obtain $\braket{\overline{O}}_{\mc U\circ\mc N(\rho)}$ by averaging $\textrm{sign}(c_i)\braket{O}_{\mc U\circ\mc M_i(\rho)}$  with probability $p_i$, which is described as follows:
    \begin{enumerate}
        \item We randomly generate $i$ according to the probability distribution $\{p_i\}$.
        \item As each $\mc M_i$ is EB, we have $\mc M_i(\rho) = \sum_j\tr[\rho M_i^j]\sigma^j_i$ with POVM $\{M_i^j\ge 0\}$ satisfying $\sum_jM_i^j=I$.
        \item To get $\braket{O}_{\mc U\circ\mc M_i(\rho)}$, we first measure $\rho$ with the POVM $\{M_i^j\ge 0\}$, obtaining outcome $j$ with probability $\tr[\rho M_i^j]$. Then we prepare $\sigma^j_i$, apply $\mc U$, and measure the observable $O$ to have $\tr[\mc U(\sigma^j_i) O]$. 
        The value $\braket{O}_{\mc U\circ\mc M_i(\rho)}$ can be  evaluated by averaging the measurement results $\tr[\mc U(\sigma^j_i) O]$ over all outcomes $j$. 
        \item Finally, by multiplying $\textrm{sign}(c_i)$ to $\braket{O}_{\mc U\circ\mc M_i(\rho)}$ and averaging over all $\mc M_i$ with probability $p_i$, we recover the normalised average expectation value $\braket{\overline{O}}_{\mc U\circ\mc N(\rho)}$.
    \end{enumerate}
    \item    The target averaged value   $\braket{O}_{\mc U\circ\mc N(\rho)}$ is obtained by multiplying the constant overhead $\|c\|_1$ to $\braket{\overline{O}}_{\mc U\circ\mc N(\rho)}$. 
\end{enumerate}

Suppose we aim to estimate $\braket{O}_{\mc N(\rho)}$ to an additive error $\varepsilon$ with probability $\delta$, we need the number of samples to be
\begin{equation}
	T \propto \frac{\|c\|_1^2}{\varepsilon^2}\log(\delta^{-1}),
\end{equation}
according to the Hoeffding inequality.
Meanwhile, given the channel $\mc N$ itself, the number of samples needed  is $T_0 \propto 1/{\varepsilon^2}\log(\delta^{-1})$. Thus the simulation cost can be quantified by the overhead 
\begin{equation}
	T/T_0 = \|c\|_1^2. 
\end{equation}
We can further minimise the simulation cost over all possible decomposition strategies of Eq.~\eqref{Eq:decomposition}. Denote the positive and negative coefficients of $c_i$ by $c_i^+$ and $c_i^-$, respectively. Then we have 
\begin{equation}
    \mc N = \sum_{i:c_i\ge 0} |c_i^+| \mc M_i - \sum_{i:c_i< 0} |c_i^-| \mc M_i, \, \mc M_i\in \mc F,
\end{equation}
with $\|c\| = \sum_{i:c_i\ge 0}|c_i^+|+\sum_{i:c_i< 0}|c_i^-|$. As the channel is trace preserving, we also have $\sum_{i:c_i\ge 0}|c_i^+|-\sum_{i:c_i< 0}|c_i^-|=1$. Denote $s = \sum_{i:c_i< 0}|c_i^-|$ hence with $\|c\| = 2s+1$, $\mc M = \sum_{i:c_i< 0}|c_i^-| \mc M_i$, and $\mc M' = \sum_{i:c_i> 0}|c_i^+| \mc M_i$, we have
\begin{equation}
    \mc N = (s+1) \mc M - s \mc M', \, \mc M,\mc M'\in \mc F.
\end{equation}
Optimising over all possible decomposition is equivalent to optimising over all $\mc M,\mc M'\in \mc F$, which coincides with the definition of the robustness of memories. Therefore we have
\begin{equation}
	\min T/T_0 = (2\mc R(\mc N)+1)^2.
\end{equation}

Therefore, the robustness of quantum memory quantifies the cost of simulating the memory with free EB memories.


%
%
%

The result of this section can be regarded as an extension of the framework of negativity-based simulators of~\cite{PhysRevLett.115.070501,PhysRevLett.118.090501}, which were recently adapted to the study of general resource theories of states~\cite{seddon_2020}, and applied in other settings to investigate the properties of specific channel resources~\cite{seddon2019quantifying,wang2019quantifying}.

\section{Quantum game for testing the power of a memory}

In this section, we discuss how to use quantum games to test the power of a memory and how the optimal strategy is related to the robustness of memories. 
We consider quantum games similar to Ref.~\cite{memoryResource18prx} by firstly inputting  a
general set of input states $\{\sigma_i\}$ to the channel. In Ref.~\cite{memoryResource18prx}, the output state is measured together with an ancillary set of states, so that the test can be independent of whether the measurement is faithfully implemented. In this work, instead of requiring such a measurement-device-independent feature, we directly measure the output states with a general set of observables $\{O_j\}$ by assuming that the measurement is trusted. We can define a general pay-off function as
\begin{equation}
	\mc P(\mc N, \mc G) = \sum_{i,j} \alpha_{i,j} \tr[\mc N(\sigma_i)O_j],
\end{equation}
with real coefficients $\alpha_{i,j}$, where we use $\mc G = (\{\sigma_i\}, \{O_j\}, \{\alpha_{i,j}\})$ to denote a particular game. 
When the coefficients $\alpha_{i,j}$ are selected randomly,  the pay-off function can be arbitrary, and in particular does not have to be non-negative. Instead, we first constrain ourselves to the case where the coefficients are selected such that the pay-off function is non-negative for any channel, $\mc P(\mc N, \mc G) \geq 0$.
Then the maximal pay-off of the game is 
\begin{equation}\label{Eq:gameresult1}
	 \max_{\mc G\in\mc S_G} \mc P(\mc N, \mc G)= \mc R_G(\mc N) + 1,
\end{equation}
where the maximisation is over all games $\mc G\in \mc S_G$ with $\mc S_G = \{\mc G: \mc P(\mc N, \mc G) \geq 0,\, \mc P(\mc M, \mc G)\le 1, \, \forall\mc N\in\mathrm{CPTP}, \mc M\in \mc F\}$.

To prove it, we first briefly review the duality in conic optimisation. We follow the description of Ref.~\cite{takagi2019general} and refer to the references therein for more details.
Given real complete normed vector spaces $\mc W$ and $\mc W'$, a conic optimisation problem is defined as
\begin{equation}\label{Eq:primeform}
    p =\inf \{\langle A, x\rangle | \Lambda(x)=y, x \in \mathcal{K}\},
\end{equation}
where $A \in \mathcal{W}^{*}, y \in \mathcal{W}^{\prime}$, $\Lambda : \mc{W} \rightarrow \mc{W}^{\prime}$ is a linear function, and $\mathcal{K} \subseteq \mathcal{W}$ is a closed and convex cone.
The dual form of the optimisation is given by
\begin{equation}
d=\sup \left\{\langle Z, y\rangle | A-\Lambda^{*}(Z) \in \mathcal{K}^{*}\right\},
\end{equation}
where
\begin{equation}
\mathcal{K}^{*}=\left\{Y \in \mathcal{W}^{*} |\langle Y, k\rangle \geq 0,\forall k \in \mathcal{K}\right\}.
\end{equation}
The primal and dual problems are equivalent if Slater's condition is satisfied: that is there exists a feasible solution $x$ such that $x$ is in the (relative) interior of $\mc K$.

Now we prove Eq.~\eqref{Eq:gameresult1}.

\begin{proof}
We first write $\mc R_G(\mc N)$ in terms of its Choi state as
\begin{equation}
	\mc R_G(\mc N) = \min\left\{s\ge 0: \exists\mc M\in\mathrm{CPTP}, \mc M'\in \mc F, \, \textrm{s.t.} \, \frac{1}{s+1}\Phi^+_{\mc N}+\frac{s}{s+1}\Phi^+_{\mc M}=\Phi^+_{\mc M'}\right\}.
\end{equation}
This can be equivalently recast as
\begin{equation}
\begin{aligned}
	\mc R_G(\mc N)+1 =& \min\tr(x_1)\\
	\text{s.t. }& x_1-x_2 = \Phi^+_{\mc N},\\
	&x_1\in\mathrm{cone}(\mc F), x_2\in\mathrm{cone}(\mc V),
\end{aligned}
\end{equation}
where we use $\mc V$ ($\mc F$) to represent bipartite (separable) Choi states, while $\mathrm{cone}(\mc F)$ and $\mathrm{cone}(\mc V)$ represent their unnormalised versions. Now define  $\mc W=\mathrm{cone}(\mc V)\oplus \mathrm{cone}(\mc V)$, $\mathcal{W}^{\prime}=\mathcal{V}$, $\mathcal{K}=\mathrm{cone}(\mc F) \oplus \mathrm{cone}(\mc V)$, $\Lambda\left(x_{1} \oplus x_{2}\right)=x_{1}-x_{2}$, $A=I\oplus 0$, $y = \Phi^+_{\mc N}$, then $\mc R_G(\mc N)$ can be represented as the form of Eq.~\eqref{Eq:primeform}. Note that $\mc W^*$ is the set of Hermitian operators. The dual form of the optimisation gives
\begin{equation}
\begin{aligned}
	\mc R_G^d(\mc N)+1 = &\max \tr[W\Phi^+_{\mc N}]\\
	\text{s.t. }&\braket{I\oplus 0-\Lambda^*(W),k}\ge0,\forall k\in\mc K,\\
	&W\in\mc W^*,
\end{aligned}
\end{equation}
where 
\begin{equation}\label{Eq:primecondition}
    \braket{I\oplus 0-\Lambda^*(W),k}\ge0 ,\forall k\in\mc K\Leftrightarrow \tr[x_1]\ge\tr[Wx_1]-\tr[Wx_2],\forall x_1\in\mathrm{cone}(\mc F),\forall x_2\in\mathrm{cone}(\mc V).
\end{equation}
Note that the above condition is further equivalent to 
\begin{equation}\label{Eq:primeconditionequiv}
    \begin{aligned}
        \left(\tr[Wx_1] \le \tr[x_1], \forall x_1\in\mathrm{cone}(\mc F)\right)\wedge\left(\tr[Wx_2]\ge 0,\,  \forall x_2\in\mathrm{cone}(\mc V)\right).
    \end{aligned}
\end{equation}
This is because when we have $\tr[x_1]\ge\tr[Wx_1]-\tr[Wx_2],\,\forall x_1\in\mathrm{cone}(\mc F),\,\forall x_2\in\mathrm{cone}(\mc V)$, we can set $x_2=0$ to have $\tr[Wx_1] \le \tr[x_1], \,\forall x_1\in\mathrm{cone}(\mc F)$. Similarly, we set $x_1=0$ and get $\tr[Wx_2]\ge 0,\,  \forall x_2\in\mathrm{cone}(\mc V)$.
On the other hand, it is straightfward to verify that Eq.~\eqref{Eq:primeconditionequiv} implies Eq.~\eqref{Eq:primecondition}. Therefore, the dual form can be written as
\begin{equation}
\begin{aligned}
	\mc R_G^d(\mc N)+1 = &\max\tr\left[W\Phi^+_{\mc N}\right]\\
	\text{s.t. }&\tr[Wx_1] \le \tr[x_1], \forall x_1\in\mc F\\
	&\tr[Wx_2]\ge 0,\forall x_2\in\mc V\\
	&W\in\mc W^*,
\end{aligned}
\end{equation}
which can be expressed with quantum channels
\begin{equation}
\begin{aligned}
    \mc R_G^d(\mc N)+1= &\max\tr[\Phi^+_{\mc N} W]\\
\text{s.t. }&W^\dag = W,\\
            &\tr[\Phi^+_{\mc M} W]\ge 0,\\
            &\tr[\Phi^+_{\mc M'} W]\le 1,\\
            &\forall \mc M\in\mathrm{CPTP}, \mc M'\in \mc F.
\end{aligned}
\end{equation}
Slater's condition holds as we can choose, say, $W=I/2$. Thus we have  strong duality, $\mc R_G^d(\mc N)=\mc R_G(\mc N)$.

Now we show that the dual form is equivalent to the maximal pay-off of the quantum game. 
We first write the pay-off function in terms of the Choi state of the channel as
    \begin{equation}
	\mc P(\mc N, \mc G) = d\sum_{i,j} \alpha_{i,j} \tr[\Phi^+_{\mc N}(\sigma_i^T\otimes O_j)] = \tr[\Phi^+_{\mc N} W],
\end{equation}
where $\sigma_i^T$ is the transpose of $\sigma_i$ and $W$ is a Hermitian operator
\begin{equation}
    W = d\sum_{i,j} \alpha_{i,j} \sigma_i^T\otimes O_j.
\end{equation}
Denote $\mc P_{\max}(\mc N)=\max_{\mc G\in\mc S_G} \mc P(\mc N, \mc G)$ with $\mc S_G = \{\mc G: \mc P(\mc N, \mc G) \geq 0,\, \mc P(\mc M, \mc G)\le 1, \, \forall\mc N\in\mathrm{CPTP}, \mc M\in \mc F\}$, the optimisation over all games in $\mc S_G$ is equivalent to optimise over all Hermitian operators $W$ that satisfies $\tr[\Phi^+_{\mc M} W]\ge 0$ for all channels $\mc M$ and $\tr[\Phi^+_{\mc M'} W]\le 1$ for entanglement breaking channels. Therefore, we have 
\begin{equation}
\begin{aligned}
    \mc P_{\max}(\mc N)= \mc R_G^d(\mc N) + 1=\mc R_G(\mc N) + 1.
\end{aligned}
\end{equation}



\end{proof}

The robustness $\mc R(\mc N)$ can be understood very similarly in this context. The maximal pay-off of the game is 
\begin{equation}\label{Eq:testRQM2}
\max_{\mc G\in\mc S} \mc P(\mc N, \mc G)=	\mc R(\mc N) + 1,
\end{equation}
where the maximisation is over all games $\mc G\in \mc S$ with $\mc S = \{\mc G: \mc P(\mc M, \mc G)\in  [0,1], \, \forall \mc M\in \mc F\}$. 

\begin{proof}
We follow the proof for the generalised robustness. We first write the robustness measure in the standard form of the primal optimisation problem as
\begin{equation}
\begin{aligned}
	\mc R(\mc N)+1 =& \min\tr(x_1)\\
	\text{s.t. }& x_1-x_2 = \Phi^+_{\mc N},\\
	&x_1\in\mathrm{cone}(\mc F), x_2\in\mathrm{cone}(\mc F),
\end{aligned}
\end{equation}
where we define components in the standard form Eq.~\eqref{Eq:primeform} as $\mc W=\mathrm{cone}(\mc V)\oplus \mathrm{cone}(\mc V)$, $\mathcal{W}^{\prime}=\mathcal{V}$, $\mathcal{K}=\mathrm{cone}(\mc F) \oplus \mathrm{cone}(\mc F)$, $\Lambda\left(x_{1} \oplus x_{2}\right)=x_{1}-x_{2}$, $A=I\oplus 0$, $y = \Phi^+_{\mc N}$.

The dual form of the optimisation gives
\begin{equation}
\begin{aligned}
	\mc R^d(\mc N)+1 = &\max \tr[W\Phi^+_{\mc N}]\\
	\text{s.t. }&\braket{I\oplus 0-\Lambda^*(W),k}\ge0,\forall k\in\mc K,\\
	&W\in\mc W^*,
\end{aligned}
\end{equation}
where 
\begin{equation}\label{Eq:primecondition2}
    \braket{I\oplus 0-\Lambda^*(W),k}\ge0 ,\forall k\in\mc K\Leftrightarrow \tr[x_1]\ge\tr[Wx_1]-\tr[Wx_2],\forall x_1, x_2\in\mathrm{cone}(\mc F).
\end{equation}

Note that the above condition is further equivalent to 
\begin{equation}\label{Eq:primeconditionequiv2}
    \begin{aligned}
        \left(\tr[Wx_1] \le \tr[x_1], \forall x_1\in\mathrm{cone}(\mc F)\right)\wedge\left(\tr[Wx_2]\ge 0,\,  \forall x_2\in\mathrm{cone}(\mc F)\right).
    \end{aligned}
\end{equation}
We can express it with quantum channels as
\begin{equation}
\begin{aligned}
    \mc R^d(\mc N)+1= &\max\tr[\Phi^+_{\mc N} W]\\
    \text{s.t. }&W=W^\dag,\\
                &\tr[\Phi^+_{\mc M'} W]\in [0,1], \forall \mc M'\in \mc F,
\end{aligned}
\end{equation}
which is exactly the maximal pay-off function $\mc P_{\max}(\mc N)=\max_{\mc G\in \mc S} \mc P(\mc N, \mc G)$. Since Slater's condition also holds in this case as we can choose $W=I/2$, we conclude that strong duality holds, thus $ \mc P_{\max}(\mc N)= \mc R(\mc N) + 1.$


\end{proof}
By considering  games $\mc G\in \mc S'$ with $\mc S' = \{\mc G: \mc P(\mc M, \mc G)\ge 0, \, \forall \mc M\in \mc F\}$,
we thus get the result presented in the main text,
\begin{equation}
\max_{\mc G\in\mc S'} \frac{P(\mc N, \mc G)}{\max_{\mc M \in \mc F}\mc P (\mc M, \mc G)} = \mc R(\mc N) + 1.
\end{equation}
 
The characterisation of $\mc R_G$ in terms of performance in nonlocal games can be compared with~\cite{takagi2018operational,takagi2019general}, where this quantity was related to the advantage in discrimination tasks in general resource theories of states and channels. The setting considered in that work is different, however, most importantly since the measurements there are performed on the joint Choi state of channels rather than their output states, making them significantly more difficult to perform in practice.

\section{Numerical and analytical calculation}\label{App:numerical}

\subsection{Relaxing to PPT-inducing channels}
The robustness and generalised robustness measures of memories are both convex optimisation problems due to the convexity property. They can be numerically calculated by focusing on the Choi states as
 
 \begin{equation}
\begin{split}
  \mc R(\mc N)=&\min s \\
  \text{ s.t. } &s\ge 0, \\
                &\mc M,\mc M'\in \mc F,\\ &\frac{1}{s+1}\Phi^+_{\mc N}+\frac{s}{s+1}\Phi^+_{\mc M}=\Phi^+_{\mc M'}.\\
\end{split}
\end{equation}

and
 \begin{equation}
\begin{split}
  \mc R_G(\mc N)=&\min s \\
  \text{ s.t. } &s\ge 0, \\ 
                &\mc M'\in \mc F,\\
                &\mc M \in \mathrm{CPTP},\\
                &\frac{1}{s+1}\Phi^+_{\mc N}+\frac{s}{s+1}\Phi^+_{\mc M}=\Phi^+_{\mc M'} .\\
\end{split}
\end{equation}
 
Instead of directly solving this optimisation problem, we reduce the restrictions that
$\Phi^+_{\mc M},\Phi^+_{\mc M'}$ are separable Choi states to ones that they are positive partial transpose (PPT) Choi states. 
We denote the robustness measures against PPT-inducing channels as $\mc R^*(\mc N)$ and $\mc R^*_G(\mc N)$, defined as
 \begin{equation}
\begin{split}
  \mc R^*(\mc N)=&\min s \\
  \text{ s.t. } &s\ge 0, \\
                &\Phi^+_{\mc M}, \Phi^+_{\mc M'} \in \mathrm{PPT}, \\ &\frac{1}{s+1}\Phi^+_{\mc N}+\frac{s}{s+1}\Phi^+_{\mc M}=\Phi^+_{\mc M'},
\end{split}
\end{equation}
and
 \begin{equation}
\begin{split}
  \mc R^*_G(\mc N)=&\min s \\
  \text{ s.t. } &s\ge 0, \\
                &\Phi^+_{\mc M'} \in \mathrm{PPT}, \\
                &\mc M \in \mathrm{CPTP}, \\
                &\frac{1}{s+1}\Phi^+_{\mc N}+\frac{s}{s+1}\Phi^+_{\mc M}=\Phi^+_{\mc M'}.
\end{split}
\end{equation}
As the set of PPT Choi states is larger than the set of separable Choi states, the optimisation always give lower bounds
\begin{equation}
    \mc R^*(\mc N)\le \mc R(\mc N),\, \mc R^*_G(\mc N)\le \mc R_G(\mc N).
\end{equation}
The bound is tight for channels with qubit inputs and outputs or ones with qubit inputs and qutrit outputs or vice versa.
In general cases, the optimisation with respect to PPT Choi states can be efficiently solved as semidefinite programs. In the case of qubit outputs, we will in fact establish an analytical form of the measures.
 \begin{equation}
\begin{split}
  \mc R^*(\rho_{AB}=\Phi^+_{\mc N})=&\min \tr[M_{AB}] \\
  \text{ s.t. } &s\ge0,\\
                &\tr_B[M_{AB}]=sI_A,\\ 
                &M_{AB},M'_{AB}\ge0,\\ 
                &(M_{AB})^{T_A}, ({M'_{AB}})^{T_A}\ge0,\\ 
                &\rho_{AB}+M_{AB}=M'_{AB},
\end{split}
\end{equation}
and
 \begin{equation}
\begin{split}
  \mc R_G^*(\rho_{AB}=\Phi^+_{\mc N})=&\min \tr[M_{AB}] \\
  \text{ s.t. } &s\ge0,\\
                &\tr_B[M_{AB}]=sI_A,\\ 
                &M_{AB},M'_{AB}\ge0,\\ 
                &({M'_{AB}})^{T_A}\ge0,\\ 
                &\rho_{AB}+M_{AB}=M'_{AB}.
\end{split}
\end{equation}
In Fig.~\ref{Fig:simulation}(a) we show several the robustness of examples channels.

Meanwhile the expression for the robustness of quantum memories $\mc N$ is similar to the robustness of entanglement of Choi matrix $\Phi^+_{\mc N}$, which is defined as

 \begin{equation}
\begin{split}
  RE(\Phi^+_{\mc N})=&\min s \\
  \text{ s.t. } &s\ge 0, \\
                &\sigma,\sigma'\in \mathrm{SEP},\\ &\frac{1}{s+1}\Phi^+_{\mc N}+\frac{s}{s+1}\sigma=\sigma',
\end{split}
\end{equation}
and
 \begin{equation}
\begin{split}
  RE_G(\Phi^+_{\mc N})=&\min s \\
  \text{ s.t. } &s\ge 0, \\
                &\sigma\in\mc D,\\
                &\sigma'\in \mathrm{SEP},\\ &\frac{1}{s+1}\Phi^+_{\mc N}+\frac{s}{s+1}\sigma=\sigma'.\\
\end{split}
\end{equation}
Note that $\Phi^+_{\mc M}, \Phi^+_{\mc M'}$ belong to the subset of  separable states whose partial trace is identity  due to the property of Choi states. Thus 
the robustness of entanglement of the Choi state lower bounds the robustness of quantum memory, $RE(\Phi^+_{\mc N})\le \mc R(\mc N)$ and $RE_G(\Phi^+_{\mc N})\le \mc R_G(\mc N)$. Similarly, we can also relax the separable state set with PPT to efficiently calculate $ RE^*(\Phi^+_{\mc N})$ and $ RE^*_G(\Phi^+_{\mc N})$, so that we have  $RE^*(\Phi^+_{\mc N})\le RE(\Phi^+_{\mc N})$ and $RE^*_G(\Phi^+_{\mc N})\le RE_G(\Phi^+_{\mc N})$.
By randomly choosing quantum channels, one can find that the quantities  $ RE^*(\Phi^+_{\mc N})$ and $\mc R^*(\mc N)$ are different, as shown in Fig.~\ref{Fig:simulation}(b) and (d). 

\begin{figure}[t]
\centering
\minipage{.24\textwidth}
  \includegraphics[width=1\linewidth]{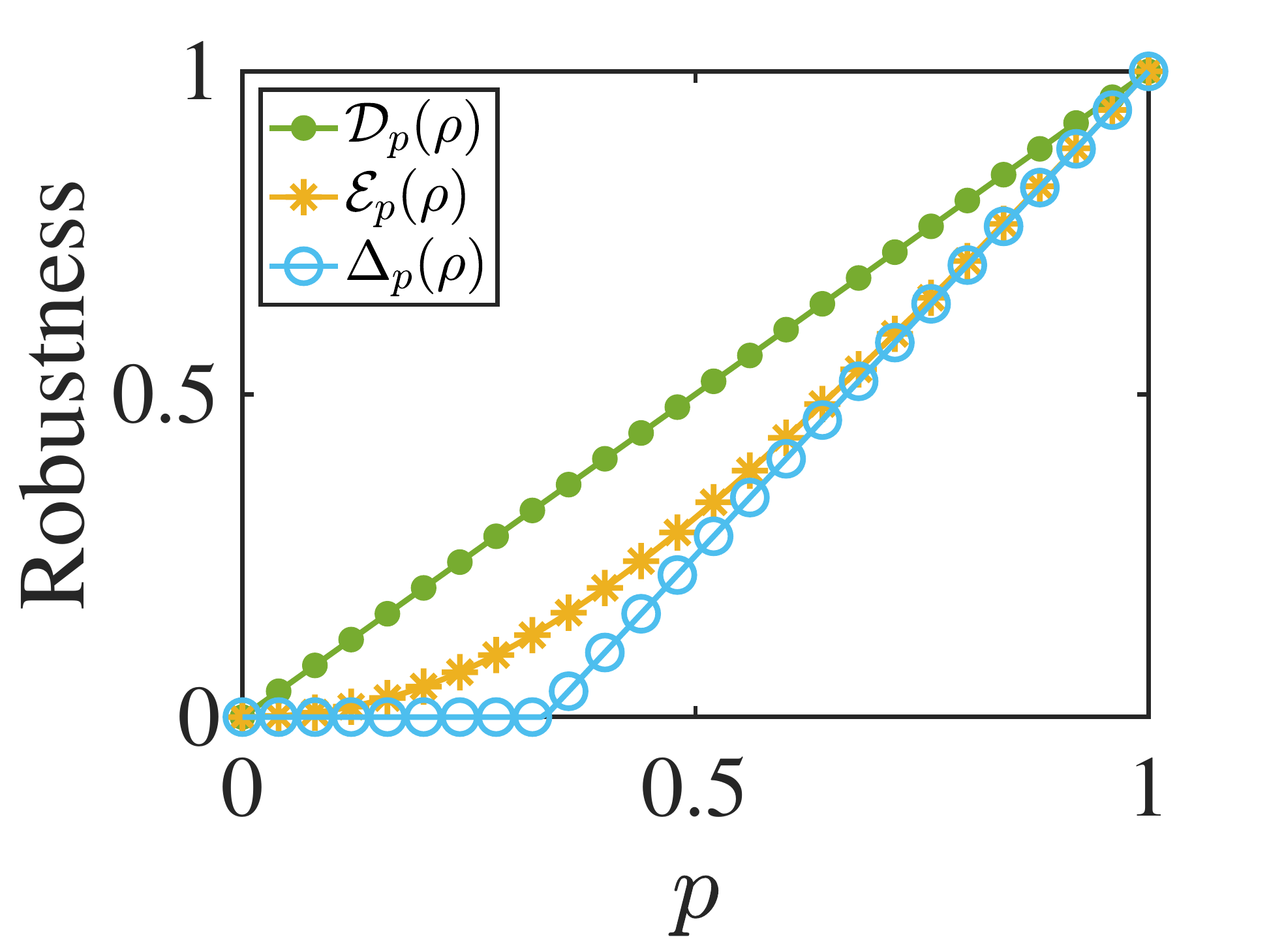}
  (a)\label{fig:awesome_image1}
\endminipage\hfill
\minipage{.24\textwidth}
  \includegraphics[width=1\linewidth]{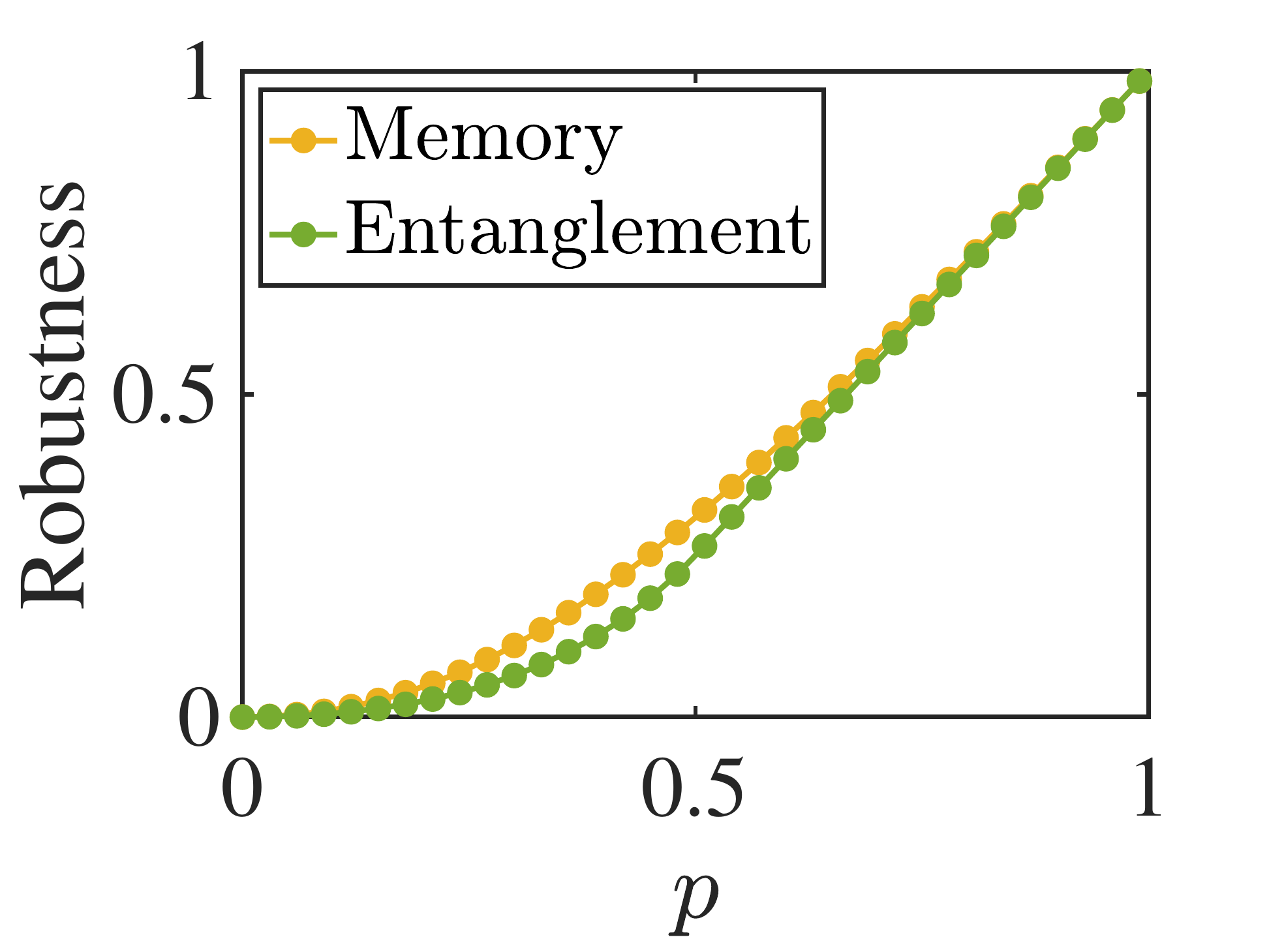}
  (b)\label{fig:awesome_image3}
\endminipage\hfill  
%
\minipage{.24\textwidth}
  \includegraphics[width=1\linewidth]{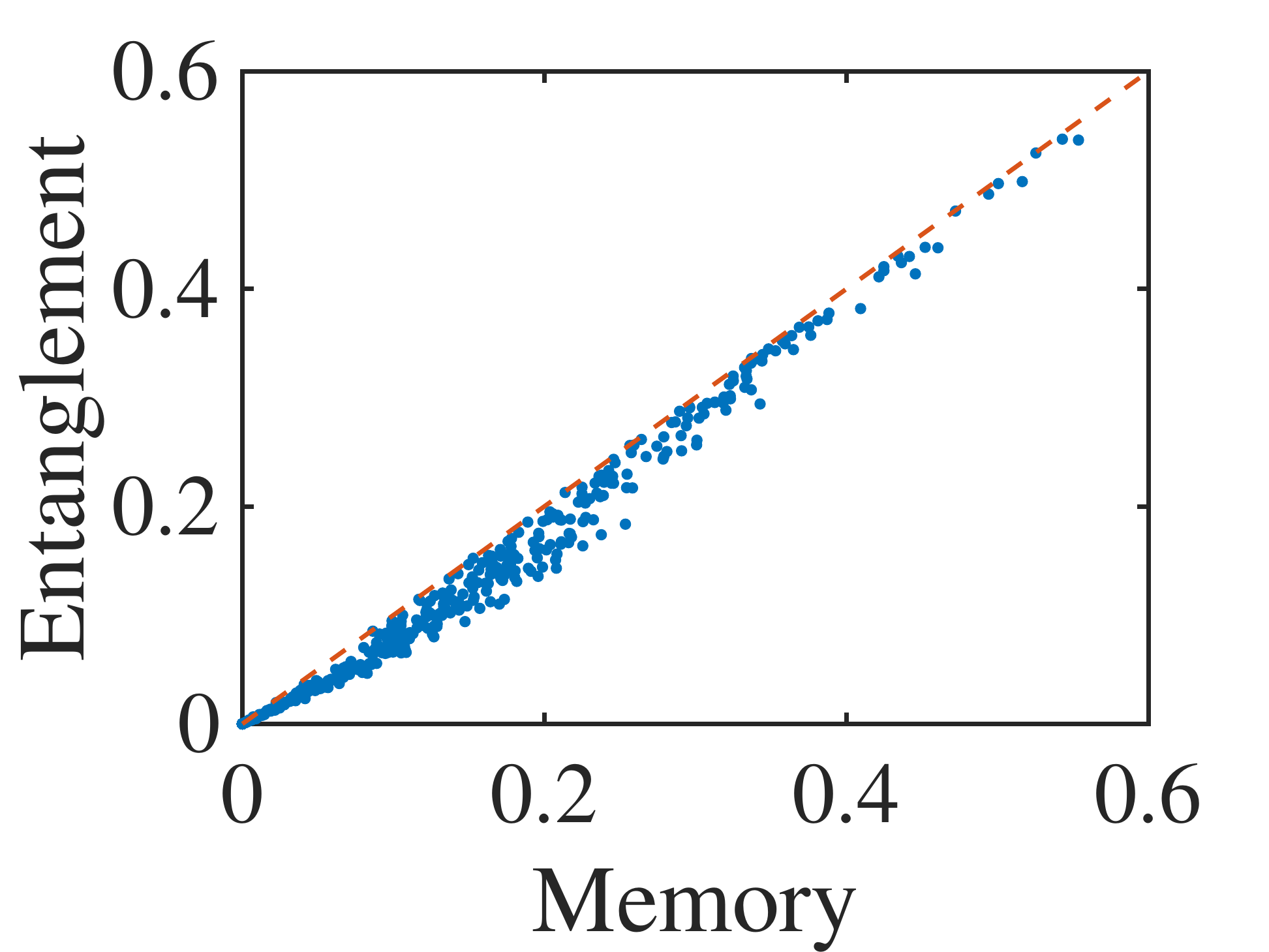}
  (c)\label{fig:awesome_image4}
\endminipage\hfill  
\minipage{.24\textwidth}
  \includegraphics[width=1\linewidth]{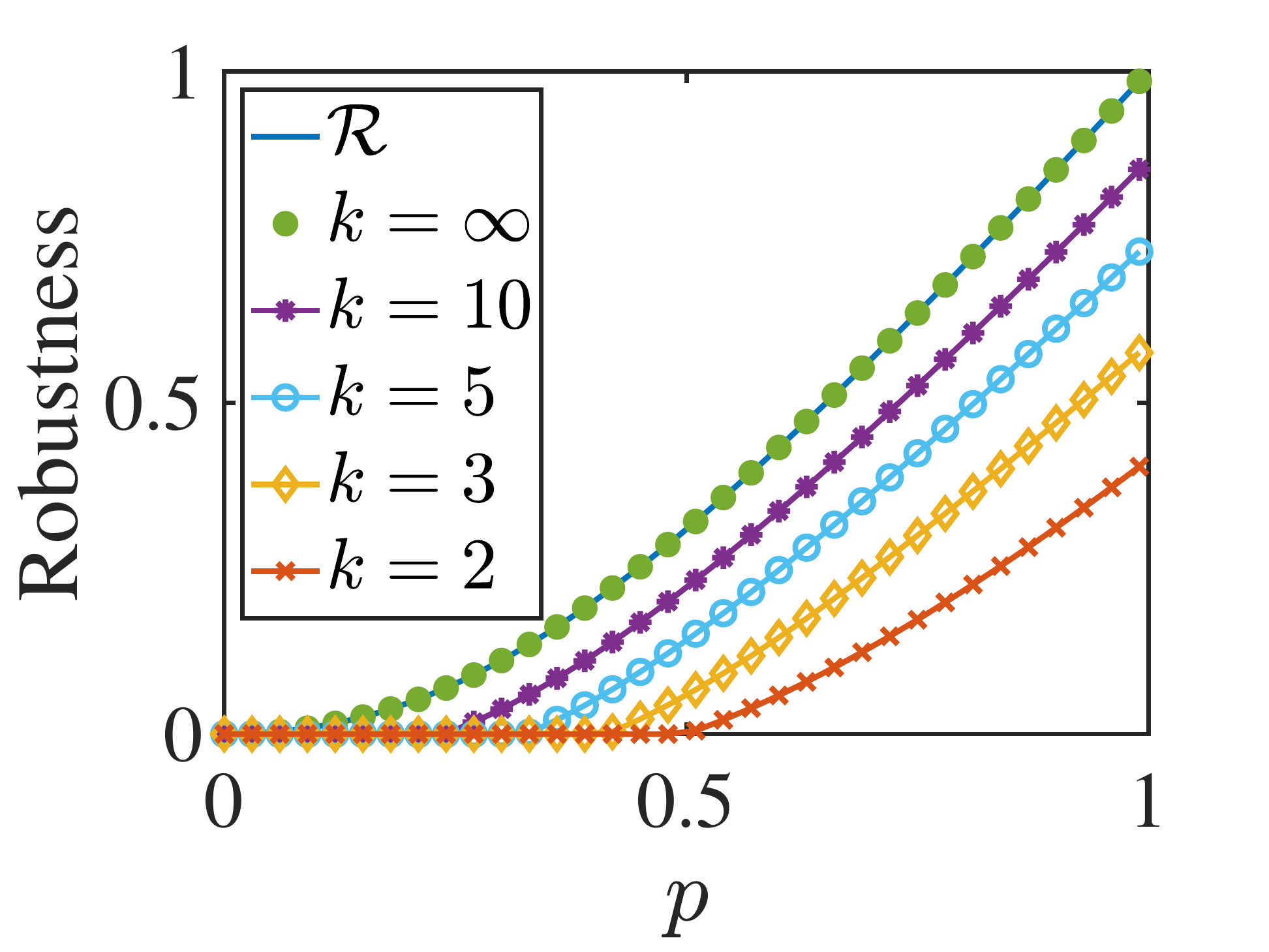}
  (d)\label{fig:awesome_image2}
\endminipage\hfill  
  \caption{
  Numerical calculation of robustness measures of memories and entanglement.
  (a) Robustness of memories with qubit inputs and computational basis $\{\ket{0},\ket{1}\}$ for depolarising channels $\Delta_p(\rho)=p\rho+(1-p)I/2$, stochastic damping channels $\mathcal D_p(\rho) = p\rho+(1-p)|{0}\rangle\langle{0}|$, and erasure channels $\mathcal E_p(\rho) = p\rho+(1-p)|{2}\rangle\langle{2}|$ with $\ket{2}$ orthogonal to $\{\ket{0},\ket{1}\}$. 
%
%
  (b) Comparison between the robustness of memory and the robustness of entanglement of stochastic damping channels. Here the entanglement of a channel refers to the entanglement of the corresponding Choi state. 
  (c) Comparison between the robustness of memories (horizontal axis) and the robustness of entanglement (vertical axis) of random channels with qubit input and output.
  (d) Estimation of the RQM via different moments of the Choi state given in Eq.~\eqref{Eq:momentsEstApp} for stochastic damping channels.  In particular, the estimation is tight with the infinite moment which corresponds to the maximal eigenvalue of the Choi state.
  }\label{Fig:simulation}
\end{figure}

Note that for the numerical examples in the main text, we focus on quantum channels with input dimension $d_{\text{in}}$ and output dimension $d_{\text{out}}$ such that $d_{\text{in}}\times d_{\text{out}}\leq 6$, in which case the SDP relaxation provided here is actually tight, due to the well-known correspondence between SEP and PPT in low dimensions~\cite{horodecki_1996-1}.



\subsection{Examples of quantum game}

The pay-off of a quantum game $\mc G = (\{\sigma_i\}, \{O_j\}, \{\alpha_{i,j}\})$
can be expressed as
\begin{equation}
\begin{aligned}
	\mc P(\mc N, \mc G) &= \sum_{i,j} \alpha_{i,j} \tr[\mc N(\sigma_i)O_j]\\
	 &=\sum_{i,j} \alpha_{i,j} d\tr[\Phi_{\mc N}^+(\sigma_i^T\otimes O_j)]\\
	 &=\tr[\Phi_{\mc N}^+W],
\end{aligned}
\end{equation}
where $\Phi_{\mc N}^+=\mc I\otimes \mc N(\Phi^+)$ is the Choi state of $\mc N$,  $\Phi^+=1/d\sum_{ij}\ket{ii}\bra{jj}$ is the maximally entangled state, $d$ is the dimension of the input system, and the game operator is
\begin{equation}
    W = d\sum_{i,j} \alpha_{i,j} \sigma_i^T\otimes O_j.
\end{equation}
For the robustness of quantum memories, we consider games that satisfy 
\begin{equation}
    \mc P(\mc M, \mc G)\in [0,1], \forall \mc M\in \mc F.
\end{equation}

\subsubsection{Depolarising channel $\Delta_p(\rho)=p\rho+(1-p)I/2$}
For the depolarising channel $\Delta_p(\rho)=p\rho+(1-p)I/2$, the corresponding Choi state is the Werner state,
\begin{equation}
   \Phi_{\Delta_p}^+ = p\Phi^+ + (1-p)I/4;
\end{equation}
The robustness of $\Delta_p(\rho)$ can be obtained by applying the entanglement witness considered in our previous work~\cite{PhysRevLett.112.140506}. Here, we construct the game $\mc G = (\{\sigma_i\}, \{O_j\}, \{\alpha_{i,j}\})$ such that
\begin{equation}
    W = 2\Phi^+.
\end{equation}
Because $2\tr[\Phi^+\sigma]\le1$ for all bipartite qubit separable states $\sigma$, we have $P(\mc M, \mc G)\in  [0,1]$, $\forall \mc M\in \mc F$. We can also show that $W$ is the optimal witness of $\mc R_G(\Delta_p)$ by proving
\begin{equation}
    \mc R_G ( \Delta_p) \le \tr[W\Phi_{\Delta_p}^+]-1 = \frac{3p-1}{2}.
\end{equation}
This is equivalent to finding a separable Choi state $\Phi_{\mc M}^+$ that satisfies
\begin{equation}
    \Phi_{\Delta_p}^+ \le  \frac{3p+1}{2}\Phi_{\mc M}^+.
\end{equation}
This is satisfied by choosing $\Phi_{\mc M}^+=\frac{1}{3}\Phi^+ + \frac{1}{6}I$. 

Now we show how to decompose the witness into input states and measurements. Suppose we choose $\{\sigma_i\}, \{O_j\}$ as
    \begin{equation}\label{Eq:innputsapp}
    \begin{aligned}
    \sigma_0 = \frac{I}{2}, \,   \sigma_1 = \frac{I+\sigma_x}{2}, \, \sigma_2 = \frac{I-\sigma_y}{2},\\
    \sigma_3 = \frac{I+\sigma_z}{2}, \,
    \sigma_4 = \frac{I+(\sigma_x+\sigma_y+\sigma_z)/\sqrt{3}}{2},\\
    \end{aligned}
\end{equation}
with $O_i = \sigma_i^T$, the corresponding coefficients are
\begin{equation}
\label{Eq:coedeplor}
\alpha_{i,j} = \left[
  \begin{array}{ccccc}
    2\sqrt3 & 0 & 0 & 0 & -\sqrt3 \\
    0 & 1 & 0 & 0 & 0\\
    0 & 0 & -1 & 0 & 0 \\
    0 & 0 & 0 & 1  & 0\\
    -\sqrt3 & 0 & 0 & 0  & 0\\
  \end{array}
\right].
\end{equation}
For the qubit depolarising channel $\Delta_p(\rho)=p\rho+(1-p)I/2$, 
the pay-off of this game is
\begin{equation}
    \mc P( \Delta_p, \mc G) = \frac{3p}{2}+\frac{1}{2}.
\end{equation}
This provides a lower bound
\begin{equation}
\begin{aligned}
    \mc R(\Delta_p)&\ge\max \{\mc P(\Delta_p, \mc G)-1, 0\},\\
    &=\left\{
    \begin{array}{cc}
        (3p-1)/2 & p\in [1/3, 1] \\
        0 & p\in [0, 1/3)
    \end{array}
    \right .
\end{aligned}
\end{equation}
The equal sign is always achieved, as verified from the numerical calculation in Fig.~\ref{Fig:simulation}(a). 


\subsubsection{Erasure channels $\mathcal E_p(\rho) = p\rho+(1-p)|{2}\rangle\langle{2}|$}
Considering the erasure channels $\mathcal E_p(\rho) = p\rho+(1-p)|{2}\rangle\langle{2}|$ with $\ket{2}$ orthogonal to $\{\ket{0},\ket{1}\}$, the Choi state is
\begin{equation}
    \Phi_{\mathcal E_p}^+ = p\Phi^+ + (1-p)\frac{I}{2}\otimes \ket{2}\bra{2}.
\end{equation}
Now we choose the game with a witness 
\begin{equation}
    W = 2\Phi^++I\otimes\ket{2}\bra{2},
\end{equation}
and we can prove that $P(\mc M, \mc G)\in  [0,1]$, $\forall \mc M\in \mc F$ by showing $\tr[W\sigma]\in[0,1]$ for all separable states $\sigma$. We sketch the proof here.
Consider a projective measurement $\{P_{01}=\ket{0}\bra{0}+\ket{1}\bra{1}, P_2=\ket{2}\bra{2}\}$ on the second system, then 
\begin{equation}
\begin{aligned}
\tr[W\sigma] &= \tr[(P_{01}+P_2)W(P_{01}+P_2)\sigma],\\
&=\tr[(P_{01}+P_2)(2\Phi^++I\otimes\ket{2}\bra{2})(P_{01}+P_2)\sigma],\\
&=\tr[P_{01}2\Phi^+P_{01}\sigma]+\tr[P_{2}I\otimes\ket{2}\bra{2}P_{2}\sigma],\\
&=\tr[2\Phi^+P_{01}\sigma P_{01}]+\tr[I\otimes\ket{2}\bra{2}P_{2}\sigma P_{2}],\\
&=p_{01}\tr[2\Phi^+\sigma_{01}]+p_2\tr[I\otimes\ket{2}\bra{2}\sigma_2],\\
&=p_{01}\tr[2\Phi^+\sigma_{01}]+p_2,\\
&\le 1.
\end{aligned}
\end{equation}
Here $p_{01}=\tr[P_{01}\sigma_{01}P_{01}]$, $p_{2}=\tr[P_{2}\sigma_{2}P_{2}]$, $\sigma_{01}=P_{01}\sigma P_{01}/p_{01}$, and $\sigma_{2}=P_{2}\sigma P_{2}/p_{2}$. The last line follows from $\tr[2\Phi^+\sigma_{01}]\le 1$ and $p_{01}+p_2=1$. 

As $W\ge0$, $\tr[W\sigma]\ge 0$. We can check that
\begin{equation}
    \tr[W\Phi_{\mathcal E_p}^+] = p+1,
\end{equation}
which is consistent with the numerical calculation, 
\begin{equation}
\begin{aligned}
    \mc R_G (\mathcal E_p)&=p.
\end{aligned}
\end{equation}
We can also prove that the estimate from the game witness is tight. That is, we need to find a separable Choi state $\Phi_{\mc M}^+$ that satisfies
\begin{equation}
    \Phi_{\mc E_p}^+ \le  (p+1)\Phi_{\mc M}^+.
\end{equation}
Here, we choose $\Phi_{\mc M}^+=q(\frac{1}{3}\Phi^++ \frac{1}{6}I_2\otimes I_2)+ (1-q)\frac{I}{2}\otimes \ket{2}\bra{2}$ with $I_2=\ket{0}\bra{0}+\ket{1}\bra{1}$. To satisfy the above inequality, we choose $q = \frac{2p}{p+1}$ when $p\ge1/3$ and $q = \frac{1-p}{p+1}$ when $p\le 1/3$.

The decomposition of the witness is similar to the one for the depolarising channel. The only difference is that we need to consider an additional measurement to take into account of the term $I\otimes \ket{2}\bra{2}$. That is, with the same inputs given in Eq.~\eqref{Eq:innputsapp}, we consider the game with input states
\begin{equation}\label{Eq:innputsera}
    \begin{aligned}
    \sigma_0 = \frac{I}{2}, \,   \sigma_1 = \frac{I+\sigma_x}{2}, \, \sigma_2 = \frac{I-\sigma_y}{2},\\
    \sigma_3 = \frac{I+\sigma_z}{2}, 
    \sigma_4 = \frac{I+(\sigma_x+\sigma_y+\sigma_z)/\sqrt{3}}{2},\\
    \end{aligned}
\end{equation}
measurements 
\begin{equation}
    \begin{aligned}
    O_0 = \frac{I_2}{2}, \,   O_1 = \frac{I_2+\sigma_x}{2}, \, O_2 = \frac{I_2+\sigma_y}{2},\\
    O_3 = \frac{I_2+\sigma_z}{2}, 
    O_4 = \frac{I_2+(\sigma_x-\sigma_y+\sigma_z)/\sqrt{3}}{2}, O_5 = \ket{2}\bra{2}\\
    \end{aligned}
\end{equation}
and coefficients
\begin{equation}\label{Eq:coeferas}
\alpha_{i,j} = \left[
  \begin{array}{cccccc}
    2\sqrt3 & 0 & 0 & 0 & -\sqrt3 & 1\\
    0 & 1 & 0 & 0 & 0 & 0\\
    0 & 0 & -1 & 0 & 0 & 0\\
    0 & 0 & 0 & 1  & 0& 0\\
    -\sqrt3 & 0 & 0 & 0  & 0& 0\\
  \end{array}
\right].
\end{equation}

\subsubsection{stochastic damping channels $\mathcal D_p(\rho) = p\rho+(1-p)|{0}\rangle\langle{0}|$}

Note that the robustness of quantum memories of the two above examples are equal to the robustness of entanglement of the corresponding Choi states. For the stochastic damping channel, we found them different as shown in numerical examples of Fig.~\ref{Fig:simulation}(c) of the main text.


The Choi state of the stochastic damping channels $\mathcal D_p(\rho) = p\rho+(1-p)\ketbra{0}{0}$ is 
\begin{equation}
    \Phi_{\mc D_p }^+ = p\Phi^++(1-p)\frac{I}{2}\otimes\ketbra{0}{0}.
\end{equation}
Suppose the spectral decomposition of this matrix is 
\begin{equation}
\Phi_{\mc D_p }^+=\lambda_0 \psi_0+ \lambda_1 \psi_1+ \lambda_2 \psi_2 + \lambda_3 \psi_3,
\end{equation}
where $\lambda_0 \ge \lambda_1\ge \lambda_2\geq \lambda_3$ and $\psi_i$ is the density matrix of eigenstate. We can compute that
\begin{equation}
    \lambda_0=\frac{1+p+\sqrt{1-2p+5p^2}}{4}
\end{equation}
as well as $\ket{\psi_0}=\alpha\ket{00}+\beta\ket{11}$ with real coefficients $\alpha$ and $\beta$ depending on $p$,
\begin{equation}
    \begin{aligned}
    \alpha&=\frac{1-p+\sqrt{1-2p+5p^2}}{ \sqrt{\left(1-p+\sqrt{1-2p+5p^2}\right)^2+4p^2}},\\
    \beta&=\frac{2p}{\sqrt{\left(1-p+\sqrt{1-2p+5p^2}\right)^2+4p^2}}.
    \end{aligned}
\end{equation}

Now construct the game with a witness $W$ defined by
\begin{equation}
W=2\psi_0.
\end{equation}
Note that the witness does not satisfy $\tr[W\sigma]\in[0,1]$ for all separable states $\sigma$ as $\max_{\sigma\in\mathrm{SEP}}\tr[W\sigma]=2\max\{|\alpha|^2,|\beta|^2\}$ can be larger than 1. However, Choi states of EB channels are only a subset of separable states, and we can indeed show  that $\tr[W\sigma]\in[0,1]$ for all separable Choi states $\sigma$. That is, for all separable states $\sigma_{AB}$ satisfying $\tr_B[\sigma_{AB}]=I_A/2$, we want to prove
\begin{equation}
    \tr[\psi_0\sigma_{AB}]\le 1/2.
\end{equation}

We prove it by considering a stronger scenario for general EB channels with input dimension $d$ and we also consider an optimisation over all possible pure states $\psi_0$,
\begin{equation}
    \max_{\psi_0}\tr[\psi_0\sigma_{AB}]\le 1/d.
\end{equation}
This is equivalent to the following Lemma.
\begin{lemma}\label{Lemmainfty}
For any EB channel with input dimension $d$, the maximal eigenvalue of its Choi state $\sigma_{AB}$ is upper bounded by $1/d$,
\begin{equation}
    \max\textrm{eig}(\sigma_{AB})\le 1/d.
\end{equation}
\end{lemma}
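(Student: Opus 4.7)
The plan is to exploit two features of the Choi state $\sigma_{AB}$ of any EB channel with input dimension $d$: first, that it is a separable bipartite state, and second, that its marginal on $A$ is forced to be maximally mixed, $\tr_B[\sigma_{AB}] = I_A/d$, by the trace-preserving condition. The goal is to show that for every pure state $|\psi\rangle_{AB}$, one has $\langle\psi|\sigma_{AB}|\psi\rangle \leq 1/d$, which immediately bounds the largest eigenvalue.

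First I would write an EB Choi state in the standard separable form $\sigma_{AB} = \sum_i p_i\, |\alpha_i\rangle\langle\alpha_i|_A \otimes |\beta_i\rangle\langle\beta_i|_B$, where $|\alpha_i\rangle$, $|\beta_i\rangle$ are unit vectors and $\{p_i\}$ is a probability distribution. Tracing out $B$ yields $\sum_i p_i |\alpha_i\rangle\langle\alpha_i| = I_A/d$, a relation I plan to invoke at the very end of the estimate.

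Next I would pick an arbitrary normalized $|\psi\rangle_{AB}$ and use its Schmidt decomposition $|\psi\rangle = \sum_k \sqrt{\lambda_k}\,|a_k\rangle|b_k\rangle$ with orthonormal $\{|a_k\rangle\}$, $\{|b_k\rangle\}$ and $\sum_k \lambda_k = 1$. Then
\begin{equation}
\langle\psi|\sigma_{AB}|\psi\rangle = \sum_i p_i\,\bigl|\langle\psi|\alpha_i\otimes\beta_i\rangle\bigr|^2,
\end{equation}
and each inner product expands as $\langle\psi|\alpha_i\otimes\beta_i\rangle = \sum_k \sqrt{\lambda_k}\,\langle a_k|\alpha_i\rangle\langle b_k|\beta_i\rangle$. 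The key maneuver is a Cauchy--Schwarz bound
\begin{equation}
\bigl|\langle\psi|\alpha_i\otimes\beta_i\rangle\bigr|^2 \leq \Bigl(\sum_k \lambda_k |\langle a_k|\alpha_i\rangle|^2\Bigr)\Bigl(\sum_k |\langle b_k|\beta_i\rangle|^2\Bigr) \leq \sum_k \lambda_k |\langle a_k|\alpha_i\rangle|^2,
\end{equation}
using orthonormality of $\{|b_k\rangle\}$ and unit norm of $|\beta_i\rangle$ in the second factor.

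Finally I would push the $p_i$ sum inside and apply the marginal identity:
\begin{equation}
\langle\psi|\sigma_{AB}|\psi\rangle \leq \sum_k \lambda_k \langle a_k|\Bigl(\sum_i p_i |\alpha_i\rangle\langle\alpha_i|\Bigr)|a_k\rangle = \sum_k \lambda_k \langle a_k|\tfrac{I_A}{d}|a_k\rangle = \tfrac{1}{d}.
\end{equation}
Since this holds for every unit $|\psi\rangle$, the operator inequality $\sigma_{AB} \leq I_{AB}/d$ follows, giving the claimed bound on the maximal eigenvalue. The only mildly subtle step is the Cauchy--Schwarz application, where it is essential to group the $\sqrt{\lambda_k}$ with $\langle a_k|\alpha_i\rangle$ (rather than with $\langle b_k|\beta_i\rangle$) so that the marginal condition on the $A$ system can be invoked at the end; grouping the other way would only produce the weaker pointwise bound $|\langle a_k|\alpha_i\rangle|^2 \leq 1$ and would fail to yield the $1/d$ factor.
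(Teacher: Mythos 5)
Your proof is correct, but it follows a genuinely different route from the paper's. The paper evaluates $\max\operatorname{eig}(\sigma_{AB})=\lim_{n\to\infty}\left(\tr[\sigma_{AB}^n]\right)^{1/n}$, expands $\tr[\sigma_{AB}^n]$ over products of the pure components, discards the $B$-system overlap factors $\tr[\phi_{i_1}\cdots\phi_{i_n}]\le 1$, and then uses $\sum_i p_i\psi_i=I/d$ to get $1/d^{(n-1)/n}\to 1/d$; this moment-based route is what feeds directly into the paper's family of measurable lower bounds $\tr[(\Phi^+_{\mc M})^k]\le d^{1-k}$. You instead prove the operator inequality $\sigma_{AB}\le I_{AB}/d$ head-on: Schmidt-decompose an arbitrary unit vector, apply Cauchy--Schwarz with the weights $\sqrt{\lambda_k}$ attached to the $A$-overlaps, bound the $B$-factor by Bessel's inequality, and invoke the marginal condition $\sum_i p_i\ketbra{\alpha_i}{\alpha_i}=I_A/d$. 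Your grouping remark is on point --- the $B$ marginal of a Choi state is unconstrained, so the weights must sit on the $A$ side. What your argument buys is the stronger operator statement (which is exactly the reduction-criterion inequality $\Phi_{\mc M}\le \frac{1}{d_A}I_{AB}$ that the paper invokes from the literature in its proof of Theorem 4), and it sidesteps the mildly delicate step in the paper's proof where the discarded overlap traces $\tr[\phi_{i_1}\cdots\phi_{i_n}]$ are in general complex and require absolute-value bookkeeping; what it does not give directly is the finite-$k$ moment hierarchy, which is the paper's reason for working with $\left(\tr[\sigma_{AB}^n]\right)^{1/n}$ in the first place.
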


\begin{proof}

The maximal eigenvalue of $\sigma_{AB}$ can be obtained by considering its Schatten $\infty$-norm. 
For an EB channel, its Choi state can always be expressed as
\begin{equation}
	\sigma_{AB} = \sum_i p_i\ketbra{\psi_i}{\psi_i}\otimes\ketbra{\phi_i}{\phi_i},
\end{equation}
with $\sum_i p_i \psi_i = I/d$, $\sum_i p_i = 1$, and $p_i\ge 0$.
Therefore we have
\begin{equation}
    \begin{aligned}
    \max\textit{eig}(\sigma_{AB}) = & \lim_{n\rightarrow\infty}\left(\tr[\sigma_{AB}^n]\right)^{\frac{1}{n}},\\
    =&\lim_{n\rightarrow\infty}\left(\tr\left[\sum_i p_{i_1}p_{i_2}\dots p_{i_n}\psi_{i_1}\psi_{i_2}\dots\psi_{i_n}\otimes\phi_{i_1}\phi_{i_2}\dots\phi_{i_n}\right]\right)^{\frac{1}{n}},\\
    =&\lim_{n\rightarrow\infty}\left(\sum_i p_{i_1}p_{i_2}\dots p_{i_n}\tr\left[\psi_{i_1}\psi_{i_2}\dots\psi_{i_n}\right]\tr\left[\phi_{i_1}\phi_{i_2}\dots\phi_{i_n}\right]\right)^{\frac{1}{n}},\\
    \le&\lim_{n\rightarrow\infty}\left(\sum_i p_{i_1}p_{i_2}\dots p_{i_n}\tr\left[\psi_{i_1}\psi_{i_2}\dots\psi_{i_n}\right]\right)^{\frac{1}{n}},\\
    =&\lim_{n\rightarrow\infty}\left(\tr\left[\left(\sum_ip_i\psi_i\right)^n\right]\right)^{\frac{1}{n}},\\
    =&\lim_{n\rightarrow\infty}\left(\frac{1}{d^{n-1}}\right)^{\frac{1}{n}},\\
    =&\frac{1}{d}.
    \end{aligned}
\end{equation}
Here the inequality follows from $\tr\left[\phi_{i_1}\phi_{i_2}\dots\phi_{i_n}\right]\le 1$.

\end{proof}

The decomposition of $\psi_0$ can be expressed as

\begin{equation}
\begin{aligned}
\psi_0=&4\sqrt{2}\alpha\beta \left(\frac{I}{2}\right)^{\otimes 2}+ \alpha^2\left(\frac{I+\sigma_z}{2}\right)^{\otimes 2} +\beta^2\left(\frac{I-\sigma_z}{2}\right)^{\otimes 2}+ 2\alpha\beta \left[\left(\frac{I+\sigma_x}{2}\right)^{\otimes 2}-\left(\frac{I+\sigma_y}{2}\right)^{\otimes 2}\right]\\
&-2\sqrt{2}\alpha\beta\left[ \left(\frac{I+\frac{\sigma_x-\sigma_y}{\sqrt{2}}}{2}\right)\otimes \left(\frac{I}{2}\right)+\left(\frac{I}{2}\right)\otimes \left(\frac{I+\frac{\sigma_x-\sigma_y}{\sqrt{2}}}{2}\right)\right].
\end{aligned}
\end{equation}
Thus we can choose 
\begin{equation}\label{Eq:inputsampli}
    \begin{aligned}
    O_0 = \frac{I}{2}, \,   O_1 = \frac{I+\sigma_x}{2}, \, O_2 = \frac{I+\sigma_y}{2},\,
    O_3 = \frac{I+\sigma_z}{2}, \\
    O_4 = \frac{I-\sigma_z}{2}, \, O_5 = \frac{I+(\sigma_x-\sigma_y)/\sqrt{2}}{2},\\
    \end{aligned}
\end{equation}
and  $ \sigma_i=O_i^T$with coefficients
\begin{equation}\label{Eq:coefampl}
\alpha_{i,j} = \left[
  \begin{array}{cccccc}
    4\sqrt{2}\alpha\beta & 0 & 0 & 0 & 0 & -2\sqrt{2}\alpha\beta\\
    0 & 2\alpha\beta & 0 & 0 & 0 & 0\\
    0 & 0 & -2\alpha\beta & 0 & 0 & 0\\
    0 & 0 & 0 & \alpha^2  & 0& 0\\
    0 & 0 & 0 & 0  & \beta^2& 0\\
  -2\sqrt{2}\alpha\beta & 0 & 0 & 0  & 0& 0\\
  \end{array}
\right].
\end{equation}
And finally we obtain the lower bound
\begin{equation}
    \mc R(\mc D_p)\geq 2\lambda_0-1=\frac{\sqrt{1-2p+5p^2}+p-1}{2}.
\end{equation}

\subsection{Tight lower bound via moments of Choi states}
Consider the Choi state $\Phi_{\mc N}^+$ of a channel $\mc N$, we can either lower bound the robustness via a witness measurement or via purity measurement. We prove Lemma 1 in the main text.

\begin{lemma}\label{lemma4}
For any EB channel $\mc M$ with input dimension $d$ and $k=0,1,\dots,\infty$, we have
\begin{equation}\label{Eq:firsthalfApp}
    \tr\left[\left(\Phi_{\mc M}^+\right)^k\right]\le {\frac{1}{d^{k-1}}}.
\end{equation}
For any channel $\mc N$ with input dimension $d$, its RQM can be lower bounded by 
\begin{equation}\label{Eq:momentsEstApp}
    \mc R(\mc N)\ge\mc R_G(\mc N) \ge d^{\frac{k-1}{k}}\left(\tr\left[\left(\Phi_{\mc N}^+\right)^k\right]\right)^\frac{1}{k}-1.
\end{equation}
\end{lemma}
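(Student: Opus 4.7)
The plan is to proceed in two stages: first prove the moment bound for EB Choi states (the first displayed inequality), then leverage it together with the definitional decomposition of the generalized robustness to deduce the witness inequality for an arbitrary channel $\mc N$. The relation $\mc R(\mc N) \ge \mc R_G(\mc N)$ is immediate from the definitions, since $\mc R_G$ minimises the same quantity over a strictly larger class of admissible mixing channels.

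For the first inequality I would simply invoke Lemma~\ref{Lemmainfty}, already established above, which gives $\|\Phi_{\mc M}^+\|_\infty \le 1/d$ for every EB channel $\mc M$ with input dimension $d$. Combined with the trace normalisation $\tr[\Phi_{\mc M}^+] = 1$ of any Choi state, the elementary bound
\begin{equation*}
    \tr[X^k] = \sum_i \lambda_i(X)^k \le \|X\|_\infty^{k-1}\sum_i \lambda_i(X) = \|X\|_\infty^{k-1}\tr[X],
\end{equation*}
valid for any positive semidefinite $X$, yields $\tr[(\Phi_{\mc M}^+)^k] \le 1/d^{k-1}$ uniformly in $k$. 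The endpoint cases $k=0$ and $k=\infty$ are trivial or reduce directly to Lemma~\ref{Lemmainfty}.

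For the second inequality I would exploit the definitional decomposition of $\mc R_G$: let $s = \mc R_G(\mc N)$ be attained by some $\mc M \in \mathrm{CPTP}$ and $\mc M' \in \mc F$, so that $\Phi_{\mc N}^+ + s\Phi_{\mc M}^+ = (1+s)\Phi_{\mc M'}^+$. Since $\Phi_{\mc M}^+$ is positive semidefinite, dropping it gives $0 \le \Phi_{\mc N}^+ \le (1+s)\Phi_{\mc M'}^+$ in the PSD order. Weyl's monotonicity theorem then delivers $\lambda_i(\Phi_{\mc N}^+) \le (1+s)\lambda_i(\Phi_{\mc M'}^+)$ for every $i$; summing $k$-th powers and invoking the first part of the lemma, $\tr[(\Phi_{\mc N}^+)^k] \le (1+s)^k \tr[(\Phi_{\mc M'}^+)^k] \le (1+s)^k/d^{k-1}$. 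Taking $k$-th roots and rearranging yields the advertised bound.

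The one subtlety worth flagging is that for $k > 1$ the PSD inequality $A \le B$ does \emph{not} generally imply $A^k \le B^k$, so one cannot literally raise the two sides of $\Phi_{\mc N}^+ \le (1+s)\Phi_{\mc M'}^+$ to the $k$-th power. The argument is rescued by the weaker but sufficient trace-of-$k$th-power comparison, which is exactly what Weyl's eigenvalue monotonicity provides. Apart from this, the proof is a clean concatenation of Lemma~\ref{Lemmainfty}, a standard Schatten-norm estimate, and the defining decomposition of $\mc R_G$.
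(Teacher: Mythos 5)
Your proof is correct and follows essentially the same route as the paper: the paper likewise obtains the moment bound for EB Choi states from the separable-decomposition argument behind Lemma~\ref{Lemmainfty} (you instead invoke its conclusion $\max\operatorname{eig}(\Phi^+_{\mc M})\le 1/d$ together with $\tr[X^k]\le\|X\|_\infty^{k-1}\tr[X]$, an equally valid shortcut for $k\ge 1$), and then inserts the optimal decomposition $\Phi^+_{\mc N}=(s+1)\Phi^+_{\mc M'}-s\Phi^+_{\mc M}$ into a trace-of-$k$th-power comparison exactly as you do. Your explicit appeal to Weyl's eigenvalue monotonicity cleanly justifies the step the paper leaves implicit when it asserts $\tr[(\Phi^+_{\mc N})^k]\le(s+1)^k\tr[(\Phi^+_{\mc M'})^k]$.
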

\begin{proof}
The first half of this Lemma, Eq.~\eqref{Eq:firsthalfApp}, can be proven by following the proof of Lemma~\ref{Lemmainfty}.

For the second half, we show that $\mc R_G(\mc N) \ge d^{\frac{k-1}{k}}\left(\tr\left[\left(\Phi_{\mc N}^+\right)^k\right]\right)^\frac{1}{k}-1$. Suppose the optimal decomposition is
$$\Phi^+_{\mc N}=(s+1)\Phi^+_{\mc M'}-s\Phi^+_{\mc M},$$
with $\mc M'\in\mathrm{EB}$ and $\mc M\in\mathrm{CPTP}$.
Then
\begin{equation}
	\begin{aligned}
		\tr\left[\left(\Phi_{\mc N}^+\right)^k\right] \le(s+1)^k\tr\left[\left(\Phi^+_{\mc M'}\right)^k\right]
		\le \frac{(s+1)^k}{d^{k-1}},
	\end{aligned}
\end{equation}
and we obtain a lower bound for the robustness
\begin{equation}
	\mc R_G(\mc N) \ge d^{\frac{k-1}{k}}\left(\tr\left[\left(\Phi_{\mc N}^+\right)^k\right]\right)^\frac{1}{k}-1.
\end{equation}
\end{proof}
The comparison of the lower bounds with different moments of Choi states is shown in Fig.~\ref{Fig:simulation}(d).

We will show that the bound obtained by taking $k \to \infty$ is in fact tight for low-dimensional channels, and furthermore exactly characterises the PPT robustness $\mc R^*$ for all channels with qubit output.

\begin{theorem}\label{thm:rob_lower_bound_equal}
Consider a channel ${\mc N}: A \to B$ such that $d_A \leq 3$ and $d_B = 2$. Then
\begin{equation}\begin{aligned}
  {\mc R}({\mc N}) = {\mc R}_G({\mc N}) = \max\{ 0, \; d_A  \max \operatorname{eig} \Phi^+_{\mc N} - 1 \}.
\end{aligned}\end{equation}
More generally, for any channel such that $d_B = 2$, the PPT robustness satisfies
\begin{equation}\begin{aligned}
  {\mc R}^*({\mc N}) = {\mc R}_G^*({\mc N}) = \max\{ 0, \; d_A  \max \operatorname{eig} \Phi^+_{\mc N} - 1 \}.
\end{aligned}\end{equation}
\end{theorem}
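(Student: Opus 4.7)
The plan is to establish matching lower and upper bounds on the (generalized) robustness, with the PPT-relaxed version $\mc R^*, \mc R_G^*$ being the main workhorse and the small-dimension statement obtained by tightening PPT to separable. The equalities $\mc R = \mc R_G$ and $\mc R^* = \mc R_G^*$ drop out automatically by sandwiching $\mc R_G \leq \mc R$ (immediate from the definitions) with a lower bound on $\mc R_G$ matching an upper bound on $\mc R$.

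The lower bound is almost immediate: taking $k \to \infty$ in Lemma~\ref{lemma4} yields $\mc R_G(\mc N) \geq d_A \max\operatorname{eig}(\Phi^+_{\mc N}) - 1$, since $(\tr[\rho^k])^{1/k} \to \max\operatorname{eig}(\rho)$. Combined with non-negativity, this handles the EB case. For the PPT analog, the corresponding lower bound requires that every PPT Choi state $\sigma$ on $\mathbb{C}^{d_A}\otimes\mathbb{C}^2$ obeys $\sigma \leq I_{AB}/d_A$, a PPT version of Lemma~\ref{Lemmainfty}. I would establish this by writing $\sigma$ in $2\times 2$ block form with blocks $X_{ij} \in \mathrm{End}(\mathbb{C}^{d_A})$ satisfying the Choi marginal constraint $X_{00} + X_{11} = I_A/d_A$, then exploiting both $\sigma \geq 0$ and $\sigma^{T_B} \geq 0$ to deduce the operator bound block-wise.

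For the matching upper bound, I would exhibit an explicit pseudomixture $\Phi^+_{\mc N} = (1+s)\sigma' - s\sigma$ saturating $s = d_A \lambda - 1$, where $\lambda = \max\operatorname{eig}(\Phi^+_{\mc N})$; the interesting regime is $\lambda > 1/d_A$, otherwise the bound reduces to $0$. A natural candidate is
\[
\sigma = \frac{\lambda I_{AB} - \Phi^+_{\mc N}}{\lambda d_A d_B - 1}, \qquad \sigma' = \frac{\Phi^+_{\mc N} + s\sigma}{1+s},
\]
for which direct computation confirms that both are valid Choi states (PSD, trace one, marginal $I_A/d_A$) and that $\max\operatorname{eig}\sigma, \max\operatorname{eig}\sigma' \leq 1/d_A$ whenever $d_B = 2$. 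The remaining, and most delicate, step is to verify that $\sigma, \sigma'$ are PPT; the qubit-output hypothesis should enter here, since for $d_B = 2$ the partial transposition has a manageable $2\times 2$ block structure, and the eigenvalue bound $\leq 1/d_A$ together with the Choi marginal constraint should force $\sigma^{T_B}, \sigma'^{T_B} \geq 0$. This yields $\mc R^*(\mc N) \leq d_A \lambda - 1$ and, combined with the lower bound, the PPT equality $\mc R^*(\mc N) = \mc R_G^*(\mc N) = \max\{0, d_A \lambda - 1\}$.

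Finally, to upgrade this PPT statement to the separable (EB) statement in the regime $d_A \leq 3$, $d_B = 2$, I would invoke the Horodecki theorem that PPT coincides with separable in $2\times 2$ and $2\times 3$ systems. Under this hypothesis, the PPT Choi states $\sigma, \sigma'$ above are automatically separable and hence correspond to EB channels, so the PPT decomposition is simultaneously an EB decomposition, giving $\mc R(\mc N) \leq \mc R^*(\mc N)$; combined with the trivial reverse inequality $\mc R \geq \mc R^*$ (since EB-inducing channels sit inside PPT-inducing ones), we obtain equality throughout. The principal obstacle I anticipate is the PPT check in the upper-bound step, which essentially amounts to showing $\max\operatorname{eig}((\Phi^+_{\mc N})^{T_B}) \leq \lambda$ for Choi states with qubit output---plausible and easy to confirm on examples, but not manifest from first principles. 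Should the direct spectral approach stall, an SDP-duality route through Stormer's decomposability theorem for positive maps $\mathrm{M}_{d_A} \to \mathrm{M}_2$ offers a fallback, reducing the problem to verifying that the dual optimum is realized by the rank-one witness $W = d_A |v\rangle\langle v|$ with $|v\rangle$ the top eigenvector of $\Phi^+_{\mc N}$.
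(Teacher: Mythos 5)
Your proof is correct and follows essentially the same route as the paper's: your candidates $\sigma$ and $\sigma'$ coincide exactly with the paper's $\mc M_-$ and $\mc M_+$ (a pseudomixture built from the completely depolarising channel), and your matching lower bound via the maximal eigenvalue of EB/PPT Choi states is the paper's reduction-criterion argument. The one step you flag as delicate is in fact immediate and is precisely the fact the paper cites: for a qubit output system, $(\tr\rho)I-\rho = Y\rho^{T}Y$, so $\sigma^{T_B}\ge 0$ is equivalent to the reduction criterion $\sigma \le \sigma_A\otimes I_B$, i.e.\ (given the Choi marginal $\sigma_A=I_A/d_A$) to $\max\operatorname{eig}\sigma\le 1/d_A$, which you have already verified for both $\sigma$ and $\sigma'$ --- so no appeal to St{\o}rmer's theorem is needed.
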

\begin{proof}
We will employ the reduction criterion for separability~\cite{horodecki_1999,cerf_1999-1}, which states that a bipartite state $\rho_{AB}$ with $d_A \leq 3$ and $d_B = 2$ is separable if and only if
\begin{equation}\begin{aligned}
  \rho_{AB} \leq \tr_B(\rho_{AB}) \otimes I_B.
\end{aligned}\end{equation}
Notice that when $\rho_{AB} = \Phi^+_{\mc N}$ is the Choi state of a channel ${\mc N} : A \to B$, this reduces to
\begin{equation}\begin{aligned}
  \Phi^+_{\mc N} \leq \frac{1}{d_A} I_{AB},
\end{aligned}\end{equation}
that is, $\max \operatorname{eig} \Phi^+_{\mc N} \leq \frac{1}{d_A}$. The SDPs for the robustness measures then become
\begin{equation}\begin{aligned}
  {\mc R}_G({\mc N}) + 1 &= \min \left\{ \lambda \;\left|\; \Phi^+_{\mc N} \leq \lambda \Phi^+_{\mc M},\; \Phi^+_{\mc M} \leq \frac{1}{d_A} I,\; \Phi^+_{\mc M} \geq 0,\; \tr_B \Phi^+_{\mc M} = \frac{1}{d_A} I \right.\right\},\\
  {\mc R}({\mc N}) + 1 &= \min \left\{ \lambda \;\left|\; \Phi^+_{\mc N}  = \lambda \Phi^+_{{\mc M}_+} - (\lambda - 1) \Phi^+_{{\mc M}_-},\; \Phi^+_{{\mc M}_\pm} \leq \frac{1}{d_A} I,\; \Phi^+_{{\mc M}_\pm} \geq 0,\; \tr_B \Phi^+_{{\mc M}_\pm} = \frac{1}{d_A} I \right.\right\}.
\end{aligned}\end{equation}
The inequality ${\mc R}({\mc N}) \geq {\mc R}_G({\mc N}) \geq d_A \max \operatorname{eig} \Phi^+_{\mc N}$ is obvious as $\Phi^+_{\mc N} \leq \lambda \Phi^+_{\mc M} \leq \frac{\lambda}{d_A} I$ for any feasible $\lambda$ and $\Phi^+_{\mc M}$. On the other hand, let $\mu$ denote the largest eigenvalue of $\Phi^+_{\mc N}$. If $\mu \leq \frac{1}{d_A}$, then $\Phi^+_{\mc N}$ itself is a feasible solution and the equality ${\mc R}({\mc N}) = {\mc R}_G({\mc N}) = 0$ is trivial, so assume that $\mu \in (\frac{1}{d_A}, 1]$. Denoting by $\Delta$ the completely depolarising channel with $\Phi^+_\Delta = I / d_A d_B$, we define the channels ${\mc M}_\pm$ as the convex combinations
 \begin{equation}\begin{aligned}
 {\mc M}_- &:= \frac{1}{\mu d_A d_B - 1} \left( \mu d_A d_B \Delta - {\mc N} \right),\\
{\mc M}_+ &:= \frac{1}{\mu d_A}\left[ {\mc N} + (\mu d_A - 1) {\mc M}_- \right].
\end{aligned}\end{equation}
Noting that $\mu d_A \geq 1$ and thus $\mu d_A d_B - 1 \geq 2 \mu d_A - 1 \geq \mu d_A$, it follows that $\Phi^+_{{\mc M}_\pm} \leq \frac{1}{d_A} I$, which means that the above constitutes a valid feasible solution for the robustness ${\mc R}({\mc N})$ with $\lambda = \mu d_A$. We conclude that ${\mc R}_G({\mc N}) \leq {\mc R}({\mc N}) \leq \mu d_A = d_A \max \operatorname{eig} \Phi^+_{\mc N}$, and so the quantities must all be equal.

The second part of the Proposition follows since the action of the positive map $\rho \to (\tr \rho) I - \rho$, on which the reduction criterion is based, is unitarily equivalent to the transpose map when acting on a 2-dimensional system~\cite{horodecki_1999,cerf_1999-1}. This means that the reduction criterion and the PPT criterion are equivalent when $d_B = 2$.
\end{proof}

Altogether, the results establish Theorem 4 in the main text.

As an immediate consequence of the results, we notice that the multiplicativity of the lower bound coupled with the sub-multiplicativity of $R_G$ (see Sec.~\ref{sec:properties}) yields exact multiplicativity for $R_G$ or, equivalently, additivity for $D_{\max}$ in the case of low-dimension channels.

\begin{corollary}
For any channel $\mc N : A \to B$ with $d_B = 2$, we have
\begin{equation}
    \mc R^*_G(\mc N^{\otimes n}) + 1 = \left(\mc R^*_G(\mc N) + 1\right)^n =  (d_A  \max \operatorname{eig} \Phi^+_{\mc N})^n.
\end{equation}
In particular, when $d_A \leq 3$, it holds that
\begin{equation}
    \mc R_G(\mc N^{\otimes n}) + 1 = \left(\mc R_G(\mc N) + 1\right)^n =  (d_A  \max \operatorname{eig} \Phi^+_{\mc N})^n,
\end{equation}
or equivalently $D_{\max}(\mc N^{\otimes n}) = n D_{\max}(\mc N)$.
\end{corollary}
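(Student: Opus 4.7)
The plan is a two-sided sandwich: the upper bound will come from the sub-multiplicativity of $\mc R^*_G$ (shown in Sec.~\ref{sec:properties} for $\mc R_G$ and directly portable), while the lower bound will be extracted from the reduction-criterion machinery inside the proof of Theorem~\ref{thm:rob_lower_bound_equal}. Both sides will meet at $(d_A \max\operatorname{eig}\Phi^+_{\mc N})^n$, so the base-case identification $(\mc R^*_G(\mc N)+1)^n = (d_A \max\operatorname{eig}\Phi^+_{\mc N})^n$ from Theorem~\ref{thm:rob_lower_bound_equal} closes the chain.

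For the upper bound, I would carry the sub-multiplicativity argument of Sec.~\ref{sec:properties} over to $\mc R^*_G$ verbatim: if each $\mc N_i = (s_i+1)\mc M'_i - s_i \mc M_i$ has $\Phi^+_{\mc M'_i}$ PPT, then $\Phi^+_{\mc M'_1}\otimes\Phi^+_{\mc M'_2}$ remains PPT across the bipartition $A_1 A_2\,|\,B_1 B_2$ because the partial transpose factorises, while the residual negative piece remains a valid CPTP channel just as in the original argument. Iterating gives $\mc R^*_G(\mc N^{\otimes n})+1 \leq (\mc R^*_G(\mc N)+1)^n$, which by Theorem~\ref{thm:rob_lower_bound_equal} in the $d_B=2$ base case equals $(d_A \max\operatorname{eig}\Phi^+_{\mc N})^n$.

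For the lower bound, I would appeal to the classical fact that PPT implies the reduction criterion~\cite{horodecki_1999,cerf_1999-1}. Consequently every PPT Choi state $\Phi^+_{\mc M'}$ obeys $\Phi^+_{\mc M'} \leq I/d_A$, so the PPT free set is contained in the reduction-criterion set, yielding $\mc R^*_G(\mc C)\geq \mc R^{\mathrm{red}}_G(\mc C)$ for every channel $\mc C$, where $\mc R^{\mathrm{red}}_G$ denotes the robustness relative to the reduction relaxation. Inspecting the proof of Theorem~\ref{thm:rob_lower_bound_equal}, the explicit construction of ${\mc M}_\pm$ in fact computes $\mc R^{\mathrm{red}}_G(\mc C) = \max\{0,\,d_C\max\operatorname{eig}\Phi^+_{\mc C}-1\}$ for any channel, with no restriction on output dimension. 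Applying this to $\mc C = \mc N^{\otimes n}$ (input dimension $d_A^n$) together with the elementary product identity $\max\operatorname{eig}\Phi^+_{\mc N^{\otimes n}} = (\max\operatorname{eig}\Phi^+_{\mc N})^n$ --- since $\Phi^+_{\mc N^{\otimes n}} = (\Phi^+_{\mc N})^{\otimes n}$ up to system reordering --- yields $\mc R^*_G(\mc N^{\otimes n})+1 \geq (d_A\max\operatorname{eig}\Phi^+_{\mc N})^n$, matching the upper bound.

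The $d_A\leq 3$ specialisation then follows for free: Theorem~\ref{thm:rob_lower_bound_equal} gives $\mc R_G(\mc N)=\mc R^*_G(\mc N)$ there (since $d_A d_B\leq 6$ forces PPT $=$ separable at the Choi-state level for the single factor), and combining the trivial $\mc R_G(\mc N^{\otimes n})\geq \mc R^*_G(\mc N^{\otimes n})$ with sub-multiplicativity of $\mc R_G$ sandwiches $\mc R_G(\mc N^{\otimes n})$ between the same two values; taking $\log_2$ then produces the additivity of $D_{\max}$. The step I expect to require the most care is the inclusion PPT $\subseteq$ reduction: without it, the explicit $\mc R^{\mathrm{red}}_G$ formula cannot be transferred to a lower bound on $\mc R^*_G$ at tensor power $n\geq 2$, where $d_B^n>2$ means Theorem~\ref{thm:rob_lower_bound_equal}'s clean equality no longer applies directly to the tensored channel.
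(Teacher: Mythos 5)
Your proposal is correct and follows essentially the same route the paper sketches: the lower bound comes from the (multiplicative) eigenvalue bound — obtained for the PPT version via the standard fact that PPT Choi states satisfy the reduction criterion $\Phi^+_{\mc M}\leq I/d_A$, the same relaxation used inside the proof of the theorem — while the upper bound comes from sub-multiplicativity of the (PPT) generalised robustness, with the single-copy equality of the theorem closing the sandwich. Your filling-in of the details (tensor products of PPT Choi states remain PPT, $\max\operatorname{eig}$ multiplies under tensor powers, and $\mc R_G\geq\mc R^*_G$ for the $d_A\leq 3$ case) is exactly what the paper's one-line argument implicitly relies on.
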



\section{Experiment demonstrations on the IBM cloud}

\subsection{Robustness of three example type of channels}

Here we consider the robustness of the dephasing channels $\Delta_p(\rho)=p\rho+(1-p)Z\rho Z$, stochastic damping channels $\mathcal D_p(\rho) = p\rho+(1-p)|{0}\rangle\langle{0}|$, and erasure channels $\mathcal E_p(\rho) = p\rho+(1-p)|{2}\rangle\langle{2}|$ with $\ket{2}$ orthogonal to the computational basis $\{\ket{0},\ket{1}\}$. 
We implement the dephasing channel, erasure channel, and the stochastic damping channel in Fig.~\ref{Fig:dephasing}, \ref{Fig:erasure}, and \ref{Fig:damping}, respectively. 
For these three different channels, we choose four different noise levels as $p=1,3/4,1/2,1/4$ and implement it on the IBM cloud. 
The game for the dephasing channel is the same one for the depolarising channel defined in Eq.~\eqref{Eq:innputsapp} and Eq.~\eqref{Eq:coedeplor}. The game for the erasure channel is defined in Eq.~\eqref{Eq:innputsera} and Eq.~\eqref{Eq:coeferas}. We use two qubits to encode $\ket{0}$, $\ket{1}$ and $\ket{2}$, via $\ket{0}\otimes\ket{0}\rightarrow \ket{0}$,
$\ket{0}\otimes\ket{1}\rightarrow \ket{1}$, $\ket{1}\otimes I\rightarrow \ket{2}$.
The game for the stochastic damping channel is defined in Eq.~\eqref{Eq:inputsampli} and Eq.~\eqref{Eq:coefampl}. The first qubit is used to choose the original state or $\ket{0}$.
We use the second qubit to present the input state and use the third one to replace the original state with $\ket{0}$. 
We simulate this channel via collecting the post-processing statistics. When the outcome of first qubit is $\ket{0}$, we only care about the outcomes of second qubit. Otherwise, we focus on the outcomes of third one. The experiment results of the three channels are shown in Table~\ref{Table:resultsDep}, \ref{Table:resultsEra}, and \ref{Table:resultsDam}.



\begin{figure}[h]
\centering
\begin{align*}
\Qcircuit @C=1em @R=.7em {
&\push{\ket{0}_E}&\gate{R^Y_{\theta}}&\ctrl{1}&\qw\\
\lstick{\rho_A}&\qw&\qw&\gate{Z}&\qw
\gategroup{1}{2}{2}{4}{.7em}{--}
}
\end{align*}
\caption{Implementation of the dephasing channnel. We input an ancillary state $\ket{0}_E$, rotate it with $R^Y_{\theta}=\exp(-i\theta Y/2)$, and apply a controlled-Z. Here $\theta=2\arccos(\sqrt{p})$ and $Y$ is the Pauli-Y matrix.} \label{Fig:dephasing}
\end{figure}
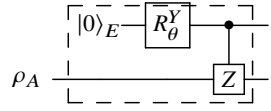


\begin{figure}[h]
\begin{align}
\Qcircuit @C=0.5em @R=0.8em {
\lstick{\ket{0}_E}& \gate{R^Y_{\theta}}  & \ctrl{1} & \qw  & \meter\\
\lstick{\ket{0}} &\qw   &    \targ  & \gate{X}  & \meter \\
\lstick{\ket{0}} &\gate{U_{in}} &\qw & \gate{M}   & \meter \\
}
\end{align}
\caption{Implementation of the erasure channel. We input an ancillary state $\ket{0}_E$, rotate it with $R^Y_{\theta}=\exp(-i\theta Y/2)$, and apply a controlled-X. Here $\theta=2\arccos(\sqrt{p})$ and $X$, $Y$ are the Pauli-X, Pauli-Y matrices, respectively. $U_{in}$ prepares the input states and $M$ changes the measurement basis. }
\label{Fig:erasure}
\end{figure}
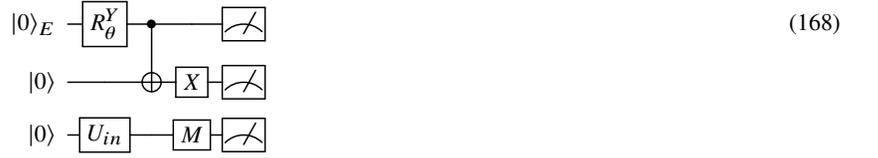

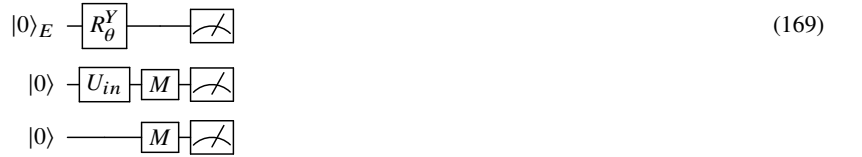
\begin{figure}[h]
\begin{align}
\Qcircuit @C=0.5em @R=0.8em {
\lstick{\ket{0}_E}& \gate{R^Y_{\theta}}   & \qw  & \meter\\
\lstick{\ket{0}} &\gate{U_{in}}   & \gate{M}  & \meter \\
\lstick{\ket{0}} &\qw & \gate{M}   & \meter \\
}
\end{align}
\caption{Implementation of the stochastic damping channel. We input an ancillary state $\ket{0}_E$, rotate it with $R^Y_{\theta}=\exp(-i\theta Y/2)$. Here $\theta=2\arccos(\sqrt{p})$. $U_{in}$ prepares the input states and $M$ changes the measurement basis. }
\label{Fig:damping}
\end{figure}

\begin{table}[h]
\centering
\centering
\begin{tabular}{c|ccc|ccc|ccc|ccc}
  \hline
 \multirow{2}*{Setting} & \multicolumn{3}{c}{p=1/4} & \multicolumn{3}{c}{p=1/2} & \multicolumn{3}{c}{p=3/4}& \multicolumn{3}{c}{p=1}\\[1mm]
 ~&IBMQ &QASM & Theory & IBMQ &QASM & Theory &IBMQ &QASM & Theory &IBMQ &QASM & Theory \\[1mm]
  \hline
  $\sigma_1$, $O_1$ & 0.2937& 0.2495 & 0.2500 & 0.5137& 0.5088& 0.5000 &0.7682&0.7471&0.7500& 0.9595&1.0000&1.0000 \\[1mm]
     $\sigma_2$, $O_2$ & 0.7261 & 0.7436 & 0.7500 ~&0.5009 &0.4993&0.5000&0.2668&0.2533&0.2500&0.0532&0.0000&0.0000\\[1mm]
   $\sigma_3$, $O_3$ & 0.9647 & 1.0000 & 1.0000 ~&0.9735 &1.0000& 1.0000&0.9720&1.0000&1.0000&0.9736&1.0000&1.0000 \\[1mm]
    $\ket{0}$, $O_4$ &0.7904& 0.7778 &0.7887 ~&0.7723 &0.7905&0.7887&0.7468&0.7834&0.7877&0.7222&0.7800&0.7887\\[1mm]
    $\ket{1}$, $O_4$ & 0.2008 & 0.2111 &0.2113 ~&0.2599 &0.2148&0.2113&0.3126&0.2139&0.2113&0.3380&0.2089&0.2113 \\[1mm]
 \hline
 Score& 0.5400& 0.5138&0.5000&0.9584&1.0049 &1.0000 &1.4219 &1.4961 &1.5000 &1.8278 &1.9924 &2.0000 \\[1mm]
  \hline
\end{tabular}
\caption{Dephasing channel}
\label{Table:resultsDep}
\end{table}

\begin{table}[h]
\centering
\centering
\begin{tabular}{c|ccc|ccc|ccc|ccc}
  \hline
 \multirow{2}*{Setting} & \multicolumn{3}{c}{p=1/4} & \multicolumn{3}{c}{p=1/2} & \multicolumn{3}{c}{p=3/4}& \multicolumn{3}{c}{p=1}\\[1mm]
 ~&IBMQ &QASM & Theory & IBMQ &QASM & Theory &IBMQ &QASM & Theory &IBMQ &QASM & Theory \\[1mm]
  \hline
  $\sigma_1$, $O_1$ & 0.2671& 0.2499 &0.2500  &0.4946&0.5046&0.5000&0.7301&0.7510&0.7500&0.9452&1.0000&1.0000 \\[1mm]
     $\sigma_2$, $O_2$ & 0.0073 & 0.0000 &0.0000 ~&0.0190 &0.0000&0.0000&0.0226&0.0000&0.0000&0.0374&0.0000&0.0000\\[1mm]
   $\sigma_3$, $O_3$ &0.2732 &0.2523&0.2500&0.4978 &0.5066&0.5000&0.7366&0.7480&0.7500&0.9403&1.0000&1.0000 \\[1mm]
    $\ket{0}$, $O_4$ & 0.2153& 0.2011 &0.1972 ~&0.3979 &0.3934&0.3943&0.5721&0.5900&0.5915&0.7394&0.7915&0.7887\\[1mm]
    $\ket{1}$, $O_4$ & 0.0641 & 0.0521 &0.0528 ~&0.1216 &0.1055&0.1057&0.1780&0.1527&0.1585&0.2205&0.20789&0.2113 \\[1mm]
 $\ket{0}$, $\ket{2}$ & 0.6655&0.7448 &0.7500 ~&0.4419&0.5043&0.5000&0.2181&0.2491&0.2500&0.0118&0.00000&0.0000\\[1mm]
    $\ket{1}$, $\ket{2}$ & 0.6681 & 0.7490 &0.7500 ~&0.4398 &0.4987&0.5000&0.2134&0.2494&0.2500&0.0116&0.0000&0.0000 \\[1mm]
 \hline
 Score& 1.2463 &1.2462 &1.2500 &1.4486&1.5136&1.5000&1.6894&1.7546 &1.7500&1.8845&2.0005&2.0000\\[1mm]
  \hline
\end{tabular}
\caption{Erasure channel}
\label{Table:resultsEra}
\end{table}

\begin{table}[h]
\centering
\centering
\begin{tabular}{c|ccc|ccc|ccc|ccc}
  \hline
 \multirow{2}*{Setting} & \multicolumn{3}{c}{p=1/4} & \multicolumn{3}{c}{p=1/2} & \multicolumn{3}{c}{p=3/4}& \multicolumn{3}{c}{p=1}\\[1mm]
 ~&IBMQ &QASM & Theory & IBMQ &QASM & Theory &IBMQ &QASM & Theory &IBMQ &QASM & Theory \\[1mm]
  \hline
  $\sigma_1$, $O_1$ & 0.6351& 0.6265 &0.6250  &0.7439&0.7491&0.7500&0.8623&0.8746&0.8750&0.9694&1.0000&1.0000 \\[1mm]
     $\sigma_2$, $O_2$ & 0.3811& 0.3857 &0.3750 ~&0.2844 &0.2471&0.2500&0.1615&0.1229&0.1250&0.0474&0.0000&0.0000\\[1mm]
   $\sigma_3$, $O_3$ &0.9866 &  1.0000 &1.0000 ~&0.9793 &1.0000&1.0000&0.9796&1.0000&1.0000&0.9742&1.0000&1.0000 \\[1mm]
    $\sigma_4$, $O_4$ &0.3135 & 0.2438 &0.2500 ~&0.5387 &0.5009&0.5000&0.7563&0.7533&0.7500&0.9668&1.0000&1.0000 \\[1mm]
    $\ket{0}$, $O_5$ & 0.4740&0.4988 &0.5000 ~&0.4639 &0.5043&0.5000&0.4375&0.4952&0.5000&0.4264& 0.4917  &0.5000\\[1mm]
    $\ket{1}$, $O_5$ & 0.5287 & 0.5026 &0.5000
    ~&0.5548 &0.5093&0.5000&0.5781&0.4963&0.5000&0.6006&  0.4998 &0.5000 \\[1mm]
 \hline
 Score&1.0699&1.0695&1.0757&1.2570&1.3025	&1.309& 1.5667&	1.6443&1.6353 &1.8734	& 2.0060 &2.0000\\[1mm]
  \hline
\end{tabular}
\caption{Stochastic damping channel}
\label{Table:resultsDam}
\end{table}

\subsection{Robustness of two qubit and three qubit gates}
We also measured the robustness of two qubit and three qubit gates on the IBM quantum cloud. Denote the two qubit controlled-X (CX) gate as $CX^0_1$ with 0 denoting the controlled qubit and 1 denoting the target qubit. We consider a sequence of CX gates with interchanged control and target qubits for two adjacent gates as
\begin{equation}
\begin{aligned}
    U_1 &= CX^0_1,\\
    U_2 &= CX^1_0CX^0_1,\\
    U_3 &= CX^0_1CX^1_0CX^0_1,\\
    U_4 &= CX^1_0CX^0_1CX^1_0CX^0_1,\\
    U_5 &= CX^0_1CX^1_0CX^0_1CX^1_0CX^0_1,\\
    U_6 &= 
    CX^1_0CX^0_1CX^1_0CX^0_1CX^1_0CX^0_1.
\end{aligned}
\end{equation}
Note that the gate $U_3$ is the swap gate and $U_6$ is the identity gate. We also considered the three qubit gate that prepares the GHZ state
\begin{equation}
    U_{GHZ}={CX}^0_2\cdot {CX}^0_1.
\end{equation}
For each gate $U$, the corresponding Choi state is $\ket{\Phi^+_{U}}=U\ket{\Phi^+}$ with $\ket{\Phi^+}=1/\sqrt{d}\sum_i \ket{ii}$ being the maximally entangled state and $d$ being the dimension. We can define a witness as $W=d{\Phi^+_{U}}$ to lower bound the robustness. Such a witness corresponds to the case where we assume  the noise is deploarising or dephasing. We can accordingly change the witness if we estimate that the gate noise is other types such as erasure or stochastic damping. According to Lemma~\ref{lemma4}, the robustness of a noisy gate $\tilde{\mc U}$ is lower bounded as
\begin{equation}
    \mc R(\tilde{\mc U}) \ge \tr[W \Phi^+_{\tilde{\mc U}}] - 1 = d\tr[\Phi^+_{ U}\Phi^+_{\tilde{\mc U}}] - 1. 
\end{equation}
Note that $\tr[\Phi^+_{ U}\Phi^+_{\tilde{\mc U}}]$ corresponds to the gate fidelity between the noisy gate $\tilde{\mc U}$ and the target gate ${\mc U}$.
To measure $\tr[\Phi^+_{ U}\Phi^+_{\tilde{\mc U}}]$, we decompose $\Phi^+_{ U}$ into a linear sum of local Pauli operators.
Since we are considering small gates, the decomposition only has a small number terms so that they are directly measured in the experiment. 

When considering large quantum gates, we can make use of the technique introduced in Ref.~\cite{PhysRevLett.106.230501} to efficiently measure the gate fidelity quantity. This requires to measure expectation values of a constant number of local Pauli operators, which are selected at random according to a weighting determined by $\Phi^+_{ U}$. 
We leave the implementation of random measurement of large quantum gates for future works. 

\end{document}